\newcommand{\ubar}[1]{\underaccent{\bar}{#1}}
\newtheorem{theorem}{Theorem}[section]
\newtheorem{proposition}[theorem]{Proposition}
\newtheorem{lemma}[theorem]{Lemma}
\newtheorem{corollary}[theorem]{Corollary}
\newtheorem{claim}[theorem]{Claim}
\theoremstyle{definition}
\newtheorem{remark}[theorem]{Remark}
\providecommand{\keywords}[1]{\textbf{\textit{Keywords:}} #1}
\providecommand{\jel}[1]{\textbf{\textit{JEL Classifications:}} #1}
\DeclareMathOperator{\sgn}{sgn}
\begin{document}
\author{Vasudha Jain\thanks{University of Texas at Austin} \and Mark Whitmeyer\thanks{Hausdorff Center for Mathematics \& Institute for Microeconomics, University of Bonn \newline Email: \href{mailto:mark.whitmeyer@gmail.com}{mark.whitmeyer@gmail.com}. \newline We thank Francesc Dilm\'e, Joseph Whitmeyer, Thomas Wiseman, and a seminar audience at the University of Bonn for helpful comments and feedback. The second author's work was funded by the DFG under Germany's Excellence Strategy-GZ 2047/1, Projekt-ID 390685813.} }

\title{Search and Competition with Flexible Investigations}

\date{\today}

\maketitle
%We characterize a consumer's optimal search and information acquisition protocol and solve for the unique symmetric equilibrium of the pricing game between firms.

\begin{abstract} 
We modify the standard model of price competition with horizontally differentiated products, imperfect information, and search frictions by allowing consumers to flexibly acquire information about a product's match value during their visits. We characterize a consumer's optimal search and information acquisition protocol and analyze the pricing game between firms. Notably, we establish that in search markets there are fundamental differences between search frictions and information frictions, which affect market prices, profits, and consumer welfare in markedly different ways. Although higher search costs beget higher prices (and profits for firms), higher information acquisition costs lead to \textit{lower} prices and may benefit consumers. We discuss implications of our findings for policies concerning disclosure rules and hidden fees. %how how her dynamic information acquisition problem can be reduced to a simple static problem. When search and information frictions are exogenous, we
\end{abstract}
\keywords{Sequential Search, Optimal Stopping, Information Acquisition, Information Frictions, Rational Inattention, Hold-up Problem}\\
\jel{C72; D82; D83} %\newpage \setcounter{tocdepth}{2} \tableofcontents{}
\newpage
\section{Introduction}

There is a large and rich literature on sequential consumer search. Many papers in this area share a similar common structure: there are collections of possibly heterogeneous firms and consumers, the latter of which wish to purchase products from the firms. These consumers (or at least some subset of them) visit firms in sequence in order to discover important information about the products including, e.g., their qualities, if the products are vertically differentiated; match values, if there is horizontal differentiation; and/or prices. The essential decisions made by consumers are whom to visit (if search is directed), when to stop, and whether to purchase upon stopping (and from whom).

Despite the considerable control they possess regarding their search behavior, consumers in these models are surprising passive in a different sense. Namely, when a consumer visits a firm she is unable to affect the information her visit reveals--there is an equivalence between visiting a firm and acquiring information about its product(s). She observes (at least some of) the pertinent information (prices, qualities, and/or match values) or the draw of some signal correlated with these values, but may not control this information \textit{directly}.

In many settings this is unrealistic. Consider, for instance, a prospective homeowner looking for a house. She visits homes on the market in sequence; but rather than absorb information about a house effortlessly upon visiting it, she must acquire this information actively. Not only does the consumer face a sequential search problem as she moves from house to house, but within each visit, she has a dynamic information acquisition problem. She chooses which rooms to visit and how long to linger in each; what questions to ask the real-estate agent; how thoroughly to inspect the plumbing, lighting, or paint condition; and how many dark corners to scrutinize for mold.\footnote{In fact there are a number of guides online that advise consumers how to behave optimally during such visits; one urges its readers, ``Once inside a home, try everything. Follow common courtesy but don't be shy.'' (\url{https://www.redfin.com/home-buying-guide/what-to-look-for})}

This phenomenon is not limited merely to the housing market. Indeed, it is difficult to think of many markets in which such information acquisition does not take place. A prospective car buyer chooses how many features of a car to inspect when at a dealer, a consumer shopping for groceries chooses how much of each product's label to peruse, a teenager looking for a prom dress scrutinizes candidates in the many mirrors of a dress store's changing room, and a music enthusiast looking for new headphones plays various songs to test out different facets of a device's sound quality and feel.

Importantly, this information acquisition process is dynamic. If a prospective homeowner sees that a house's roof is caving in, she stops her tour right away and moves on to the next property. Likewise, a young businesswoman may not test drive a mini-van in garish pink--she has already moved on to the sleek sedan in dark grey. Information is costly and so consumers investigate until they have ``seen enough,'' at which point they decide whether to stop or move on to the next firm or product.

Our goal in this paper is to explore the effect of this flexible information acquisition on consumer and firm behavior in an otherwise standard sequential search market. There are a large number of consumers with unit demand for a horizontally differentiated product and a large mass of \textit{ex ante} indistinguishable firms that sell the product. It costs a consumer $c > 0$ to visit a firm; and when at a firm she acquires information about the product's idiosyncratic fit by sampling various ``attributes''--which inform her valuation--in sequence, at a cost of $\gamma > 0$ per attribute.

The firms compete by setting prices, which--following much of the literature--can only be discovered by the consumer upon visiting a firm. In our main analysis we assume that when visiting a firm, a consumer observes the firm's price before acquiring information (before she starts sampling attributes at the firm). In many circumstances this is realistic: most retailers of consumer goods post prices that are non-negotiable and which consumers note before inspecting the items. Due to the symmetry of the model we focus on symmetric equilibria.

Within this setting, we tackle a number of questions. First, it is a standard result that in search models with hidden prices, an increase in the search cost leads to an increase in prices. Does this hold when there are information frictions as well? What about the effect of information frictions on prices? That is, do greater information frictions lead to higher prices? Do laws mandating maximal disclosure always benefit consumers? Do they always hurt firms?

%What about the choices of information frictions by firms themselves--do they prefer to make it harder or easier for consumers to acquire information? 

Importantly, a consumer's optimal information acquisition protocol is shaped by the prices set by firms. Consequently, this gives her considerable power to react to a price deviation by a firm--even though she can not observe a price change before her visit, such a deviation will affect her optimal information acquisition protocol at that firm. This is realistic--consumers, upon seeing an unexpectedly high price, may just decide that the product is too expensive to consider at all.

We find that (explicit) search costs continue to have the same effect as in models without flexible learning. That is, prices and firm profits increase in (explicit) search costs, whereas consumer welfare decreases. On the other hand, the effect of information frictions is unexpectedly ambiguous. Prices decrease in the level of information frictions, and so consumer welfare may rise as a result. In fact, for certain regions of the parameter universe, an increase in information frictions leads to a Pareto increase in welfare. The expected duration of search is also non-monotone in the amount of information frictions. 

One notable implication of our results is that a firm's market power is generated entirely by the search costs (think switching costs) and not by the information (learning) costs. To elaborate, it is only the existence of this search cost that prevents a consumer from departing the firm at the first sign of bad news. The information frictions, in turn, generate a downward sloping demand curve for firms, which becomes steeper (\textit{ceteris paribus}) as information frictions increase. Thus, an increase in information frictions results in a decrease in the equilibrium price, and this decrease may be so great as to outweigh a consumer's direct welfare loss as a result of increased frictions. Firms are unaffected by the change in information frictions, since their profits rely on the market power granted to them by the search cost.

%When firms may choose the level of information frictions, we find that all symmetric equilibria are those in which firms choose either the maximal or minimal level of frictions. 

Later on we explore a natural modification of the model: we stipulate that a consumer does not observe a firm's price until \textit{after} she has acquired information. This leads to an ``informational hold-up problem,'' and in the unique equilibrium, firms mix over prices. Thus, our model generates price dispersion even when prices are not posted. Surprisingly, the qualitative results from the main setting continue to hold: explicit search costs benefit firms at the expense of consumers but information frictions may be Pareto improving.

We also compare the results from the two timing specifications. Consumers strictly prefer that prices be observable before learning. Firms, on the other hand, may prefer that prices be hidden, but only if the expected quality of the product and the level of information frictions are sufficiently high. Otherwise, firms prefer that prices be observable as well. The comparisons in this section allow us to speak to the recent debate about hidden fees, drip pricing, and other varieties of hidden prices. Our results suggest that laws that mandate transparent prices do indeed benefit consumers, since they allow them to learn efficiently. Moreover, firms may benefit from such laws as well.

We finish this section by discussing the relevant papers in the literature. Section \ref{model} lays out the model and establishes some preliminary results. Section \ref{search} characterizes a consumer's optimal search and information acquisition protocol, Section \ref{monomono} briefly describes a monopolist's pricing behavior, and Section \ref{infinite} explores the equilibrium in the pricing game between the firms. In Section \ref{hidden} we modify the game so that prices are now unobservable to the consumer before she acquires information and we look at the pricing game between the firms. Section \ref{fees} compares welfare in the two timing regimes and Section \ref{discus} concludes. All proofs are left to the appendix,\footnote{Due to the algebra-heavy nature of some of the comparative statics derivations, we also have a \href{https://whitmeyerhome.files.wordpress.com/2021/04/search_and_competition_supplementary_appendix.pdf}{Supplementary Appendix} that provides Mathematica code to corroborate our calculations.} unless otherwise noted. 

\subsection{Related Work}

The paper closest to this one is \cite*{branco12},\footnote{We discuss their model in detail in the next section, when we describe our setup. Henceforth, we refer to them as \hypertarget{BSV}{BSV}.} who look at the monopolist's problem when consumers flexibly acquire information after observing the monopolist's price. As in our search setting, they find that information frictions may Pareto-improve welfare. The portion of our paper in which prices are not observable before consumers acquire information is closely related to \cite*{ravid2020learning}, who explore a bilateral-trade setting in which a consumer acquires costly information about a monopolist's product before observing the seller's price. Their focus is on the limiting case as information frictions vanish, and they establish that in the limit, the equilibrium converges to the Pareto-worst free learning equilibrium.\footnote{A consumer's learning problem in our setting is not quite the same as that in \cite*{ravid2020learning}. In their model the consumer faces a static problem: she chooses a signal about the value of a product, about which she has some prior distribution. As shown by \cite*{wald}, when the prior is binary there is an equivalence between the static information acquisition problem, in which a decision-maker chooses a distribution over posteriors subject to a posterior-separable cost, and a dynamic Wald stopping problem. Beyond the binary-prior setting, this equivalence vanishes. Moreover, our problem is not a Wald problem--a consumer directly observes various components that affect her expected value for the product.} %Consequently, in the limit, as frictions vanish, a consumer's learning problem is ill-defined: the boundary valuations at which she decides to stop go to $\pm \infty$.}

Other papers that look at a monopolist selling to consumers who may acquire information flexibly include \cite*{branco2016too}, \cite*{pease2018shopping}, and \cite*{lang2019try}. In the latter two papers, a consumer observes a monopolist's posted price before observing a diffusion process correlated with the product's quality at a flow cost. \cite*{lang2019try} assumes that the consumer's prior valuation for the product is distributed normally and finds that in regions of the parameters with active information acquisition, increased information frictions improve the monopolist's profit if and only if the \textit{ex ante} expected value for the product is sufficiently high. \cite*{pease2018shopping} assumes a binary value for the product, and also finds that the level of information frictions has an ambiguous effect on firm profit. \cite*{branco2016too} modify the setting of \hyperlink{BSV}{BSV} by allowing the seller to choose how much information to provide about the product. They find that the seller may want to induce information overload and provide more information than is optimal for the consumer.

\cite*{ke2016search} alter \hyperlink{BSV}{BSV} by looking at the information acquisition problem of a consumer inspecting the attributes of multiple products in continuous time. At an instant, the consumer may shift her attention from one product to another and acquires information at each in the same fashion as \hyperlink{BSV}{BSV} (and our paper)--by observing Brownian motion subject to a constant flow cost. Given the difficulty of the consumer's problem (it is a PDE problem with ambiguous boundary conditions), the focus of their paper is on the optimal information acquisition protocol of a consumer when there are just two products. They also provide numerical analysis of the pricing problem faced by a multi-product monopolist.%They find that the consumer is myopic and always acquires information from the firm whose current value is highest.

To our knowledge, the only other work to explore the dual effects of information and search frictions in a sequential search model is \cite{guo2021endogenous}. In his model, a consumer searches sequentially and when at a firm draws her value from a family of rotation ordered distributions. The primary focus of his paper is the case in which a consumer chooses from a collection of uniform distributions. There, he finds that the explicit search cost and the cost of informativeness (in the rotation order) of a consumer's posterior value distribution operate in similar ways: equilibrium price and profit increase in both types of friction but may drop discontinuously for high frictions (of either variety). The reverse happens for consumer welfare. Our setup is different in that our consumers tackle a sequential information acquisition problem--one that is equivalent to a static problem in which they choose any distribution over posterior values whose mean is the prior, subject to a cost. This allows us to illustrate how search and information frictions operate in fundamentally different ways and have dramatically different effects on equilibrium behavior and outcomes.

When a consumer does not observe a firm's price before acquiring information about its product, she faces an informational hold-up problem. Similar to \cite*{ravid2020learning}, \cite*{Roes} investigate the consumer-optimal signal in a bilateral trade environment, which is subsequently extended by \cite*{yang}, who allows the seller to have a more general production technology. In the same setting, \cite*{Condor} allow the buyer to choose her distribution over values for the product. \cite*{hu}, in turn, explore the buyer-optimal signal in the random sequential search setting of \cite*{asher}. In each paper, some form of truncated Pareto distribution is optimal (which is intuitive, since it is those distributions over values/posteriors that generate unit-elastic demand).

The literature that explores more standard versions of the hold-up problem in bilateral trade is also relevant. \cite*{bar2012information} look at the scenario where a seller can privately invest in the quality along one dimension of a multi-characteristic good, after which consumers may (at a cost) acquire information about one dimension of the product. The cost of information affects the firm's investment via its influence on the consumers' behavior and may lower consumer welfare and the firm's profit. In contrast,  there is no investment in our model and so the cost of information affects firms' behavior only through its influence on their prices.

More recently, \cite*{dilme2019pre} looks at a scenario in which a seller can privately invest in the quality of its product before a buyer makes a take-it-or-leave-it offer. As we find in the hidden prices portion of this paper, and following similar logic, Dilm\'e shows that the seller randomizes over investment levels. He also finds that modifying the information structure of the game--so that the buyer observes the mean of the seller's investment policy--aids the seller, who increases his investment level and his profit at the expense of the buyer. Along these lines, \cite*{lau2008information}, \cite*{hermalin2009information}, and \cite*{nguyen2019information} all look at various facets of information design in the hold-up problem. \cite*{rao2021search} embeds a similar hold-up problem into a labor search model, in which workers invest in human capital. With hidden investment, Rao shows that the equilibrium exhibits both wage and skill dispersion akin to our information and price dispersion. 

Naturally, our paper is also related to the broader collection of works that look at the effects of consumer rational inattention on market behavior. Of particular pertinence is \cite*{matvejka2012simple}, who explore a model in which a consumer must incur a cost to evaluate the offers of firms within a market. Notably, they show that prices are increasing (and, hence, consumer welfare decreasing) in the size of the information frictions. Also related is \cite*{dukes1}, who look at a simultaneous search model, in which a consumer must evaluate products in an oligopolistic market subject to an evaluation cost. \textit{ex ante} she chooses how many firms about which to acquire information and to which depth. Crucially, the dynamic aspect of our problem--in which a consumer's optimal behavior (both regarding her search behavior and her information acquisition protocol) is determined each period--allows a consumer to react to price deviations by firms. They find that an increase in information costs may benefit consumers, though at the expense of total welfare. Other studies involve incentives for firms to ``stay under consumers' radar'' (\cite*{de2014competing}) or ``get on consumers' radar'' (\cite*{bordalo2016competition}), which forces are absent from our paper.

\section{The Model}\label{model}

Our model extends the framework of \hyperlink{BSV}{BSV} to a random search setting. There is a continuum (unit mass) of selling firms and a continuum (unit mass) of consumers. Each firm's product has $T$ different attributes that are \textit{ex ante} unknown. The value of each attribute to a consumer is a random variable. For simplicity, these random variables are i.i.d. and have mean $0$. More specifically, the value of attribute $j$ to a consumer is the random variable $X_{j}$, which takes values $-z$ and $z$ with equal probability, where $z > 0$.

If firm $i$ sets price $p_{i}$ then a consumer's realized utility from product $i$ is $\mu - p_{i} + \sum_{j=1}^{T}X_{j}$, where $\mu \in \mathbb{R}$ is the consumer's prior value for the product. We interpret $\mu$ to be a consumer's baseline value for the products for sale, and assume that it is the same for each firm. Given this, it is the idiosyncratic attributes of a firm's product that set it apart from its competitors.
For example, suppose the consumer is shopping for a bicycle. Given her habits and needs she has some underlying valuation $\mu$ for a generic bike. Then, once she arrives at a bike store, she may test out the various attributes of a bike, each of which shapes her value for that specific bicycle further--she likes its color but not the shape of its seat, she finds that the handlebars are angled too far forward but likes their grip, and so on. Crucially, these various attributes do not affect her valuation for other bikes.

Following \hyperlink{BSV}{BSV}, we assume that the number of attributes, $T$, is infinity. As they note, this has the reasonable interpretation that the set of attributes can never be ``fully searched.'' Naturally, this assumption also helps with the problem's tractability.

We make a distinction between search and information acquisition. Each period a consumer decides whether to stop and purchase from the current firm, select her outside option, or continue her search, in which case she is randomly matched with another firm in the market. When a consumer visits a firm, she acquires information flexibly. As is routine in the literature, a consumer's first period visit is free, but any subsequent search imposes on her some cost $c > 0$ per period. %At any point the consumer may stop her search and purchase from the current firm or select her outside option.

When a consumer visits a firm, she acquires information flexibly. She does this by examining attributes one-by-one, in sequence, at a cost of $\gamma > 0$ per attribute. Examining attribute $j$ reveals the realization of random variable $X_{j}$, and after examining $t$ attributes, a consumer's expected valuation from purchasing firm $i$'s product (so, net of that firm's price) is
\[u_{i}(t) = v_{i} + \sum_{j=1}^{t}x_{j} + \sum_{j=t+1}^{T}\mathbb{E}\left[X_{j}|t\right] =  v_{i} + \sum_{j=1}^{t}x_{j} \text{ ,}\]
where $v_{i} \coloneqq \mu - p_{i}$; and since, by assumption, $\mathbb{E}\left[X_{j}|t\right] = 0$ for all $j, t$.

We continue to follow \hyperlink{BSV}{BSV} by simplifying the information acquisition problem faced by a consumer at firm $i$. We assume that the change in utility from each attribute, $z$, becomes infinitesimal as $T$ grows infinitely large.  This captures the fact that the importance of any one attribute is negligible compared to the value of the product on the whole. In the limit, the $u_{i}(t)$ process becomes a Brownian motion--cropping the argument, $du = \sigma dW$, where $W$ is a standardized Brownian motion--with standard deviation $\sigma$, where $z = \sigma \sqrt{dt}$.\footnote{As noted by \cite*{lang2019try}, there is a subtle technical issue to this approach in that the ``actual value for the product $\lim_{t \to \infty}\int_{0}^{t}W_sds$ is ill-defined.'' However, we argue that this modeling choice is akin to the use of the improper uniform distribution in the global games literature, and which greatly helps us with simplicity at the cost of slight impropriety. Furthermore, for most of this paper, we use a convenient reformulation of the consumer's stopping problem as a static information acquisition problem, and we urge any objectors to the dynamic set-up to view it exclusively as the static problem instead.} When a consumer is at a firm, each instant she decides whether to keep observing attributes or to stop doing so. Once she stops acquiring information at a firm, she either purchases from the current firm, selects her outside option, or continues her search. Throughout, we assume that $c < \sigma^2/\left(4\gamma\right)$. This assumption--that search costs are not too high--is a necessary condition for the existence of equilibria with active search. For simplicity, we normalize the value of the consumer's outside option to $0$.%.

The timing of the game is summarized in Figure \ref{timing} below.

\tikzstyle{block} = [rectangle, draw, 
    text width=15em, text centered, rounded corners]
    
\tikzstyle{line} = [draw, -latex']

\begin{figure}[ht]
\centering
\begin{tikzpicture}[node distance = 8cm, auto, scale=0.6, transform shape]
    % Place nodes
    %\node [block] (init) {initialize model};
    \node [block] (0) {Firms simultaneously (and privately) set prices.};
    \node [block, right of=0] (1) {Consumer is randomly matched with a firm whose price she then observes.};
    \node [block, right of=1] (2) {Consumer decides when to stop costly acquisition of information about that firm's attributes.};
    \node [block,  above right of=2] (3) {Consumer pays $c$ to be randomly matched with another firm, and the same process is repeated.};
    \node [block, below right of=2] (4) {Consumer buys from the current firm or selects her outside option and payoffs are realized.};

    % Draw edges
   % \path [line] (init) -- (identify);
    \path  [line] (0) -- (1) ; 
     \path  [line] (1) -- (2) ; 
    \path [line] (2) --node[above, sloped]{Continues search} ++ (3);
    \path [line] (2) --node[below, sloped]{Stops search} ++ (4);
    
   % \path [line] (decide) -| node [near start] {yes} (update);
   % \path [line] (update) |- (identify);
%    \path [line] (decide) -- node {no}(stop);
 %   \path [line,dashed] (expert) -- (init);
  %  \path [line,dashed] (system) -- (init);
 %   \path [line,dashed] (system) |- (evaluate);
\end{tikzpicture}
\caption{Timing}
\label{timing}
\end{figure}

\subsection{Simplifying the Consumer's Information Acquisition Problem}
Next, we observe that the consumer's information acquisition problem can be reformulated from a dynamic optimal stopping problem to a much simpler static problem. Evidently, at a firm, a consumer's information acquisition strategy is a stopping time $\tau$, and any such $\tau$ generates a distribution over (expected values) $u$, which we denote
$F_{\tau}$. That is,
\[F_{\tau}\left(u\right) \coloneqq \mathbb{P}\left(U_{\tau} \leq u\right) \text{ .}\]
We denote by $\mathcal{M}\left(\mu\right)$ the set of (Borel) probability measures on $\mathbb{R}$ with mean $\mu$ and define the \textit{ex ante} cost of distribution $Q \in \mathcal{M}\left(\mu\right)$ over values to be the minimal cost at which the consumer can generate $Q$:
\[C\left(Q\right) \coloneqq \inf_{\tau: F_{\tau} = Q}\mathbb{E}\left[\gamma\tau\right] \text{ .}\]%where we set $C = \infty$ whenever the set $\left\{\tau: F_{\tau} = Q\right\}$ is empty and there is no information acquisition strategy that results in distribution $Q$. 
Then, the results of, e.g., \cite*{church} imply the following remark:\footnote{This remark is a trivial modification of the main result of \cite*{root} (Theorem 2.1). See also \cite*{georgiadis} for a recent use of this result in an economic setting. This is an offshoot of the Skorokhod embedding problem in mathematics.}
\begin{remark}\label{equiv}
The \textit{ex ante} cost of distribution $Q \in \mathcal{M}\left(\mu\right)$ over values, $C: \mathcal{M}\left(\mu\right) \to \mathbb{R}^{+}$, is
\[C\left(Q\right) = \kappa \int_{-\infty}^{\infty}\left(x-\mu\right)^2dQ\left(x\right), \quad \text{where} \quad \kappa \coloneqq \gamma/\sigma^2 \text{ .}\]
\end{remark}
The precise formula follows via Ito's lemma: for value $u_{\tau}$,
\[c\left(u_{\tau}\right) = c\left(\mu\right) + \int_{0}^{\tau}\sigma c'\left(u_{t}\right)dW_{t} + \frac{1}{2}\int_{0}^{\tau}\sigma^2 c''\left(u_{t}\right)dt \text{ ,}\]
where $c\left(u\right) \coloneqq \kappa \left(u-\mu\right)^2$. Because $\tau$ is a.s. finite, taking expectations, the first term on the right side of the equal sign is $0$, and the middle term, being a martingale, vanishes. 

Accordingly, instead of solving a consumer's dynamic information acquisition problem at each firm, we may instead solve the following static problem. Let $g\left(u\right)$ be a consumer's payoff from stopping her information acquisition at value $u$. She solves
\[\max_{Q \in \mathcal{M}\left(\mu\right)}\left\{\mathbb{E}_{Q}\left[g(u)\right] - C\left(Q\right)\right\} = \max_{Q \in \mathcal{M}\left(\mu\right)}\left\{\int_{-\infty}^{\infty}g(u)dQ(u) - \kappa \int_{-\infty}^{\infty}\left(u-\mu\right)^2dQ\left(u\right)\right\} \text{ .}\]%Note that our assumption about the search cost translates to $c < 1/\left(4\kappa\right)$.
\section{A Consumer's Behavior}\label{search}

Our first step in characterizing a consumer's optimal search and information acquisition protocol is to determine how she optimally acquires information at a seller $i$ charging price $p_{i}$ when she has some outside option of $a \geq 0$ ``in hand.''\footnote{$a$ is the maximum of her maximal continuation value and her explicit outside option (which is given exogenously).} Her payoff from obtaining value $u$ is
\[W\left(u\right) = \begin{cases}
a - \kappa\left(u-\mu\right)^{2}, \quad &u - p_{i} \leq a\\
u - p_{i} - \kappa\left(u-\mu\right)^{2}, \quad &u - p_{i} \geq a\\
\end{cases} \text{ .}\] %where, recall, $v_{i} \coloneqq \mu - p_{i}$. 
Her optimal learning is given by the concavification\footnote{This approach is famously used by \cite*{aumann1995repeated} to characterize the limiting value of repeated games with one-sided incomplete information. \cite{kam} provide a recent promulgation of this technique.} of this function and is pinned down by the two posterior beliefs $u_{L}$ and $u_{H}$. Explicitly, \[\tag{$1$}\label{eq2}u_{L} = a + p_{i} - \frac{1}{4 \kappa}, \quad \text{and} \quad u_{H} = a + p_{i} + \frac{1}{4 \kappa}, \quad \text{where} \quad \mathbb{P}\left(u_{H}\right) = \frac{1}{2} + 2\kappa\left(\mu - p_{i}-a\right)\]
An example of $W$ and its concavification are depicted in Figure \ref{fig1}.
\begin{figure}
    \centering
    \includegraphics[scale=.3]{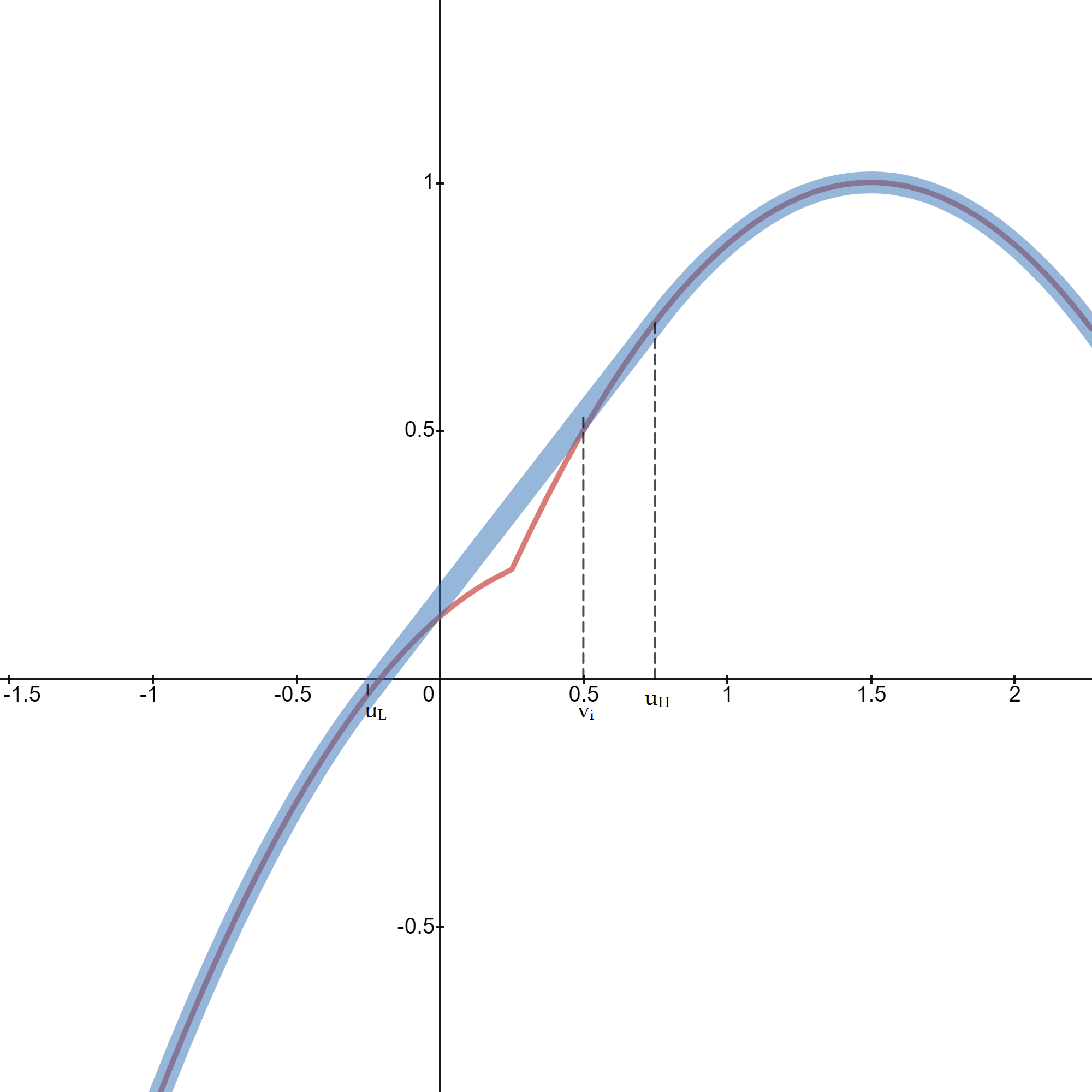}
    \caption{$W$ (in red) and its concavification (in blue) for $a = 1/4$, and $v_{i} \coloneqq \mu - p_{i} = \kappa = 1/2$.}
    \label{fig1}
\end{figure}
A consumer, given prior value $\mu$, price $p_{i}$, with $a \geq 0$ in hand, and behaving optimally, obtains the following payoff:
\[\tag{$2$}\label{eq3}W^{*}\left(\mu, p_{i}; a\right) = \begin{cases}
a, \quad &\mu \leq a + p_{i} - \frac{1}{4\kappa}\\
\frac{1}{16 \kappa} + \frac{\mu - p_{i} + a}{2} + \kappa\left(\mu - p_{i} - a\right)^{2}, \quad &a + p_{i} - \frac{1}{4\kappa} \leq \mu \leq a + p_{i} + \frac{1}{4\kappa}\\
\mu - p_{i}, \quad &a + p_{i} + \frac{1}{4\kappa} \leq \mu\\
\end{cases} \text{ .}\]
If the prior value, $\mu$, is in an interior region, a consumer learns and either becomes very optimistic about firm $i$'s product, in which case she purchases it; or becomes pessimistic about firm $i$'s product and leaves without purchase.  Equivalently, for an interior $\mu$, a consumer observes process $W$ until its value hits either $u_{L}$ or $u_{H}$ at which point she stops acquiring information and either purchases the product or continues her search elsewhere. Outside of the interior region, a consumer does not learn and either purchases from firm $i$ immediately or pursues other options. Predictably, as  $\kappa$ rises, learning occurs over a smaller range of prior values. Furthermore, for higher values of $a+p_i$, an immediate purchase from firm $i$ is optimal for a smaller range of priors, while pursuing other options without learning about firm $i$ is optimal for a larger range of priors.

\subsection{A Consumer's Search Protocol}

Because of our focus on symmetric equilibria, a consumer faces a stationary environment. It is easy to see that a consumer's optimal search protocol can be calculated recursively. Given this, suppose that the parameters are such that there exists active search and information acquisition (we will determine these precise values shortly). As a result, we define $\Phi$ to be a consumer's payoff under her optimal search and information acquisition protocol. Given that we have assumed there to be active search and information acquisition, the consumer's optimal information acquisition protocol is stipulated by Expression \ref{eq2}, in which we set $a = \Phi-c$. Moreover, using Expression \ref{eq3} (and again setting $a = \Phi-c$) we can calculate $\Phi$:
\[\Phi = \frac{1}{16 \kappa} + \frac{\mu - p + \Phi - c}{2} + \kappa\left(\Phi - c - \left(\mu - p\right)\right)^{2} \text{ ,}\]
where we cast aside the subscript $i$ due to symmetry. Rearranging this equation, we obtain
\[\Phi_{l} = \mu - p + c+\frac{1}{4\kappa} - \sqrt{\frac{c}{\kappa}} \text{ ,}\]
where the subscript denotes ``low price,'' introduced so as to distinguish this value from the maximal payoffs in other parameter regions. %A straightforward substitution reveals that if there is active search, the support of a consumer's optimal distribution over values is $\left\{\mu - \sqrt{c/\kappa}, \mu + 1/\left(2\kappa\right) - \sqrt{c/\kappa}\right\}$.

Now, let us tackle our (implicit) parametric assumptions. The first is that $\Phi_{l} - c\geq 0$, since the consumer must (weakly) prefer continuing her search over stopping to take her outside option. This holds if and only if
\[\tag{$3$}\label{ineq5}\mu - p \geq \sqrt{\frac{c}{\kappa}} - \frac{1}{4\kappa} \text{ .}\]
The second is that the learning is feasible; namely, that \[\Phi_{l} -c - \frac{1}{4\kappa} \leq \mu - p \leq \Phi_{l} -c + \frac{1}{4\kappa} \text{ .}\] Since $c > 0$, the left-hand inequality must always hold, but the right-hand one is satisfied if and only if our assumption $c \leq 1/\left(4\kappa\right)$ holds.

If Inequality \ref{ineq5} does not hold, a consumer's optimal learning is again given by Expression \ref{eq2}, in which $a = 0$; and we find that this protocol is optimal provided $-1/\left(4\kappa\right) \leq \mu - p \leq \sqrt{c/\kappa} - 1/\left(4\kappa\right)$. The consumer's payoff is \[\Phi_{m} = \frac{1}{16\kappa} + \frac{\mu - p}{2} + \kappa \left(\mu-p\right)^2 \text{ ,}\]
where the subscript stands for ``medium price.'' If $\mu - p \leq -1/\left(4\kappa\right)$, consumers neither search nor acquire information and there are no transactions. A consumer's payoff is merely $0$.

Compiling these observations, we have
\begin{proposition}
If $\mu - p \leq - 1/\left(4\kappa\right)$, there is no trade, and the consumer's payoff is $0$. If $-1/\left(4\kappa\right) \leq \mu - p \leq \sqrt{c/\kappa} - 1/\left(4\kappa\right)$, a consumer learns at one firm and obtains $\Phi_{m}$; and if $\mu-p \geq \sqrt{c/\kappa} - 1/\left(4\kappa\right)$, a consumer both learns and searches and obtains $\Phi_{l}$.
\end{proposition}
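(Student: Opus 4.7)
The plan is to exploit the stationarity of the search problem to collapse the consumer's dynamic decision into a fixed-point equation for her optimal value $\Phi$, then perform a case-split on the three regimes distinguished by Expression \ref{eq3}: immediate exit, learn-and-stop, or learn-then-(possibly-)search. In each regime I would write down the resulting equation for $\Phi$, check whether the implied outside option $a = \max\{\Phi - c,0\}$ is consistent with the regime's hypothesis (in particular, that $\mu - p$ lies in the correct branch of $W^{*}$), and then compare payoffs across regimes to identify which is optimal for each value of $\mu - p$.

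First I would handle the upper region, where active search is conjecturally optimal so $\Phi \geq c$ and $a = \Phi - c$. The conjecture that learning occurs at each visit places $\mu$ in the middle branch of Expression \ref{eq3}, and substitution gives the quadratic
\[
\Phi = \frac{1}{16\kappa} + \frac{\mu - p + \Phi - c}{2} + \kappa\bigl(\Phi - c - (\mu-p)\bigr)^{2}.
\]
This has two roots; I would select the smaller one, $\Phi_{l} = \mu - p + c + \tfrac{1}{4\kappa} - \sqrt{c/\kappa}$, since the larger root corresponds to $\Phi - c - (\mu-p) > \tfrac{1}{4\kappa}$, which violates the feasibility band of $W^{*}$. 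The requirement $\Phi_{l} \geq c$ then rearranges exactly to Inequality \ref{ineq5}, and under the standing assumption $c \leq 1/(4\kappa)$ the other feasibility inequality is automatic.

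Next I would treat the middle region, where $\Phi < c$ so continuing to search is not worth it, forcing $a = 0$. Plugging $a = 0$ into Expression \ref{eq3} yields $\Phi_{m} = \tfrac{1}{16\kappa} + \tfrac{\mu - p}{2} + \kappa(\mu - p)^{2}$, which is the relevant branch precisely when $-\tfrac{1}{4\kappa} \leq \mu - p \leq \tfrac{1}{4\kappa}$; combined with the breakdown of Inequality \ref{ineq5} this pins down the stated interval $[-\tfrac{1}{4\kappa}, \sqrt{c/\kappa}-\tfrac{1}{4\kappa}]$. For $\mu - p \leq -\tfrac{1}{4\kappa}$, both the top and middle branches of $W^{*}$ (with $a = 0$) yield at most $0$, so the outside option strictly dominates and no trade occurs.

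The step I expect to be most delicate is stitching the two active regimes together at the threshold $\mu - p = \sqrt{c/\kappa} - \tfrac{1}{4\kappa}$: one must verify that $\Phi_{l}$ and $\Phi_{m}$ coincide there and that $\Phi_{l} > \Phi_{m}$ exactly when Inequality \ref{ineq5} holds strictly, so the maximum over the two candidate strategies selects the regime claimed in the proposition. This reduces to a short algebraic comparison using the fact that $\Phi_{l}$ is affine and $\Phi_{m}$ strictly convex in $\mu - p$, with the two curves tangent at the threshold; this tangency also ensures that the recursively defined value function is continuously differentiable across the boundary, which is a useful sanity check on the construction.
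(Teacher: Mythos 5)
Your overall route is the same as the paper's: posit the active-search fixed point $\Phi = W^{*}\left(\mu,p;\Phi-c\right)$, solve the resulting quadratic, discard the inadmissible root, and check the consistency condition $\Phi_{l}-c\geq 0$, which is exactly Inequality \ref{ineq5}; when it fails, set $a=0$ and read the remaining two regimes off Expression \ref{eq3}. Your explicit justification for selecting the smaller root (the larger root has $\Phi-c-\left(\mu-p\right)=1/\left(4\kappa\right)+\sqrt{c/\kappa}>1/\left(4\kappa\right)$, outside the learning band) is a detail the paper leaves implicit, and it is correct.

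The one step that would fail as written is the claimed tangency at the threshold. Writing $v\coloneqq\mu-p$ and $v^{*}\coloneqq\sqrt{c/\kappa}-1/\left(4\kappa\right)$, both candidate values do equal $c$ at $v^{*}$, but the slopes there are $\Phi_{l}'\left(v^{*}\right)=1$ and $\Phi_{m}'\left(v^{*}\right)=\tfrac{1}{2}+2\kappa v^{*}=2\sqrt{c\kappa}<1$ (strictly, since $c<1/\left(4\kappa\right)$). The curves cross transversally and the value function has a convex kink at $v^{*}$; it is not $C^{1}$ there. This matters for your argument, not merely for the sanity check: if the affine $\Phi_{l}$ really were tangent to the strictly convex $\Phi_{m}$, you would have $\Phi_{m}\geq\Phi_{l}$ everywhere, the opposite of what the proposition requires for $v>v^{*}$. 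The correct comparison is the crossing one: $\Phi_{l}-\Phi_{m}=\tfrac{v}{2}-\kappa v^{2}+c+\tfrac{3}{16\kappa}-\sqrt{c/\kappa}$ is concave in $v$ with roots at $v^{*}$ and at $3/\left(4\kappa\right)-\sqrt{c/\kappa}$, hence nonnegative only between them; so your statement that $\Phi_{l}>\Phi_{m}$ ``exactly when'' Inequality \ref{ineq5} holds strictly is false for large $v$. It is rescued by a domain restriction: the formula $\Phi_{m}$ is the relevant no-search value only for $v\leq 1/\left(4\kappa\right)\leq 3/\left(4\kappa\right)-\sqrt{c/\kappa}$, while for $v>1/\left(4\kappa\right)$ the no-search payoff is $v$ itself and $\Phi_{l}-v=\left(\sqrt{c}-1/\left(2\sqrt{\kappa}\right)\right)^{2}\geq 0$. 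With the tangency claim replaced by this crossing argument, the proof goes through.
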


\section{A Monopolist's Problem}\label{monomono}

With a view towards our pricing equilibrium, let us briefly characterize a monopolist's optimal pricing policy in this environment. This is precisely the problem studied in \hyperlink{BSV}{BSV}, so this section's results merely repeat theirs (with the trivial generalization of allowing the the consumer to have an outside option of $a \geq 0$).

The monopolist's profit as a function of its price $p_{m}$ is
\[\Pi_{M}\left(p\right) = \begin{cases}
0, \quad &p \geq \mu + \frac{1}{4\kappa} - a\\
p\left(\frac{1}{2} + 2 \kappa\left(\mu - p-a\right)\right), \quad &\mu - \frac{1}{4\kappa} - a \leq p \leq \mu + \frac{1}{4\kappa} - a\\
p, \quad &p \leq \mu - \frac{1}{4\kappa} - a\\
\end{cases} \text{ ,}\]
and so its optimal price is 
\[\label{monopsol}\tag{$4$}p_{M} = \begin{cases}
\mu - \frac{1}{4\kappa} - a, \quad &\mu \geq \frac{3}{4\kappa} + a\\
\frac{1}{8\kappa} + \frac{\mu - a}{2}, \quad &\frac{3}{4\kappa} + a \geq \mu \geq a - \frac{1}{4\kappa}
\end{cases} \text{ .}\]

The maximal profit, $\Pi_{M}^{*}$, and corresponding consumer payoff, $\Phi_{M}^{*}$, are
\[\tag{$5$}\label{exp8}\Pi_{M}^{*} = \begin{cases}
\mu - \frac{1}{4\kappa} - a, \quad &\mu \geq \frac{3}{4\kappa} + a\\
\frac{\left(1+4\kappa\left(\mu - a\right)\right)^2}{32\kappa}, \quad &\frac{3}{4\kappa} + a \geq \mu \geq a - \frac{1}{4\kappa}\\
0, \quad &\mu \leq a - \frac{1}{4\kappa}
\end{cases}, \text{ and } \quad \Phi_{M}^{*} = \begin{cases}
\frac{1}{4\kappa} + a, \quad &\mu \geq \frac{3}{4\kappa} + a\\
\frac{\Pi_{M}^{*}}{2}+a, &\frac{3}{4\kappa} + a \geq \mu\\
\end{cases} \text{ .}\]

These findings are quite intuitive: if the expected quality (net of the outside option), $\mu - a$, is sufficiently high; or, equivalently, information frictions are sufficiently high, the monopolist prices so as to dissuade learning. If $\mu-a$ is in a middle region, or $\kappa$ is sufficiently low, the monopolist wants to encourage learning. In fact, as $\kappa$ becomes increasingly small, the monopolist prices ever higher, to take advantage of the consumers who learn that they ``absolutely must have the product.'' 

\section{The Firms' Pricing Game}\label{infinite}

Having pinned down consumers' behavior, we may now describe the equilibrium pricing decisions of the firms. Following much of the literature, we assume that each firm's price is hidden and can only be discovered by a consumer when she visits the firm. To determine her optimal search and information acquisition protocol, the consumer conjectures a price, $\tilde{p}$ set by each firm, and at equilibrium these conjectures must be correct, i.e., $\tilde{p} = p$. 

For tractability, we focus on symmetric, pure strategy equilibria with random search.\footnote{Henceforth, we drop the modifier symmetric.} As is standard, we require that if the consumer observes a deviation at a firm, her beliefs about the other firms' prices are unchanged.\footnote{This is a standard assumption in the literature, though one that may not be realistic in situations with vertical relations (see, e.g., \cite*{janssen2020beliefs}). However, these are absent from our model, so we claim that our handling of off-path beliefs is appropriate.}

We begin by searching for a low price equilibrium, in which consumers search actively. A firm's profit is
\[\Pi\left(p\right) = \begin{cases}
0, \quad &\Tilde{p} + \sqrt{\frac{c}{\kappa}} \leq p\\
2\kappa p\left(\tilde{p}-p + \sqrt{\frac{c}{\kappa}}\right), \quad &\Tilde{p} + \frac{1}{2\kappa} - \sqrt{\frac{c}{\kappa}}  \leq p \leq \Tilde{p} + \sqrt{\frac{c}{\kappa}}\\
p, \quad &p \leq \Tilde{p} + \frac{1}{2\kappa} - \sqrt{\frac{c}{\kappa}}
\end{cases} \text{ .}\]
Taking the first order condition and imposing symmetry, we obtain the equilibrium price $p_{l} = \sqrt{c/\kappa}$ and substituting this into $\Phi$, we see that the consumer prefers to continue her search over her outside option ($0$) if and only $\mu \geq 2 \sqrt{c/\kappa} - 1/\left(4\kappa\right)$.
A consumer's payoff, a firm's profit, and a consumer's probability of stopping at the current firm are, respectively,
\[\Phi_{l}^{*} = \mu+\frac{1}{4\kappa}+c-2\sqrt{\frac{c}{\kappa}}, \quad \Pi_{l}^{*} = 2 c, \quad \text{and} \quad \mathbb{P}\left(u_{H}\right) = \frac{1}{2} + 2\kappa\left(\mu - p-a\right) = 2\sqrt{c\kappa} \text{ .}\]

Next, it is easy to see that the medium price equilibrium--in which there is no active search--is given by the interior monopoly solution when consumers have an outside option of $0$. Namely, $p_{m} = 1/\left(8\kappa\right) + \mu/2$, and the necessary and sufficient condition for this equilibrium to exist (that the consumer prefers to take her outside option over continuing her search and that trade occurs) is $2\sqrt{c/\kappa} - 1/\left(4\kappa\right) \geq \mu \geq -1/\left(4\kappa\right)$. A consumer's payoff and a firm's profit are, respectively,
\[\Phi_{m}^{*} = \frac{\left(1+4\kappa\mu\right)^{2}}{64\kappa}, \quad \text{and} \quad \Pi_{m}^{*} = \frac{\left(1+4\kappa\mu\right)^{2}}{32\kappa} \text{ .}\]

Compiling these observations,
\begin{proposition}
If $\mu \leq - 1/\left(4\kappa\right)$, no equilibria with trade exist. If $- 1/\left(4\kappa\right) \leq \mu \leq 2\sqrt{c/\kappa}- 1/\left(4\kappa\right)$, the unique equilibrium is one in which consumers visit only one firm and learn there (but do not search actively). The equilibrium price is $p_{m}$. If $2\sqrt{c/\kappa}- 1/\left(4\kappa\right) \leq \mu$, the unique equilibrium is one in which consumers both search and acquire information. The equilibrium price is $p_{l}$.
\end{proposition}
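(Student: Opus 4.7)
The plan is to verify, for each of the three regions, both existence and uniqueness of the claimed symmetric pure-strategy equilibrium by combining the consumer's optimal behavior (the preceding proposition) with each firm's best-response problem. The central fixed-point structure is that the consumer's outside option $a$ at a given firm is determined by her equilibrium continuation value, which in turn depends on the symmetric price, and each firm's best response depends on that $a$. I would handle each region in turn.

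For the first region ($\mu \leq -1/(4\kappa)$), I would argue directly from Proposition on consumer behavior: any transaction requires $\mu - p \geq -1/(4\kappa)$, so $p \leq \mu + 1/(4\kappa) \leq 0$. Then a firm charging such a $p$ earns non-positive profit, while any deviation to $p > 0$ yields zero profit (no trade), which is weakly better. Hence no equilibrium with trade exists.

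For the second region, I would conjecture the symmetric profile $\tilde p = p_m$ with passive consumers. Because no active search occurs, each firm faces a consumer whose outside option is $a = 0$, so the firm's problem is exactly the monopolist problem of Section \ref{monomono} at $a=0$. The interior branch of the monopoly formula (Expression \ref{monopsol}) gives $p_m = 1/(8\kappa)+\mu/2$, and I would check this interior branch applies throughout the region by noting that $2\sqrt{c/\kappa}-1/(4\kappa)\leq 3/(4\kappa)$ under the maintained assumption $c\leq 1/(4\kappa)$. The consumer-side consistency condition is that continuing search is dominated by the outside option $0$: plugging $p_m$ into $\Phi_m^{*}$ and imposing $\Phi_m^{*}\leq c$ gives exactly $\mu\leq 2\sqrt{c/\kappa}-1/(4\kappa)$. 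Uniqueness in this region follows since the alternative low-price conjecture would require $\Phi_l^{*}-c\geq 0$, which fails here.

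For the third region, I would conjecture $\tilde p = p_l$ with active search, so that $a=\Phi_l-c = \mu - \tilde p + 1/(4\kappa) - \sqrt{c/\kappa}$. Substituting this $a$ into the purchase probability from Expression \ref{eq2} produces the piecewise $\Pi(p)$ in the text. The FOC on the interior branch, together with symmetry $\tilde p = p$, yields $p_l=\sqrt{c/\kappa}$; I would verify this lies in the interior branch using $c\leq 1/(4\kappa)$ and that $\Pi$ is concave there. Global optimality then requires ruling out deviations to the ``purchase-for-sure'' branch and to the no-trade branch; the former yields at most $2\sqrt{c/\kappa}-1/(2\kappa)$, which is $\leq 2c=\Pi_l^{*}$ by the elementary inequality $(2\sqrt{\kappa c}-1)^2\geq 0$, and the latter yields zero. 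The consumer-side consistency $\Phi_l^{*}-c\geq 0$ gives $\mu\geq 2\sqrt{c/\kappa}-1/(4\kappa)$. Uniqueness follows because the medium-price alternative would require $\Phi_m^{*}\leq c$, which fails in this region, so consumers would strictly prefer to continue searching.

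The hard part is not any single calculation but the bookkeeping: one must check simultaneously that (i) the candidate symmetric $p$ sits on the interior branch of $\Pi$ given $\tilde p = p$, (ii) no deviation to the extreme branches is profitable, and (iii) the consumer-side behavioral assumption (active search versus passive purchase) is consistent with the resulting payoffs. The second point in particular requires a global check of $\Pi$, not merely a local FOC argument, and this is where the assumption $c\leq 1/(4\kappa)$ plays its role.
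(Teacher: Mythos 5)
Your proposal is correct and follows essentially the same route as the paper: the text derives the piecewise profit function under a conjectured symmetric price, obtains $p_{l}=\sqrt{c/\kappa}$ from the first-order condition with symmetry imposed, identifies the medium-price case with the $a=0$ monopoly problem, and partitions the regions via the complementary consistency conditions $\Phi_{l}^{*}-c\geq 0$ and $\Phi_{m}^{*}\leq c$. Your explicit global check of the purchase-for-sure deviation (bounded by $2\sqrt{c/\kappa}-1/\left(2\kappa\right)\leq 2c$, i.e., $\left(2\sqrt{c\kappa}-1\right)^{2}\geq 0$) is a welcome addition that the paper leaves implicit; note that it relies on the correct boundary $\tilde{p}+\sqrt{c/\kappa}-1/\left(2\kappa\right)$ for that branch, whereas the displayed profit function in the text appears to have that boundary's sign reversed.
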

One noteworthy aspect of this result is that there exist no equilibria in which consumers neither search nor acquire information. This stands in sharp contrast to the monopolist scenario, in which the monopolist may wish to set a price so low that the consumer buys without learning. This cannot happen in the search market, since lower prices increase the consumer's outside option (her continuation value), thereby encouraging both learning and search.

\subsection{Comparative Statics}

At last, we can tackle some of the questions raised in the introduction. First, we observe the unsurprising result that an increase in the prior expected quality of the good, $\mu$, improves welfare in a Pareto sense. Naturally, prices are also increasing in $\mu$.

\begin{proposition}\label{mucompstat}
Increasing $\mu$ is Pareto improving, possibly strictly. If $\mu$ is sufficiently high (so there is active search), the market price and a firm's payoff are unaffected by $\mu$, but a consumer's payoff is strictly increasing in $\mu$. If $\mu$ is moderate (so there is no active search), prices, profits and consumer payoffs are all strictly increasing in $\mu$.
\end{proposition}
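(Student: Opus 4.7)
The plan is to exploit the fact that Proposition 4.1 already gives closed-form expressions for the equilibrium price, profit, and consumer payoff in each of the three parameter regions, and simply differentiate these with respect to $\mu$, then patch the pieces together using continuity at the regime boundaries.

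First I would handle the active-search regime $\mu \geq 2\sqrt{c/\kappa} - 1/(4\kappa)$. Here $p_l^* = \sqrt{c/\kappa}$ and $\Pi_l^* = 2c$ do not involve $\mu$ at all, so the market price and per-firm profit are literally constant. The consumer payoff $\Phi_l^* = \mu + 1/(4\kappa) + c - 2\sqrt{c/\kappa}$ is affine in $\mu$ with slope one, hence strictly increasing. This settles the first half of the proposition directly.

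Second I would turn to the medium-price (no-active-search) regime $-1/(4\kappa) \leq \mu \leq 2\sqrt{c/\kappa} - 1/(4\kappa)$. The expression $p_m = 1/(8\kappa) + \mu/2$ is affine with slope $1/2 > 0$, so strictly increasing. Both $\Pi_m^* = (1+4\kappa\mu)^2/(32\kappa)$ and $\Phi_m^* = (1+4\kappa\mu)^2/(64\kappa)$ are squares of the linear function $1+4\kappa\mu$, which is non-negative on this interval and strictly positive in its interior; differentiating gives derivatives proportional to $(1+4\kappa\mu)$, which are strictly positive on the interior. So all three objects strictly increase in $\mu$ throughout this region. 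For $\mu \leq -1/(4\kappa)$ there is no trade and all three objects are zero, giving weak monotonicity trivially.

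Finally, to establish the global (weak) Pareto improvement claim I would verify continuity at the two regime thresholds. At $\mu = -1/(4\kappa)$, plugging into $\Pi_m^*$ and $\Phi_m^*$ gives $0$, matching the no-trade region. At the boundary $\mu^{\star} := 2\sqrt{c/\kappa} - 1/(4\kappa)$, a direct substitution yields
\[
\Pi_m^*(\mu^{\star}) = \frac{(8\sqrt{c\kappa})^2}{32\kappa} = 2c = \Pi_l^*, \qquad \Phi_m^*(\mu^{\star}) = \frac{(8\sqrt{c\kappa})^2}{64\kappa} = c = \Phi_l^*(\mu^{\star}),
\]
so both the consumer payoff and firm profit are continuous across the threshold. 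Combining continuity with the piecewise monotonicity established above delivers the global Pareto ranking, strict whenever $\mu$ increases within the interior of the medium-price region, and strict for consumer welfare whenever $\mu$ increases within the active-search region. The only mildly delicate step is this boundary matching; everything else is direct substitution and sign-checking a derivative, so I do not anticipate any real obstacle.
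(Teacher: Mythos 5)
Your proposal is correct and follows essentially the same route as the paper, which simply asserts the result ``by direct computation'' on the two regions; your region-by-region differentiation of $p_l, \Pi_l^*, \Phi_l^*$ and $p_m, \Pi_m^*, \Phi_m^*$ is exactly that computation spelled out. The boundary-matching step (verifying $\Pi_m^*(\mu^\star)=2c$ and $\Phi_m^*(\mu^\star)=c$ at $\mu^\star = 2\sqrt{c/\kappa}-1/(4\kappa)$) is a small but welcome addition the paper leaves implicit, and your numbers there check out.
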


Second, we find that an increase in the explicit search cost, $c$, harms consumers but benefits firms.

\begin{proposition}\label{ccompstat}
If $c$ is sufficiently small (so there is active search), the market price and a firm's payoff are strictly increasing and a consumer's payoff is strictly decreasing in $c$. Otherwise (when there is no active search), the market price, a firm's payoff and a consumer's payoff are all unchanging in $c$.
\end{proposition}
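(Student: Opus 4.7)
The proof is essentially a direct comparative-statics calculation on the closed-form expressions already derived in Section \ref{infinite}, partitioned by the equilibrium regime.

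First I would fix $\mu$ and $\kappa$ and recall that the regime is determined by whether $\mu \geq 2\sqrt{c/\kappa} - 1/(4\kappa)$ (active search, equilibrium price $p_l$) or $-1/(4\kappa) \leq \mu \leq 2\sqrt{c/\kappa} - 1/(4\kappa)$ (no active search, equilibrium price $p_m$). Since $2\sqrt{c/\kappa} - 1/(4\kappa)$ is strictly increasing in $c$, the statement ``$c$ sufficiently small'' pins down precisely the active-search regime for a fixed $\mu$.

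In the no-active-search regime I would simply observe that $p_m = 1/(8\kappa) + \mu/2$, $\Pi_m^{*} = (1+4\kappa\mu)^2/(32\kappa)$, and $\Phi_m^{*} = (1+4\kappa\mu)^2/(64\kappa)$ do not depend on $c$ at all, yielding the second half of the proposition immediately. This is the easy half.

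In the active-search regime I would differentiate the three closed forms in $c$:
\[
\frac{dp_l}{dc} = \frac{1}{2\sqrt{c\kappa}} > 0, \qquad \frac{d\Pi_l^{*}}{dc} = 2 > 0, \qquad \frac{d\Phi_l^{*}}{dc} = 1 - \frac{1}{\sqrt{c\kappa}}.
\]
The first two immediately deliver strict monotonicity of price and firm profit. The sign of the third, which is the only genuine issue, requires $\sqrt{c\kappa} < 1$. Here I would invoke the standing assumption $c < \sigma^2/(4\gamma) = 1/(4\kappa)$ made in Section \ref{model}, which gives $\sqrt{c\kappa} < 1/2 < 1$ and hence $d\Phi_l^{*}/dc < 0$.

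The main (and really only) obstacle is this last sign check: the term $1 - 1/\sqrt{c\kappa}$ is not obviously negative from the displayed formula alone, so it is important to flag that the conclusion rests on the standing parametric assumption on $c$ rather than on any additional restriction. Everything else is routine differentiation of the equilibrium objects already characterized in Section \ref{infinite}.
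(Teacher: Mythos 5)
Your proposal is correct and follows essentially the same route as the paper: identify the regime via the threshold $\mu \geq 2\sqrt{c/\kappa} - 1/(4\kappa)$, note the no-search objects are $c$-free, and differentiate $p_l = \sqrt{c/\kappa}$, $\Pi_l^{*}=2c$, and $\Phi_l^{*}$ directly, with the sign of $\Phi_l^{*\prime}(c) = 1 - 1/\sqrt{c\kappa}$ settled by the standing assumption $c<1/(4\kappa)$. Your explicit flagging of that last sign check is if anything slightly more careful than the paper's ``a simple calculation reveals.''
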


It is easy to see that firms require explicit search costs in order to make profits. Indeed, observe that the lower posterior belief $u_{L}$ equals $\mu  - \sqrt{c/\kappa}$, which is decreasing in $c$ and goes to $\mu$ as $c$ vanishes. As costs vanish, the willingness of the consumer to depart for a different firm increases, eroding the rents that each firm can extract.

The effect of a change in $\kappa$ is more subtle. 

\begin{proposition}\label{compstatkappa}
Prices are strictly decreasing in $\kappa$. If $\kappa$ is sufficiently small, welfare is Pareto decreasing in $\kappa$. If $\mu$ is sufficiently large (or $c$ is sufficiently small) and $\kappa$ is sufficiently large, welfare is Pareto increasing in $\kappa$, possibly strictly (if the parameters are such that there is no active search). 
\end{proposition}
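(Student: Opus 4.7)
The plan is to read all claims off the closed-form equilibrium objects in the previous section: the prices $p_l = \sqrt{c/\kappa}$ and $p_m = \frac{1}{8\kappa} + \frac{\mu}{2}$, the active-search payoffs $\Pi_l^* = 2c$ and $\Phi_l^* = \mu + \frac{1}{4\kappa} + c - 2\sqrt{c/\kappa}$, and the medium-regime payoffs $\Phi_m^* = (1+4\kappa\mu)^2/(64\kappa)$ and $\Pi_m^* = (1+4\kappa\mu)^2/(32\kappa)$. Let $\mu^*(\kappa) \coloneqq 2\sqrt{c/\kappa} - 1/(4\kappa)$ be the regime-switching threshold; since $\mu^*(\kappa)\to -\infty$ as $\kappa\to 0^+$ and $\mu^*(\kappa)\to 0$ as $\kappa\to\infty$, any fixed $\mu$ sits in the active-search regime for all sufficiently small $\kappa$, and any fixed $\mu>0$ sits in the active-search regime for all sufficiently large $\kappa$.

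For the price claim I differentiate directly: $dp_l/d\kappa = -\tfrac{1}{2}\sqrt{c}\,\kappa^{-3/2}<0$ on the active-search regime and $dp_m/d\kappa = -1/(8\kappa^2)<0$ on the medium regime. Substituting $\mu=\mu^*(\kappa)$ into $p_m$ collapses it to $\sqrt{c/\kappa}=p_l$, so the price function of $\kappa$ is continuous across the boundary and hence strictly decreasing everywhere.

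The two welfare assertions in the active-search regime are mirror images of one another. There $\Pi_l^*\equiv 2c$ is independent of $\kappa$, while a one-line computation yields
\[
\frac{d\Phi_l^*}{d\kappa} \;=\; \frac{4\sqrt{c\kappa}-1}{4\kappa^2},
\]
which is strictly negative on $\kappa<1/(16c)$ and strictly positive on $\kappa>1/(16c)$. For small $\kappa$ this gives a (weak) Pareto decrease with the consumer strictly worse off; under the assumption that $\mu$ is sufficiently large (or $c$ is sufficiently small) and $\kappa$ is sufficiently large to guarantee both active search and $\kappa>1/(16c)$, the same derivative flips sign, yielding a (weak) Pareto increase with the consumer strictly better off and the firm unchanged.

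The main obstacle, and the one step requiring genuine work, is exhibiting a parameter region in which the Pareto increase is strict; this forces us into the medium regime, where both $\Phi_m^*$ and $\Pi_m^*$ actually move with $\kappa$. A direct computation gives
\[
\frac{d\Phi_m^*}{d\kappa} \;=\; \frac{(1+4\kappa\mu)(4\kappa\mu-1)}{64\kappa^2}, \qquad \frac{d\Pi_m^*}{d\kappa} \;=\; \frac{(1+4\kappa\mu)(4\kappa\mu-1)}{32\kappa^2},
\]
and the trade condition $\mu>-1/(4\kappa)$ makes the first factor positive, so both are strictly positive precisely when $\mu>0$ and $\kappa>1/(4\mu)$. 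Combining this last inequality with the medium-regime condition $\mu\le\mu^*(\kappa)$ reduces, after squaring, to $\kappa>1/(16c)$, which admits the non-empty parameter window $\mu\in(1/(4\kappa),\,\mu^*(\kappa)]$ for any such $\kappa$. Any $(\mu,c,\kappa)$ in this window delivers a strict Pareto improvement in $\kappa$, closing the proof.
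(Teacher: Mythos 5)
Your proposal is correct and follows essentially the same route as the paper's proof: differentiate the closed-form equilibrium objects regime by regime, locate the sign-switch thresholds $\kappa = 1/(16c)$ (active search) and $\kappa = 1/(4\mu)$ (no active search), and note that $\Pi_l^* = 2c$ is flat while $\Pi_m^* = 2\Phi_m^*$ moves with consumer welfare. Your two small additions — checking that $p_m$ collapses to $p_l$ at the regime boundary so the price is globally decreasing, and pinning down the non-empty window $\kappa > 1/(16c)$, $\mu \in \left(1/(4\kappa), 2\sqrt{c/\kappa} - 1/(4\kappa)\right]$ for the strict Pareto improvement — are consistent with (and slightly more explicit than) the paper's exhaustive four-case enumeration.
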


This result is somewhat surprising: it is possible that an increase in information frictions can be Pareto improving. The effect of these frictions on the price seems natural:  notice that when the information friction $\kappa$ is higher, $\mathbb{P}(u_H)$ is more responsive to changes in $p$, holding the continuation value fixed--a firm's ``demand curve'' is steeper. Consequently, at equilibrium, the solution of the optimization problem of a firm entails lowering the price slightly as a reaction to an increase in $\kappa$. Moreover, as it turns out, this balance is perfect: in regions of the parameter space with active search, a firm's profit is unaffected by changes in $\kappa$ since, as noted above, the rents it can extract are limited by the explicit search cost. 

In contrast, a change in the information acquisition cost affects a consumer's welfare in two ways. \textit{Ceteris Paribus}, an increase in $\kappa$ hurts consumers, as information is more difficult to obtain. However, it also results in a decrease in the market price, which is to a consumer's benefit. The overall effect of $\kappa$ on consumer welfare; therefore, is determined by which of these two forces dominates. When $\kappa$ is low (and therefore prices are high) the effect of an increase in $\kappa$ on the price is too small to outweigh the negative direct affect on learning. Conversely, a high $\kappa$ flips the relationship.

It is easy to see that if there is active search, the expected search duration is $1/\left(2\sqrt{c\kappa}\right)$, which is predictably decreasing in $c$ and $\kappa$. If there is no active search, the expected search duration is just $1$. 

\begin{figure}[tbp]
\centering
\begin{subfigure}{.5\textwidth}
  \centering
  \includegraphics[scale=.15]{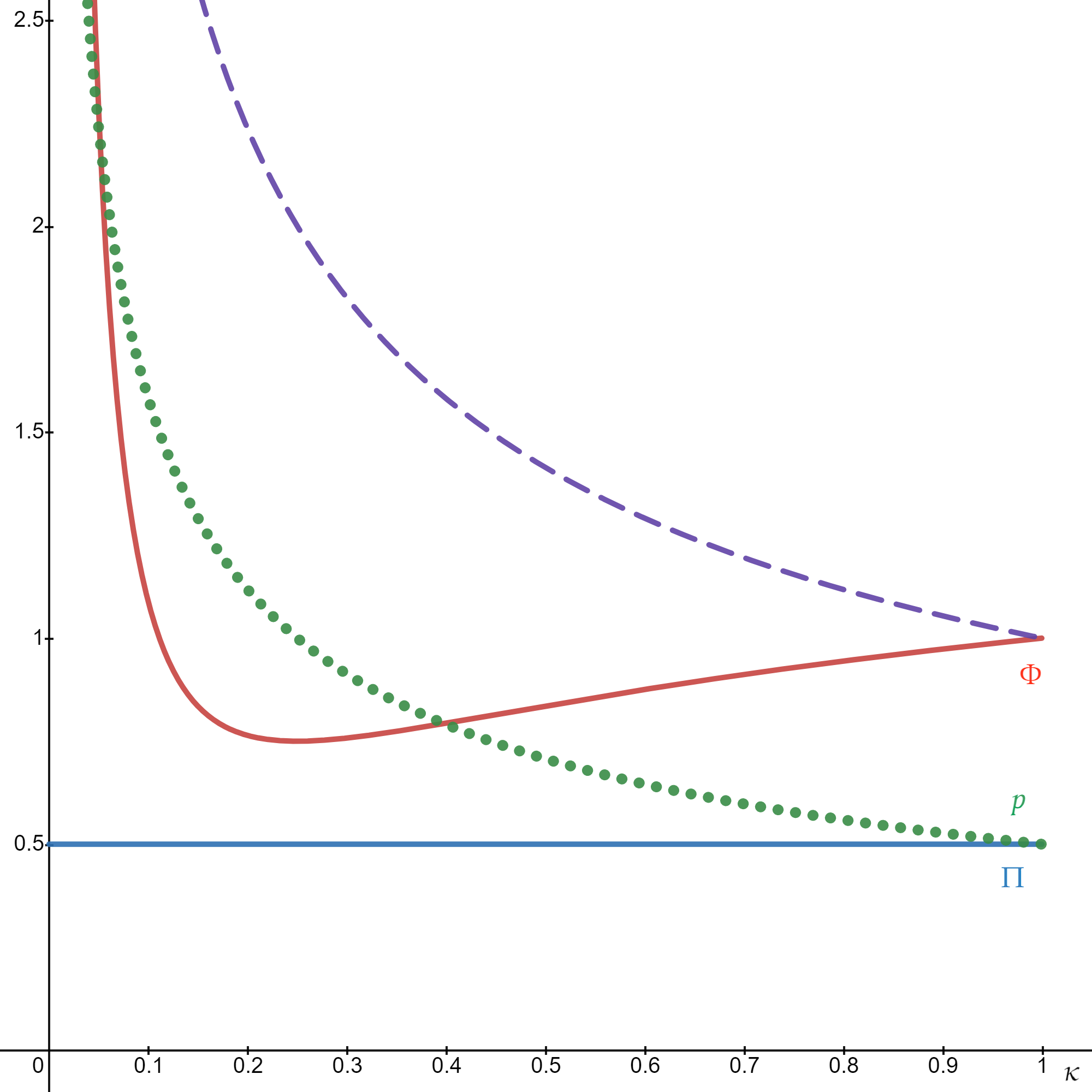}
  \caption{\textbf{Case 1:} $\mu > 0$ and $c \leq \mu/4$.}
  \label{figsub1}
\end{subfigure}%
\begin{subfigure}{.5\textwidth}
  \centering
  \includegraphics[scale=.15]{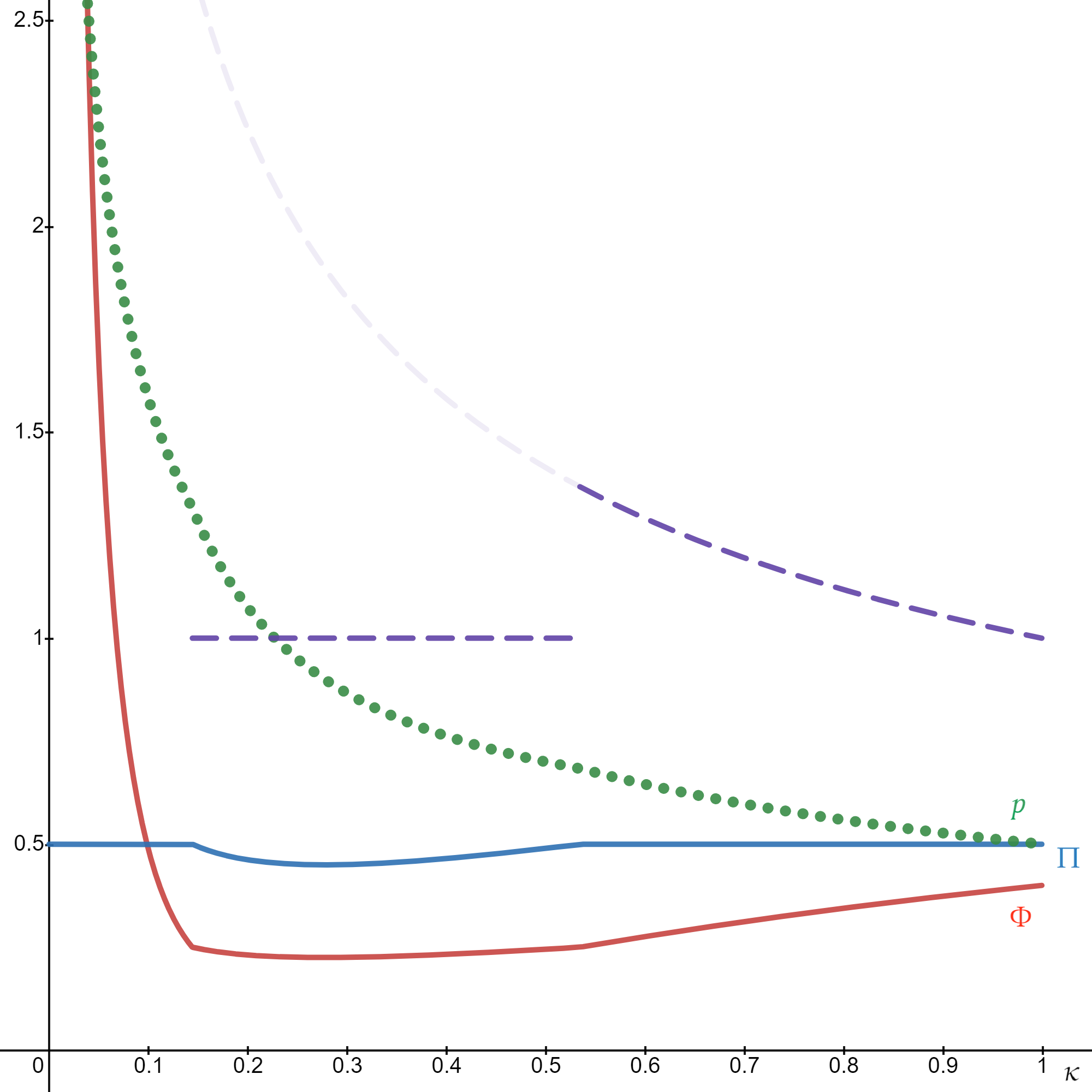}
  \caption{\textbf{Case 2:} $\mu > 0$ and $\mu/3 \geq c \geq \mu/4$.}
  \label{figsub2}
\end{subfigure}
\par
\bigskip
\par
\bigskip
\par
\begin{subfigure}{.5\textwidth}
  \centering
  \includegraphics[scale=.15]{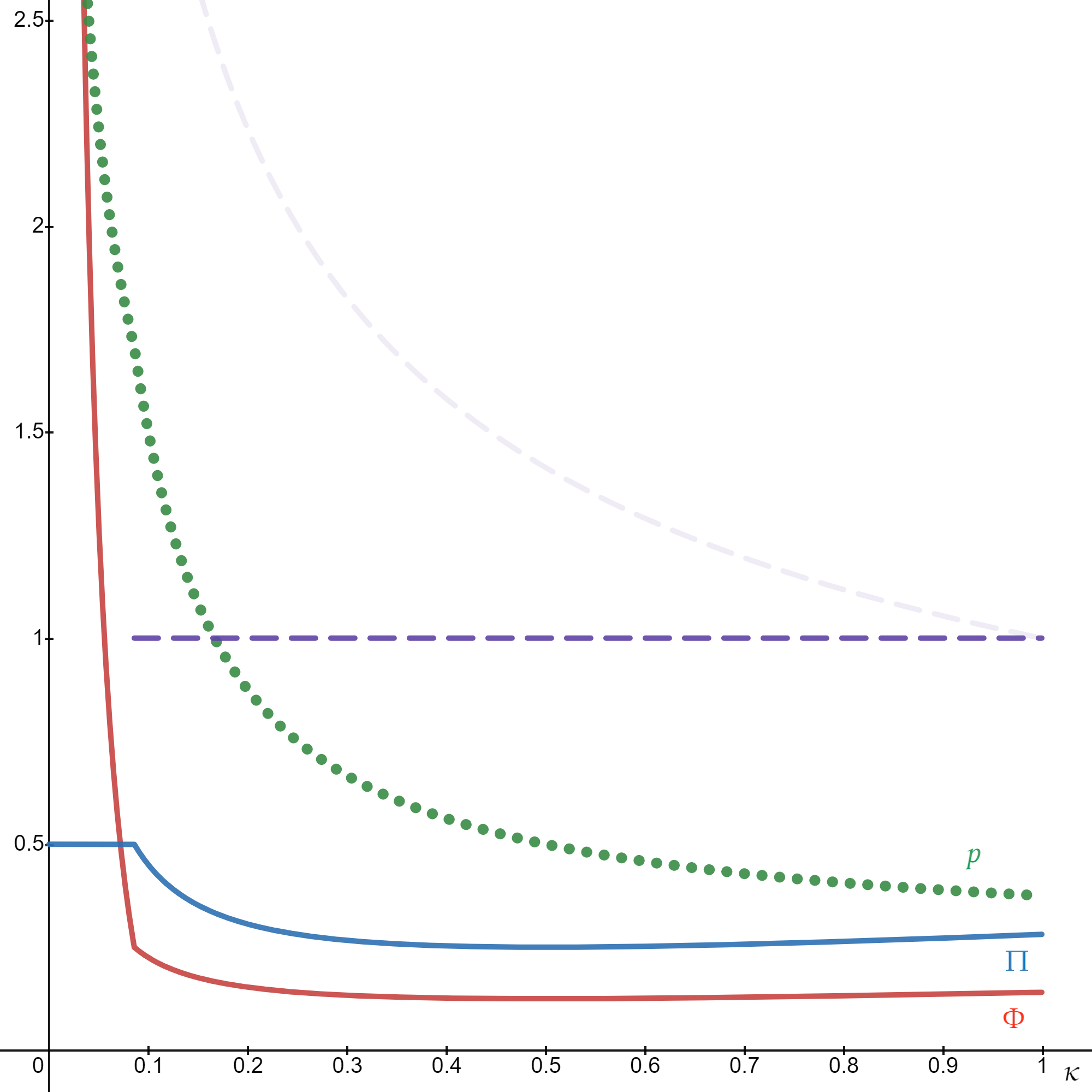}
  \caption{\textbf{Case 3:} $\mu > 0$ and $c \geq \mu/3$.}
  \label{figsub3}
\end{subfigure}
\caption{Consumer welfare (solid red), profit (solid blue), price (dotted green), and expected search duration (dashed purple) as functions of $\kappa$.}
\label{fig2}
\end{figure}

\begin{proposition}\label{searchdur}
The expected duration of search as a function of $\mu$ is increasing in $\mu$: it is a step function with one discontinuity at $\mu = 2\sqrt{c/\kappa} - 1/\left(4\kappa\right)$, where it jumps up from $1$ to $1/\left(2\sqrt{c\kappa}\right)$. The expected duration of search is decreasing in $c$: it is strictly decreasing in $c$ for $c$ sufficiently small (in the active search region) and constant thereafter (in the region without active search). The expected duration of search may be non-monotone in $\kappa$. In particular, for a region of the parameter space, there is active search for $\kappa$ sufficiently low or sufficiently high, but no active search in an intermediate region.
\end{proposition}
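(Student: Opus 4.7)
My plan is to compute the expected search duration directly in each of the two regimes identified in Section \ref{infinite}, and then to analyze each comparative static by exploiting the threshold
\[\mu^{*}(c,\kappa) := 2\sqrt{c/\kappa} - \frac{1}{4\kappa}\]
that separates the active-search and no-active-search regions. In the active-search region, by stationarity the consumer terminates at each visited firm with probability $\mathbb{P}(u_{H}) = 2\sqrt{c\kappa}$, so the number of firms visited is geometric and the expected duration equals $1/(2\sqrt{c\kappa})$. The maintained assumption $c < 1/(4\kappa)$ ensures $2\sqrt{c\kappa} < 1$, so this duration strictly exceeds $1$. In the no-active-search region the consumer visits exactly one firm, and the duration is identically $1$.

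The $\mu$-comparative static then follows immediately: both $\mu^{*}$ and $1/(2\sqrt{c\kappa})$ are independent of $\mu$, so the duration takes the value $1$ on the interval $[-1/(4\kappa), \mu^{*})$ and jumps up to $1/(2\sqrt{c\kappa}) > 1$ at $\mu = \mu^{*}$, remaining constant thereafter. The $c$-comparative static is nearly as easy: a direct differentiation shows that $\mu^{*}$ is strictly increasing in $c$ while $1/(2\sqrt{c\kappa})$ is strictly decreasing in $c$ on the active-search region; hence for $c$ small enough that $\mu > \mu^{*}(c,\kappa)$ the duration is strictly decreasing, and once $c$ is large enough that $\mu < \mu^{*}(c,\kappa)$ the duration stays at $1$.

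The substantive step is the $\kappa$-comparative static. Rewriting
\[\mu^{*}(\kappa) = \frac{8\sqrt{c\kappa} - 1}{4\kappa}\]
and differentiating yields $d\mu^{*}/d\kappa = (1 - 4\sqrt{c\kappa})/(4\kappa^{2})$, which shows that $\mu^{*}(\kappa)$ is strictly increasing on $(0, 1/(16c))$ and strictly decreasing on $(1/(16c), 1/(4c))$, attaining its maximum value $4c$ at $\kappa = 1/(16c)$ and taking the value $3c$ at the admissibility boundary $\kappa = 1/(4c)$. Consequently, for any $\mu \in (3c, 4c)$ the equation $\mu^{*}(\kappa) = \mu$ has exactly two admissible roots $\kappa_{1} < 1/(16c) < \kappa_{2}$, and active search obtains on $(0, \kappa_{1}) \cup (\kappa_{2}, 1/(4c))$ while no active search obtains on the intermediate interval $(\kappa_{1}, \kappa_{2})$. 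Since the duration equals $1/(2\sqrt{c\kappa}) > 1$ on the former and $1$ on the latter, the expected duration is non-monotone in $\kappa$ on this slice of parameter space.

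The main obstacle is really just bookkeeping: one must correctly identify the range $\mu \in (3c, 4c)$ for which both active-search branches fit inside the admissible window $\kappa \in (0, 1/(4c))$, and verify that the hump shape of $\mu^{*}(\kappa)$ produces exactly the two crossings claimed. Everything else reduces to explicit computations using formulas already derived in the preceding sections.
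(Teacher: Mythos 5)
Your proposal is correct and follows essentially the same route as the paper: the paper's own justification rests on the observation that the stopping probability is $2\sqrt{c\kappa}$ (so the duration is $1/\left(2\sqrt{c\kappa}\right) > 1$ with active search and $1$ without), combined with the case analysis in the proof of Proposition \ref{compstatkappa}, whose Case 2 ($\mu/4 < c < \mu/3$, i.e., $\mu \in \left(3c, 4c\right)$) yields exactly your two roots $\kappa_{1} = \ubar{\kappa}$ and $\kappa_{2} = \bar{\kappa}$ via the quadratic $16\mu^{2}\kappa^{2} + \left(8\mu - 64c\right)\kappa + 1 = 0$. Your hump-shape argument for $\mu^{*}\left(\kappa\right)$, with maximum $4c$ at $\kappa = 1/\left(16c\right)$ and value $3c$ at $\kappa = 1/\left(4c\right)$, is just a cleaner way of locating those same two crossings.
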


Figure \ref{fig2} illustrates the results of Propositions \ref{compstatkappa} and \ref{searchdur}. In the proof of the former (in Appendix \ref{compstatkappaproof}) we find that the parameter space (for $\mu > 0$) can be divided into three cases, each of which is depicted in the figure. The portions of the domain in which $\Pi$ is flat are those regions of $\kappa$ (for the fixed $\kappa$ and $\mu$) that facilitate active search. Note that the graph exhibits considerable non-monotonicity in $\kappa$: over different intervals of $\kappa$ profit and consumer welfare may both be increasing or decreasing. Moreover, the equilibrium may involve active search for small and large $\kappa$, but not for moderate $\kappa$. Accordingly, the expected search duration may have discrete jumps both up and down.

\subsection{Obfuscation}

By now, a sizeable collection of papers on obfuscation in search markets has emerged. These papers, both theoretical\footnote{See, e.g., \cite*{asher}, \cite*{gu2014strategic}, \cite*{hamalainen2018competitive}, and \cite*{petrikaite2018consumer}.} and empirical\footnote{See, e.g., \cite*{ellison2009search} and \cite*{richards2019strategic}. In addition, \cite{ellisonsurvey} discusses obfuscation in her survey of price search.} explore how firms may strategically increase consumers' search costs to allow for higher mark-ups. Instead of firms obfuscating by increasing search costs (the typical type of obfuscation explored in the literature), a different form of obfuscation suggests itself in our paper. Namely, we may allow each firm to not only set a price but also choose a level of information friction $\kappa$ from some interval of possible frictions $\left[\ubar{\kappa},\bar{\kappa}\right]$, where $\ubar{\kappa} > 0$ and $\bar{\kappa} \leq \infty$. 

How might firms be able to choose $\kappa$? Recall that in a consumer's dynamic problem, she observes in sequence each attribute, which is a symmetric (binary) random variable. By adding noise to this random variable, a firm would decrease the variance of the corresponding Brownian process, thereby increasing $\kappa$. Firms could also increase the explicit cost of inspecting each attribute, $\gamma$, which also raises $\kappa$.

A pair $(\kappa,p)$ is an equilibrium provided it is the solution to a firm's monopoly problem with $a$ determined by the equilibrium; and so looking at the monopolist's profit (given in Expression \ref{exp8}), we see that the equilibrium $\kappa$ must be a corner solution. \textit{Viz.},
\begin{remark}
If firms may choose both $p$ and $\kappa$, the equilibrium $\kappa^{*} \in \{\ubar{\kappa},\bar{\kappa}\}$.
\end{remark}

It is straightforward, though rather tedious to characterize the explicit equilibria of this game, but we believe the main insight we discover regarding obfuscation is contained in the remark and so do not do so. As the remark states, when firms may strategically obfuscate the product value, by hindering the observability of a product's attributes, they either muddle this process as much as possible or as minimally as possible. An interior level of obfuscation never constitutes an equilibrium. 
\section{Hidden Prices}\label{hidden}

So far, we have assumed that when a consumer visits a firm, she observes the price for the item \textit{before} acquiring information. This seems like reasonable assumption for many sorts of consumer goods like, e.g., shoes, bicycles, perfumes, headphones, foodstuffs. However, there are other markets in which it may be more natural to assume that a consumer acquires information about a product before seeing its price. One way this may manifest is when firms include hidden fees or charges that can only be observed later on in the purchase process.%, i.e., once a consumer has decided to buy the product.

In this section, we analyze the situation in which a consumer may not observe a firm's price until after she has acquired information at the firm.\footnote{Importantly, each firm sets its price in advance and may not observe consumers' behavior beforehand.} As we will shortly see, this changes the equilibrium behavior significantly: there is now an informational hold-up problem, which any equilibrium must accommodate.

As in our main specification, we begin by stating the outcome when there is a single (monopolist) firm. 

\subsection{Monopoly With Hidden Prices}

As we did in the observable prices case, we stipulate that a consumer has an outside option of $a \geq 0$. In contrast to the observable prices scenario, we find that the existence of a pure strategy pricing equilibrium for the monopolist is incompatible with trade.

\begin{lemma}\label{lemmanopure}
There are equilibria in which the monopolist chooses a deterministic price if and only if $\mu \leq a$. In any such pure-strategy equilibrium, there is no trade. 
\end{lemma}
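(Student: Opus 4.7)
The plan is to leverage the consumer's learning characterization from Section \ref{search}. Since the consumer commits to a distribution $Q \in \mathcal{M}(\mu)$ before seeing the realized price, her problem reduces to the static one of choosing $Q$ to maximize $\mathbb{E}_Q[\max\{u - \tilde{p},\, a\}] - C(Q)$ against her conjecture $\tilde{p}$. The analysis producing Expressions \ref{eq2}--\ref{eq3} applies verbatim (with the firm's price replaced by the conjecture): if $\mu$ lies in the interior region $\bigl(a + \tilde{p} - 1/(4\kappa),\, a + \tilde{p} + 1/(4\kappa)\bigr)$, the consumer learns using the two-point posterior at $u_L = a + \tilde{p} - 1/(4\kappa)$ and $u_H = a + \tilde{p} + 1/(4\kappa)$; otherwise she does not learn and either buys outright or takes her outside option.

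The first key step is to rule out any pure-strategy equilibrium featuring active learning. If the consumer's equilibrium distribution places positive mass on both $u_L$ and $u_H$, the monopolist's profit as a function of the realized price $p$ is the step function equal to $p$ for $p \le u_L - a$, equal to $p\,\mathbb{P}(u_H)$ for $u_L - a < p \le u_H - a$, and equal to $0$ for $p > u_H - a$. Its maximizers sit at $p = u_L - a = \tilde{p} - 1/(4\kappa)$ and $p = u_H - a = \tilde{p} + 1/(4\kappa)$, neither of which equals $\tilde{p}$; consistency with the consumer's correct conjecture fails.

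With active learning ruled out, the candidate equilibria are those in which the consumer commits not to learn. Consistency then requires $\tilde{p}$ to fall outside the interior region, and the consumer's ex post rule is to purchase iff $\mu - p \geq a$. This produces two subcases. If $\tilde{p} \geq \mu - a + 1/(4\kappa)$, the consumer takes the outside option and profit is zero; if $\tilde{p} \leq \mu - a - 1/(4\kappa)$, the consumer buys at $\tilde{p}$ and profit is $\tilde{p}$. In each subcase the monopolist's tempting deviation is to $p = \mu - a$, the highest price the consumer will accept without learning. When $\mu > a$, this deviation yields strictly positive profit $\mu - a$, dominating both candidates, so no pure-strategy equilibrium exists. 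When $\mu \leq a$, the deviation profit $\mu - a \leq 0$ is dominated by the candidate profit zero of the outside-option subcase, which then constitutes an equilibrium: setting $\tilde{p} = p$ to any value in $[\mu - a + 1/(4\kappa),\, \infty)$ works, with no trade occurring. The buy-without-learning subcase remains infeasible even for $\mu \le a$ because $\tilde{p} \leq \mu - a - 1/(4\kappa) < 0$ is beaten by any deviation yielding zero profit.

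The main obstacle is the first step: one has to be careful about the timing and about knife-edge distributions. The argument rests on the fact that the consumer's optimal posterior is strictly two-point whenever learning is active, which creates the two discontinuities in the demand curve at $p = u_L - a$ and $p = u_H - a$; these discontinuities are precisely the source of the informational hold-up, since they prevent $p = \tilde{p}$ from being a best response to any active-learning $Q$. Once this is pinned down, the rest is a clean comparison between the monopolist's preferred no-learning price $\mu - a$ and zero.
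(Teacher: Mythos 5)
Your proof is correct and follows essentially the same case decomposition as the paper's own argument (conjectured price induces learning / purchase without learning / rejection without learning), with the same hold-up logic in the learning case---prices strictly inside $(u_L - a,\, u_H - a)$ are dominated by slightly higher ones---and the same construction of the no-trade equilibrium when $\mu \le a$. The only cosmetic difference is your deviation to exactly $p = \mu - a$, where the consumer is indifferent; the paper sidesteps the tie by deviating to $\mu - a - \epsilon$.
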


If the monopolist mixes over prices, it must be indifferent over all prices in the support of her mixture. That is, given a consumer's distribution over posteriors, $F$, the monopolist's profit is given by
\[\hat{\Pi}\left(p\right) = p\left(1-F\left(p+a\right)\right) \text{ ,}\]
which equals some constant $\lambda > 0$ for all $p$ in the support of her mixed strategy. 

\begin{theorem}\label{monopvalue}
If $\mu > a$, there is an essentially unique\footnote{That is, any equilibrium must be such that the empirical distribution over valuations is $F$. One way this could be obtained is if each consumer randomized and chose distribution $F$ herself. Because there is a unit mass of consumers, one might worry about whether one could construct a continuum of independent individual random variables that yields the desired empirical distribution over values by the Exact Law of Large Numbers. \cite*{sun2006exact} allows for precisely such a construction to be done rigorously. Alternatively, we could assume that there is a random variable $Y \sim H$ with support $[0,1]$, which is a consumer's label. By, e.g., \cite*{wink} there is a distribution $H$ such that a consumer with label $y$ (deterministically) chooses a binary distribution over values with support $\left\{u_{L}\left(y\right), u_{H}\left(y\right)\right\}$ such that the resulting compound distribution is precisely $F$. Yet another approach would be to assume that there is a single representative consumer, who herself mixes over values according to $F$.} equilibrium in the monopolist scenario. The empirical distribution over consumer valuations is
\[F\left(x\right) \coloneqq 1 - \frac{\ubar{x}_{M}}{x}, \quad \text{on} \quad \left[\ubar{x}_{M},\bar{x}_{M}\right] \text{ ,}\]
and the monopolist chooses a uniform distribution over prices 
\[G\left(p\right) \coloneqq 2 \kappa\left(p - \ubar{p}_{M}\right), \quad \text{on} \quad \left[\ubar{p}_{M},\bar{p}_{M}\right] \text{ .}\]
$\ubar{p}_{M}$ solves
\[\label{eq8}\tag{$6$}\ln{\left\{1 + \frac{1}{2\kappa\ubar{p}_{M}}\right\}} = \frac{\mu-a}{\ubar{p}_{M}} - 1 \text{ ,}\]
and $\bar{p}_{M} = \ubar{p}_{M} + 1/\left(2\kappa\right)$, and $\ubar{x}_{M} = \ubar{p}_{M} + a$. The equilibrium vector of payoffs is 
\[\hat{\Pi}_{M}^{*} = \ubar{p}_{M}, \quad \text{and} \quad \hat{\Phi}_{M}^{*} = \kappa \left(\mu - \ubar{x}_{M}\right)^2 + a \text{ .}\]
\end{theorem}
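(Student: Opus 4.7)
My approach is to pin down $F$ and $G$ from a pair of mutual indifference conditions, close the system with the measure-theoretic constraints on $F$, read off payoffs, and then argue essential uniqueness.

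First I would extract $F$ from the monopolist's indifference. Lemma~\ref{lemmanopure} rules out pure strategies when $\mu > a$, so the monopolist must mix, and at every price in the support $\hat\Pi(p) = p\bigl(1 - F(p+a)\bigr)$ must equal a common value $\lambda > 0$. Solving this functional equation forces $F$ to take a truncated-Pareto, ``unit-elastic'' form on any connected component of the support of $G$---the same sort of shape that arises in the buyer-optimal signals of \cite*{Roes}, \cite*{yang}, and \cite*{Condor}. The lower endpoint, at which $F$ vanishes, identifies $\lambda = \ubar{p}_M$.

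Second, I would extract $G$ from the consumer's indifference. Via the static reformulation in Remark~\ref{equiv}, the consumer picks a mean-$\mu$ measure $F$ to maximize $\int [V(u) - \kappa(u-\mu)^2]\,dF(u)$, with $V(u) \coloneqq \mathbb{E}_G[\max\{u - p, a\}]$. For the consumer to be willing to spread mass across the entire interval $[\ubar{x}_M, \bar{x}_M]$, the net integrand must be affine there. Because the information cost contributes $-2\kappa$ to the second derivative and $V''(u)$ equals $G$'s density shifted by $a$, this affine-envelope requirement forces $G$ to be uniform with density $2\kappa$; hence $\bar{p}_M - \ubar{p}_M = 1/(2\kappa)$.

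Third, I would close the system. The unit-elastic density on $[\ubar{x}_M, \bar{x}_M]$ integrates to strictly less than one, and the missing mass must sit as an atom at $\bar{x}_M$---the only location compatible with monopolist indifference at the upper endpoint. Requiring that $F$ have mean $\mu$, substituting $\bar{p}_M = \ubar{p}_M + 1/(2\kappa)$, and simplifying delivers Expression~\ref{eq8} as a scalar equation for $\ubar{p}_M$. The profit $\hat\Pi_M^* = \ubar{p}_M$ is the indifference value $\lambda$ itself, and evaluating the consumer's linear envelope at $u = \mu$ (using that $V(\ubar{x}_M) = a$ and that the envelope has slope $2\kappa(\mu - \ubar{x}_M)$) yields $\hat\Phi_M^* = a + \kappa(\mu - \ubar{x}_M)^2$.

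The delicate step, and the one I expect to be the main obstacle, is essential uniqueness. I would need to rule out: gaps in the support of $G$ (each such gap forces a flat region of $F$ that contradicts the consumer's strict optimality on the envelope); interior atoms of $F$ (profitably undercut by the monopolist); mass of $F$ strictly below $\ubar{x}_M$ (strictly dominated by shifting toward $\mu$, since the concave envelope exceeds the raw payoff there); and mass above $\bar{x}_M$ (excessive learning relative to the envelope's linear growth). The hypothesis $\mu > a$ is used precisely to guarantee that $\mu$ lies in the interior of $[\ubar{x}_M, \bar{x}_M]$, so the mean constraint binds and forces genuine two-sided support, delivering the ``essentially unique'' conclusion of the theorem.
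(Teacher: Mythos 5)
Your plan follows the paper's own proof essentially step for step: monopolist indifference yields the unit-elastic $F$ with $\lambda=\ubar{p}_{M}$, the affine-envelope condition on the consumer's net payoff yields the uniform $G$ with density $2\kappa$ and hence $\bar{p}_{M}-\ubar{p}_{M}=1/\left(2\kappa\right)$, the mean constraint then closes the system to give Expression (6), and the list of cases you flag for uniqueness (gaps in the support of $G$, interior atoms of $F$, mass outside $\left[\ubar{x}_{M},\bar{x}_{M}\right]$) is exactly what the paper's four claims dispose of. The one place your sketch is lighter than the paper is in ruling out mass of $F$ strictly below $\ubar{x}_{M}$: strict concavity of the consumer's net payoff there does not by itself preclude a single tangency point of the concavification, and the paper instead argues that the resulting gap in $F$'s support would make demand locally inelastic near $\ubar{p}_{M}$, giving the monopolist a profitable upward price deviation.
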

Unsurprisingly, a truncated Pareto distribution ($F$) over valuations emerges, which yields unit elastic demand for the monopolist, which is, therefore, willing to mix over prices. The monopolist chooses a distribution over prices that yields a payoff for consumers (net of information acquisition costs) that is an affine function of the realized posterior values. They, in turn, are therefore willing to choose any distribution with support on $\left[\ubar{x}_{M},\bar{x}_{M}\right]$. Indeed, as the concavification approach suggests, if a consumer's payoff were strictly convex in her value, a binary distribution would be uniquely optimal. Conversely, if her payoff were strictly concave in her value, she would not acquire any information.

\subsubsection{Monopoly Comparative Statics}
Just as how the equilibrium price when prices are observable increases in $\mu$ and $\kappa$, so does the lower bound for its distribution over prices when prices are hidden.
\begin{lemma}\label{monopprice}
The lower bound for the monopoly price, $\ubar{p}_{M}$, is strictly increasing in $\kappa$ and $\mu$.
\end{lemma}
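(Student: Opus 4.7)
The plan is to use the implicit function theorem on the defining equation (\ref{eq8}). Define
\[H(p,\kappa,\mu) \;:=\; \ln\!\left(1+\tfrac{1}{2\kappa p}\right) - \tfrac{\mu-a}{p} + 1,\]
so that $\ubar{p}_M$ is characterized by $H(\ubar{p}_M,\kappa,\mu)=0$. Taking partials directly,
\[\tfrac{\partial H}{\partial \kappa} = -\tfrac{1}{\kappa(2\kappa p+1)}<0,\qquad \tfrac{\partial H}{\partial \mu} = -\tfrac{1}{p}<0,\]
and
\[\tfrac{\partial H}{\partial p} \;=\; \tfrac{\mu-a}{p^{2}} - \tfrac{1}{p(2\kappa p+1)}.\]
By the implicit function theorem, both $\partial \ubar{p}_M/\partial \kappa$ and $\partial \ubar{p}_M/\partial \mu$ have the sign of $\partial H/\partial p$ evaluated at $\ubar{p}_M$. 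So the whole lemma reduces to showing that this last quantity is strictly positive.

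To establish $\partial H/\partial p\bigr|_{\ubar{p}_M}>0$, I will substitute $(\mu-a)/\ubar{p}_M$ out using (\ref{eq8}) itself. The defining equation gives $(\mu-a)/\ubar{p}_M = 1+\ln\bigl((2\kappa\ubar{p}_M+1)/(2\kappa\ubar{p}_M)\bigr)$, so after multiplying through by $\ubar{p}_M$ the required inequality becomes
\[1 + \ln\!\left(\tfrac{y+1}{y}\right) \;>\; \tfrac{1}{y+1}, \qquad \text{where } y:=2\kappa\ubar{p}_M>0.\]
Since $\ln((y+1)/y)>0$ for $y>0$, the left-hand side exceeds $1$, while the right-hand side is strictly less than $1$. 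The inequality is therefore immediate, giving $\partial H/\partial p>0$ at $\ubar{p}_M$.

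Combining the three sign computations, $\partial \ubar{p}_M/\partial \kappa = -(\partial H/\partial \kappa)/(\partial H/\partial p) > 0$ and $\partial \ubar{p}_M/\partial \mu = -(\partial H/\partial \mu)/(\partial H/\partial p) > 0$, which is precisely the claim. The only nontrivial step is the sign of $\partial H/\partial p$, and because Theorem \ref{monopvalue} already pins down $\ubar{p}_M$ as the unique solution to (\ref{eq8}) in the relevant range, there is no ambiguity about which root we are perturbing; the rest is routine. (One small subtlety worth mentioning is that existence/uniqueness of $\ubar{p}_M$ as a smooth function of the parameters on the relevant open set is itself a byproduct of $\partial H/\partial p\neq 0$, so the argument is self-contained.)
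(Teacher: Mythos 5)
Your proposal is correct and follows essentially the same route as the paper: both apply the implicit function theorem to the defining equation (\ref{eq8}) (the paper works with $\iota = p\,H$ rather than $H$ itself) and reduce everything to checking that the $p$-derivative is strictly positive, which the paper does globally via monotonicity of $\iota'$ while you verify it only at the root by substituting the root equation — a harmless variation. The sign computations and the final elementary inequality $1+\ln\bigl(\tfrac{y+1}{y}\bigr) > \tfrac{1}{y+1}$ are all correct.
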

However, because the monopolist's profit is precisely this lower bound, $\ubar{p}_{M}$, information frictions are unambiguously good for the monopolist when prices are hidden. In the limit, as $\kappa$ explodes, $\ubar{p}_{M}$ and $\bar{p}_{M}$ converge to $\mu - a$ and so the monopolist leaves consumers with zero rents. In contrast to the observable prices case--when consumer welfare is initially decreasing in $\kappa$, then increasing, then decreasing again (after entering the region in which the monopolist does not induce learning)--when prices are hidden, consumer welfare is initially increasing in $\kappa$ then decreasing. In the limit, as $\kappa$ explodes, both profit and consumer welfare are the same regardless of when consumers may observe the price.
\begin{lemma}\label{monopwelfarehid}
The monopolist's profit is strictly increasing in $\kappa$ and $\mu$. Consumer welfare is strictly increasing in $\mu$ and is (strictly) increasing in $\kappa$ if and only if $\ubar{p}_{M} \kappa \lesssim .337$, i.e., if and only if $\kappa$ is sufficiently low. 
\end{lemma}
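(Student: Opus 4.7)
The plan is to exploit the closed-form expressions from Theorem \ref{monopvalue} and the implicit characterization of $\ubar{p}_{M}$ in equation \ref{eq8}. The first claim is immediate: since $\hat{\Pi}_{M}^{*} = \ubar{p}_{M}$, strict monotonicity of profit in both $\mu$ and $\kappa$ follows directly from Lemma \ref{monopprice}. For the welfare claims, I will write $\hat{\Phi}_{M}^{*} = \kappa(\mu - \ubar{p}_{M} - a)^2 + a$ and differentiate, using the implicit function theorem on \ref{eq8} to obtain expressions for $\partial \ubar{p}_{M}/\partial \mu$ and $\partial \ubar{p}_{M}/\partial \kappa$. Throughout, I will use the handy identity $\mu - \ubar{p}_{M} - a = \ubar{p}_{M}\ln(1 + 1/(2\kappa \ubar{p}_{M}))$, which rearranges \ref{eq8}; in particular this identity ensures $\mu - \ubar{p}_{M} - a > 0$.

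For monotonicity in $\mu$, the chain rule gives $\partial \hat{\Phi}_{M}^{*}/\partial \mu = 2\kappa(\mu - \ubar{p}_{M} - a)(1 - \partial \ubar{p}_{M}/\partial \mu)$. Setting $y := 2\kappa \ubar{p}_{M}$ and $L := \ln(1 + 1/y)$, implicit differentiation of \ref{eq8} yields $\partial \ubar{p}_{M}/\partial \mu = (y+1)/(y + (y+1)L \cdot (y+1)/y) \cdot \text{(const.)}$; after simplification, the inequality $\partial \ubar{p}_{M}/\partial \mu < 1$ reduces to $(y+1)L > 1$, i.e., $(y+1)\ln(1+1/y) > 1$ for all $y > 0$. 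This follows from the standard inequality $\ln(1+x) > x/(1+x)$ applied with $x = 1/y$. Combined with the positivity of $\mu - \ubar{p}_{M} - a$, consumer welfare is strictly increasing in $\mu$.

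For the $\kappa$-dependence, the derivative is $\partial \hat{\Phi}_{M}^{*}/\partial \kappa = (\mu - \ubar{p}_{M} - a)\bigl[(\mu - \ubar{p}_{M} - a) - 2\kappa \cdot \partial \ubar{p}_{M}/\partial \kappa\bigr]$. Substituting $\mu - \ubar{p}_{M} - a = \ubar{p}_{M} L$ and the implicit-differentiation expression for $\partial \ubar{p}_{M}/\partial \kappa$, the bracketed factor has the same sign as the function
\[
\phi(y) := (y+1)L^2 + yL - 2, \qquad L = \ln\bigl((y+1)/y\bigr),
\]
so the sign of $\partial \hat{\Phi}_{M}^{*}/\partial \kappa$ depends only on $y = 2\kappa \ubar{p}_{M}$. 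One checks that $\phi(y) \to +\infty$ as $y \to 0^{+}$ and $\phi(y) \to -1$ as $y \to \infty$; I will show $\phi$ is strictly decreasing (equivalently, $\phi'(y) = L^2 + L - 2L/y - 1/(y+1) < 0$, which after clearing denominators becomes a polynomial-plus-log inequality that can be verified by an elementary bounding argument on $L$). Then $\phi$ has a unique root $y^{*}$, and numerically $y^{*} \approx 0.675$, giving the threshold $\ubar{p}_{M}\kappa = y^{*}/2 \approx 0.337$.

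The main obstacle is the last step---verifying that $\phi$ is strictly decreasing (to turn the sign condition into the ``if and only if'' in the statement) and computing $y^{*}$ sharply enough to report the numerical threshold $0.337$. Because the expression mixes polynomials and logarithms, this is most naturally done by numerical computation, consistent with the paper's reliance on a Mathematica supplementary appendix for the comparative statics.
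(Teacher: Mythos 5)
Your proposal is correct and follows essentially the same route as the paper's proof: profit monotonicity via Lemma \ref{monopprice}, differentiation of $\hat{\Phi}_{M}^{*}=\kappa(\mu-a-\ubar{p}_{M})^{2}+a$ with the implicit function theorem, the substitution $z=2\kappa\ubar{p}_{M}$ together with the identity $2\kappa(\mu-a-\ubar{p}_{M})=z\ln\{(z+1)/z\}$, and reduction to a one-variable sign condition with numerical root at $\kappa\ubar{p}_{M}\approx .337$ (your $\phi(y)=(y+1)L^{2}+yL-2\geq 0$ is the paper's inequality (A9) with denominators cleared). Your added step of verifying that $\phi$ is strictly decreasing is a welcome explicit justification of the ``if and only if'' that the paper delegates to its supplementary Mathematica computations.
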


\subsection{The Monopolistic Competition Equilibrium}

If there exists an equilibrium with learning and active search, the equilibrium is precisely that given in the monopolist scenario, where we substitute a consumer's continuation value, $\hat{\Phi}-c$, in for $a$. Some simple algebra yields the following:

\begin{theorem}
If $\mu > 0$, there is an essentially unique equilibrium in the monopolistic competition scenario. The equilibrium involves active search if and only if $\mu - \sqrt{c/\kappa} > \ubar{p}$, where the price $\ubar{p}$ solves
\[\ln{\left\{1 + \frac{1}{2\kappa\ubar{p}}\right\}} = \frac{1}{\ubar{p}}\sqrt{\frac{c}{\kappa}} \text{ .}\]
The empirical distribution over consumer valuations is
\[F\left(x\right) \coloneqq 1 - \frac{\mu - \sqrt{\frac{c}{\kappa}}}{x}, \quad \text{on} \quad \left[\mu - \sqrt{\frac{c}{\kappa}},\mu - \sqrt{\frac{c}{\kappa}}+\frac{1}{2\kappa}\right] \text{ ,}\]
and the firms choose uniform distributions over prices 
\[G\left(p\right) \coloneqq 2 \kappa\left(p - \ubar{p}\right), \quad \text{on} \quad \left[\ubar{p},\bar{p}\right] \text{ .}\]
The equilibrium vector of payoffs is \[\hat{\Pi} = \ubar{p}, \quad \text{and} \quad \hat{\Phi} = \mu - \ubar{p} + c - \sqrt{\frac{c}{\kappa}} \text{ .}\]

If $\mu - \sqrt{c/\kappa} \leq \ubar{p}$, the unique equilibrium is given in Theorem \ref{monopvalue}, with $a = 0$.
\end{theorem}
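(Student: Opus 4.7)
The plan is to reduce the monopolistic competition problem to the already-solved monopoly problem of Theorem \ref{monopvalue} by exploiting stationarity: in any symmetric equilibrium with active search, the game faced by a single firm whose consumers possess continuation value $\hat{\Phi}-c$ is identical to the monopoly pricing game against a consumer with outside option $a = \hat{\Phi}-c$. Hence I can lift the entire characterization (pricing distribution $G$, empirical distribution $F$, and payoffs) directly from Theorem \ref{monopvalue}, replacing $a$ with $\hat{\Phi}-c$, and then pin down $\hat{\Phi}$ via a fixed-point condition.

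Specifically, from Theorem \ref{monopvalue} the consumer payoff is $\hat{\Phi}_M^{*} = \kappa(\mu - \ubar{x}_M)^2 + a$ with $\ubar{x}_M = \ubar{p}_M + a$. Substituting $a = \hat{\Phi}-c$ yields
\[
\hat{\Phi} = \kappa\bigl(\mu - \ubar{p} - \hat{\Phi} + c\bigr)^2 + \hat{\Phi} - c,
\]
so $\sqrt{c/\kappa} = \mu - \ubar{p} - \hat{\Phi} + c$, and therefore $\hat{\Phi} = \mu - \ubar{p} + c - \sqrt{c/\kappa}$, as claimed. This also forces $a = \mu - \ubar{p} - \sqrt{c/\kappa}$, which when fed into the implicit equation \ref{eq8} for $\ubar{p}_M$ collapses the right-hand side $(\mu-a)/\ubar{p} - 1$ to $\sqrt{c/\kappa}/\ubar{p}$, giving the advertised equation for $\ubar{p}$. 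Plugging $\ubar{x}_M = \ubar{p} + a = \mu - \sqrt{c/\kappa}$ into the Pareto formula from Theorem \ref{monopvalue} produces the support and shape of $F$ stated in the theorem, and the uniform $G$ carries over verbatim.

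Next I need to verify feasibility of this candidate as a genuine active-search equilibrium. Active search requires $a \geq 0$, i.e., $\mu - \sqrt{c/\kappa} \geq \ubar{p}$; strict inequality ensures that continuing beats the outside option of $0$, which is exactly the condition stated. Existence and uniqueness of a solution $\ubar{p}$ to the implicit equation follow from monotonicity arguments used in Theorem \ref{monopvalue}: the left-hand side $\ln(1 + 1/(2\kappa\ubar{p}))$ is strictly decreasing from $+\infty$ to $0$ in $\ubar{p}$, while the right-hand side $\sqrt{c/\kappa}/\ubar{p}$ is also strictly decreasing from $+\infty$ to $0$, but with a strictly smaller magnitude of decrease near infinity (the log dominates), giving a unique crossing. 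When $\mu - \sqrt{c/\kappa} \leq \ubar{p}$, no active-search equilibrium can be sustained, so the consumer visits a single firm and the game reduces to the monopoly setting with $a = 0$, whose unique equilibrium is precisely that of Theorem \ref{monopvalue}.

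The main obstacle will be justifying the reduction in the first paragraph rigorously: one must argue that a deviant firm's best response, given the other firms' equilibrium mixture and the consumer's optimal search-and-information protocol, truly coincides with the monopolist's best response against outside option $\hat{\Phi}-c$. This follows from the stationarity of the random search environment (after a deviation, the consumer's beliefs about other firms are unchanged, as in the main specification) together with Lemma \ref{lemmanopure}, which rules out pure-strategy deviations whenever the deviating firm still wishes to trade. The remaining arguments about uniqueness of the fixed point and the boundary condition for active search are routine monotonicity checks, so no novel machinery beyond Theorem \ref{monopvalue} is needed.
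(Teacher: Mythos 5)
Your proposal matches the paper's own argument: the paper proves this theorem exactly by substituting the continuation value $\hat{\Phi}-c$ for the outside option $a$ in Theorem \ref{monopvalue} and solving the resulting fixed point, which is the ``simple algebra'' you carry out correctly. The only loose step is your uniqueness argument for the root of the implicit equation (two strictly decreasing curves can in principle cross more than once); the paper instead shows that $\varphi(\ubar{p}) \coloneqq \ubar{p}\ln\left\{1+\tfrac{1}{2\kappa\ubar{p}}\right\} - \sqrt{c/\kappa}$ is strictly increasing in $\ubar{p}$ on $(0,\infty)$ with range $\left(-\sqrt{c/\kappa},\, \tfrac{1}{2\kappa}-\sqrt{c/\kappa}\right)$, which delivers existence and uniqueness under the standing assumption $c<1/(4\kappa)$.
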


Because $\ubar{p}$ is only given implicitly, the condition that demarcates the active search equilibrium is difficult to parse (since it itself is in terms of the equilibrium object $\ubar{p}$). As a result, it is helpful to supplement the theorem with the following proposition.

\begin{proposition}\label{auxhelp}
For any search cost $c$, there is a $\tilde{\mu}\left(c\right) > 0$ such that if $\mu < \tilde{\mu}\left(c\right)$ the equilibrium does not involve active search irrespective of $\kappa$. If $\mu > \tilde{\mu}\left(c\right)$, there exist $\ubar{\kappa}\left(c,\mu\right) > 0$ and $\bar{\kappa}\left(c,\mu\right) < 1/\left(4c\right)$ (with $\bar{\kappa} > \ubar{\kappa}$) such that the equilibrium involves active search if and only if $\kappa \in \left[\ubar{\kappa},\bar{\kappa}\right]$.
\end{proposition}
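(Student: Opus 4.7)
The plan is to reduce the active-search condition $\mu > h(\kappa) := \ubar{p}(\kappa) + \sqrt{c/\kappa}$ to an analysis of a single scalar function, show that $h$ blows up at both ends of its domain $(0, 1/(4c))$, and then establish strict quasi-convexity so that every sub-level set of $h$ is an interval.

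First, I would reparametrize by $w := 1/(2\kappa\ubar{p})$. The equation pinning down $\ubar{p}$ in the theorem rearranges to $2w\sqrt{c\kappa} = \ln(1+w)$, which yields the explicit formulas
\[
\kappa = \frac{[\ln(1+w)]^2}{4cw^2}, \qquad \ubar{p} = \frac{2cw}{[\ln(1+w)]^2}, \qquad \sqrt{c/\kappa} = \frac{2cw}{\ln(1+w)}.
\]
Since $w - (1+w)\ln(1+w) < 0$ for $w>0$ (it vanishes at $0$ with derivative $-\ln(1+w) < 0$), $\kappa$ is strictly decreasing in $w$ and maps $(0,\infty)$ bijectively onto $(0, 1/(4c))$, so analyzing $h$ on its domain is equivalent to analyzing
\[
\tilde h(w) \;=\; \frac{2cw\bigl(1+\ln(1+w)\bigr)}{[\ln(1+w)]^2}.
\]
Expanding near the endpoints, $\tilde h(w) \sim 2c/w$ as $w \to 0^+$ and $\tilde h(w) \sim 2cw/\ln w$ as $w \to \infty$, so $\tilde h \to \infty$ at both boundaries. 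Continuity then yields a strictly positive minimum, which I define as $\tilde\mu(c)$. Whenever $\mu < \tilde\mu(c)$, $h(\kappa) \geq \tilde\mu(c) > \mu$ on the entire admissible range of $\kappa$, so $\mu - \sqrt{c/\kappa} \leq \ubar{p}$ holds throughout and active search never arises.

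For the second part I would show $\tilde h$ has a unique critical point, making $h$ strictly quasi-convex in $\kappa$. Writing $L := \ln(1+w)$, differentiation gives
\[
\tilde h'(w) = \frac{2c}{(1+w)\, L^3}\, g(w), \qquad g(w) := L + (1+w)L^2 - 2w.
\]
A direct computation then yields $g(0) = 0$, $g'(0) = -1$, and
\[
g''(w) = \frac{1 + 2w + 2(1+w)\ln(1+w)}{(1+w)^2} > 0 \quad\text{for all } w>0.
\]
Hence $g$ is strictly convex on $[0,\infty)$, starts at $0$ with a negative slope, and tends to $+\infty$, so $g$ has a unique positive root $w^{\ast}$, with $g<0$ on $(0, w^{\ast})$ and $g>0$ on $(w^{\ast}, \infty)$. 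Therefore $\tilde h$ strictly decreases on $(0, w^{\ast})$ and strictly increases on $(w^{\ast}, \infty)$, and after undoing the monotonic change of variables the same is true of $h$ on either side of $\kappa^{\ast} := \kappa(w^{\ast}) \in (0, 1/(4c))$.

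The conclusion is then mechanical. When $\mu > \tilde\mu(c)$, $\{\kappa : h(\kappa) < \mu\}$ is a nonempty sub-level set of a strictly quasi-convex continuous function that diverges at both endpoints of its domain, hence an open interval whose closure $[\ubar{\kappa}(c,\mu), \bar{\kappa}(c,\mu)]$ satisfies $0 < \ubar{\kappa} \leq \kappa^{\ast} \leq \bar{\kappa} < 1/(4c)$, and active-search equilibria obtain exactly on that interval. The chief technical obstacle in this plan is the verification that $g''>0$; once this convexity is in hand, both the existence of a unique minimum of $h$ and the interval characterization of the active-search region follow without further work.
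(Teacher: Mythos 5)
Your proof is correct, and it takes a genuinely different (and in some ways cleaner) route than the paper's. The paper works with two implicitly defined price functions: it establishes that the monopoly lower bound $\ubar{p}_{M}$ is strictly concave in $\kappa$, increasing in $\mu$, and runs from $0$ to a finite value, while the competitive lower bound $\ubar{p}$ is strictly convex in $\kappa$ and explodes at both endpoints of $\left(0,1/\left(4c\right)\right)$; the interval structure of the active-search region is then assembled from these shape facts, with the key convexity step resting on an implicit-function-theorem computation of $\ubar{p}''\left(\kappa\right)$ that reduces to a cubic-in-$\ln$ inequality the paper asserts without proof. You instead observe that the active-search condition $\mu > \ubar{p}\left(\kappa\right)+\sqrt{c/\kappa}$ depends on $\mu$ only through a threshold function $h\left(\kappa\right)$ of $\left(c,\kappa\right)$ alone, and the substitution $w = 1/\left(2\kappa\ubar{p}\right)$ renders $h$ fully explicit, so that quasi-convexity reduces to the sign of $g\left(w\right)=L+\left(1+w\right)L^{2}-2w$, which you settle by the elementary convexity check $g''>0$ together with $g\left(0\right)=0$, $g'\left(0\right)=-1$. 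This buys you a self-contained argument that never invokes the monopoly price function and replaces the paper's unverified inequality with a transparent one; I checked your formulas for $\kappa\left(w\right)$, $\ubar{p}\left(w\right)$, $\tilde h$, $\tilde h'$, and $g''$ and they are all correct, as is the monotone bijectivity of $w\mapsto\kappa$ and the endpoint asymptotics. The only loose end is cosmetic: the sub-level set $\left\{h<\mu\right\}$ is open, so strictly speaking active search holds on the open interval rather than its closure $\left[\ubar{\kappa},\bar{\kappa}\right]$, but the paper's own statement glosses over the same boundary cases.
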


Thus, we see that either $\mu$ is so low that there is no active search, regardless of $\kappa$; or it is sufficiently high that there is active search if and only if $\kappa$ lies in an intermediate region.

There are a number of interesting implications of this equilibrium. Strikingly, despite the homogeneity of the market, and even though search is not directed, the unique equilibrium exhibits price dispersion. Furthermore, there may also be significant diversity of consumer behavior. Namely, the empirical distribution over consumer valuations may be driven by varying levels of learning by different consumers: some learn a lot at each firm while others learn very little. The unique equilibrium; therefore, justifies a broad spectrum of different behavior by consumers, even though they are identical. 

\subsubsection{Monopolistic Competition Comparative Statics}

\begin{proposition}\label{costathid1}
The lower bound for the price, $\ubar{p}$, is strictly increasing in $c$. In the active-search region (if it exists), $\ubar{p}$ is independent of $\mu$ and is (strictly) decreasing in $\kappa$ if and only if $h\left(c,\kappa\right) \coloneqq \ln{\left(\sqrt{c \kappa}\right)} + 2 - 2\sqrt{c\kappa} \leq (<) \ 0$. Outside of the active-search region, $\ubar{p}$ is strictly increasing in $\kappa$ and $\mu$. 
\end{proposition}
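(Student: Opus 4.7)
The plan is to apply the implicit function theorem to the defining equations in each of the two regions and then, in the active-search region, convert the resulting sign conditions into the clean statement involving $h$. Independence of $\mu$ in the active-search region is immediate because the defining equation
\[\Psi(\ubar{p};c,\kappa) \coloneqq \ln\!\left(1+\frac{1}{2\kappa\ubar{p}}\right) - \frac{1}{\ubar{p}}\sqrt{\frac{c}{\kappa}} = 0\]
does not involve $\mu$. Outside the active-search region, $\ubar{p}$ coincides with the monopoly lower bound $\ubar{p}_M$ of Theorem \ref{monopvalue} evaluated at $a=0$, so strict monotonicity in $\kappa$ and $\mu$ follows immediately from Lemma \ref{monopprice}.

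For the derivative in $c$ in the active-search region, I would first establish $\partial\Psi/\partial\ubar{p} > 0$. Using the defining equation to substitute $\sqrt{c/\kappa} = \ubar{p}\ln(1+1/(2\kappa\ubar{p}))$ and setting $y \coloneqq 2\kappa\ubar{p}$, the partial simplifies to a positive multiple of $(y+1)\ln(1+1/y) - 1$, whose positivity for $y > 0$ is a one-line calculus check (the function $(1+z)\ln(1+z) - z$ vanishes at $z=0$ and has derivative $\ln(1+z)>0$). Since $\partial\Psi/\partial c = -1/(2\ubar{p}\sqrt{c\kappa}) < 0$, the implicit function theorem yields $d\ubar{p}/dc > 0$.

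The main obstacle is the $\kappa$ comparative static. The key step is the reparametrization $s \coloneqq \sqrt{c\kappa}$: combining the defining equation with the identity $1 + 1/(2\kappa\ubar{p}) = e^L$ for $L \coloneqq \ln(1+1/(2\kappa\ubar{p}))$ yields the relation $L = 2s(e^L - 1)$, which implicitly defines $L$ as a strictly decreasing bijection from $s \in (0,1/2)$ onto $L \in (0,\infty)$ (the bound $s < 1/2$ being guaranteed by the maintained assumption $c < 1/(4\kappa)$). Writing $\ubar{p} = \sqrt{c/\kappa}/L(s)$ and differentiating in $\kappa$, then using the implicit formula $L'(s) = L/(s(1-2s-L))$ together with the fact that $1-2s-L < 0$ on the equilibrium locus (a direct consequence of $e^L(1-L)<1$ for $L>0$), one obtains $d\ubar{p}/d\kappa \leq 0 \iff L + 2s \geq 2$. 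Via the identity $L + 2s = Le^L/(e^L-1)$, this is equivalent to $(2-L)e^L \leq 2$.

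To convert this $L$-condition into the stated $h$-condition, I would identify the boundary value $L^*$ solving $(2-L^*)e^{L^*} = 2$: then $e^{L^*} = 2/(2-L^*)$ gives $2s^* = L^*/(e^{L^*}-1) = 2-L^*$, so $2s^* + L^* = 2$ and $L^* = -\ln s^*$; substituting yields $\ln s^* + 2 - 2s^* = 0$, i.e., $h(s^*) = 0$. Since $L$ is strictly decreasing in $s$ and $h$ is strictly increasing on $(0,1/2)$ with $h(1/2) = 1 - \ln 2 > 0$ and $h(0^+) = -\infty$, there is a unique $s^* \in (0,1/2)$ with $h(s^*)=0$, and the chain $L \geq L^* \Longleftrightarrow s \leq s^* \Longleftrightarrow h(c,\kappa) \leq 0$ delivers the proposition. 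Strict versions follow by tracking strict inequalities throughout.
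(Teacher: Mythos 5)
Your proposal is correct: every step checks out (I verified the identities $L=2s(e^L-1)$, $L'(s)=L/(s(1-2s-L))$ with $1-2s-L=(e^L(1-L)-1)/(e^L-1)<0$, the equivalence $d\ubar{p}/d\kappa\le 0\Leftrightarrow L+2s\ge 2\Leftrightarrow (2-L)e^L\le 2$, and the translation to $h(s^*)=0$ via $e^{L^*}=1/s^*$), and it proves the full statement. The high-level strategy matches the paper's: the implicit function theorem applied to the active-search defining equation (your $\Psi$ is the paper's $\varphi$ divided by $\ubar{p}$, so the monotonicity-in-$\ubar{p}$ facts coincide at the root), the observation that $\mu$ is absent from that equation, and an appeal to the monopoly result (Lemma \ref{monopprice} with $a=0$) outside the active-search region. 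Where you genuinely diverge is the $\kappa$ comparative static. The paper computes $\varphi_\kappa$ directly, notes it changes sign at the threshold $\ubar{p}=\sqrt{c}/(2\sqrt{\kappa}-2\kappa\sqrt{c})$, and tests whether the equilibrium root lies above or below this threshold by plugging the threshold into the monotone function $\varphi$; the condition $h(c,\kappa)\le 0$ drops out of that single evaluation. You instead reparametrize the equilibrium locus as the one-dimensional curve $L=2s(e^L-1)$ in $(s,L)=(\sqrt{c\kappa},\ln(1+1/(2\kappa\ubar{p})))$ and carry the sign condition through several identities. The paper's route is shorter; yours is longer but buys two things: existence and uniqueness of the root come for free from the bijection $s\mapsto L$ on $(0,1/2)$ (the paper needs a separate claim for this), and the reduction to $s=\sqrt{c\kappa}$ makes it transparent from the outset why the cutoff can only depend on $c$ and $\kappa$ through their product.
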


A simple calculation (or a glance at its graph) reveals that $h$ is strictly increasing in $\kappa$. Moreover, $\lim_{\kappa \searrow 0} h = -\infty$ and $h\left(1/\left(4\kappa\right)\right) = \ln\left(1/2\right) + 1 > 0$, so $h$ has a unique root at some $\tilde{\kappa} \in \left(0, 1/\left(4c\right)\right)$.

Because a firm's profit is merely the lower bound for the support of the price distribution, the following result is evident.

\begin{corollary}
A firm's profit is strictly increasing in $c$. In the active-search region, a firm's profit is independent of $\mu$ and is (strictly) decreasing in $\kappa$ if and only if $h\left(c,\kappa\right) \leq (<) \ 0$. Outside of the active-search region, profits are strictly increasing in $\kappa$ and $\mu$.
\end{corollary}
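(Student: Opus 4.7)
The plan is to read the corollary off directly from Proposition \ref{costathid1} by exploiting the identity $\hat{\Pi} = \ubar{p}$ that holds in both equilibrium regimes of this section. The corollary involves no new optimization: it is purely a translation of comparative statics about a scalar object ($\ubar{p}$) into comparative statics about the firm's profit, so the work is in checking that the identification is valid globally and then invoking the previous proposition clause-by-clause.

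First, I would verify that $\hat{\Pi} = \ubar{p}$ uniformly across the relevant parameter space. In the active-search region the theorem characterizing the monopolistic-competition equilibrium explicitly gives $\hat{\Pi} = \ubar{p}$. Outside the active-search region, the equilibrium reduces to the hidden-price monopoly characterization of Theorem \ref{monopvalue} evaluated at $a = 0$, and there one has $\hat{\Pi}_{M}^{*} = \ubar{p}_{M}$. Hence, in every part of the parameter space the firm's equilibrium profit coincides with the lower endpoint of the support of its (possibly degenerate) pricing distribution, which is precisely the object whose comparative statics Proposition \ref{costathid1} describes.

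Given this identification, each claim in the corollary follows line-by-line: the strict monotonicity of profit in $c$ mirrors that of $\ubar{p}$ in $c$; within the active-search region the independence of profit from $\mu$ and the ``(strictly) decreasing in $\kappa$ iff $h(c,\kappa) \leq 0$'' claim both transfer directly from Proposition \ref{costathid1}; and outside the active-search region the strict monotonicity of $\ubar{p}$ in $\kappa$ and $\mu$ again transfers verbatim.

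The natural candidate for a ``hard part'' here is merely bookkeeping at the boundary between the two regimes, where one must make sure that the comparative-static statements are consistent as the equilibrium switches from the monopolistic-competition form to the hidden-price monopoly form. However, since $\hat{\Pi} = \ubar{p}$ is the common formula on both sides of this boundary, and since Proposition \ref{costathid1} already partitions its conclusions by regime, no extra argument is required and the corollary follows immediately.
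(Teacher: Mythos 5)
Your proposal is correct and matches the paper's own (implicit) argument: the paper states the corollary as evident precisely because a firm's profit equals the lower bound $\ubar{p}$ of the price support in both regimes, so the comparative statics transfer directly from Proposition \ref{costathid1}. Your additional care in checking the identification $\hat{\Pi} = \ubar{p}$ across both the active-search and no-active-search regimes is exactly the bookkeeping the paper leaves to the reader.
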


A direct calculation reveals that a consumer's welfare is strictly decreasing in $c$. On the other hand, note that 
\[\hat{\Phi}'\left(\kappa\right) = - \ubar{p}'\left(\kappa\right) + \frac{1}{2\kappa}\sqrt{\frac{c}{\kappa}} \text{ ,}\]
and so if the price is decreasing in $\kappa$, a consumer's welfare is increasing in $\kappa$. Naturally, even if the price is increasing in $\kappa$, a consumer's welfare may still be increasing in $\kappa$, provided the price increase is not too large. We have the following lemma.

\begin{lemma}\label{payoffkappa}
A consumer's payoff, $\hat{\Phi}$, is strictly decreasing in $c$. In the active-search region, $\hat{\Phi}$ is independent of $\mu$ and is (strictly) increasing in $\kappa$ if and only if $\kappa \ (<) \leq \kappa'$, where $\kappa' \in \left(\tilde{\kappa}, 1/\left(4c\right)\right)$. Outside of the active search region, a consumer's payoff is (strictly) increasing in $\kappa$ if and only if $\kappa$ is sufficiently low.
\end{lemma}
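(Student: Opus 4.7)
The plan is to work directly with the closed-form payoff $\hat{\Phi} = \mu - \ubar{p} + c - \sqrt{c/\kappa}$ that holds in the active-search region, relying on Proposition \ref{costathid1} for the comparative statics of $\ubar{p}$, and to appeal to Lemma \ref{monopwelfarehid} in the complementary region where the equilibrium coincides with the hidden-price monopoly. For the $c$-direction in the active-search region, direct differentiation yields
\[
\frac{\partial \hat{\Phi}}{\partial c} \;=\; -\,\frac{\partial \ubar{p}}{\partial c} \;+\; 1 \;-\; \frac{1}{2\sqrt{c\kappa}},
\]
which is strictly negative because $\partial \ubar{p}/\partial c > 0$ (Proposition \ref{costathid1}) and the maintained assumption $c \leq 1/(4\kappa)$ forces $1/(2\sqrt{c\kappa}) \geq 1$. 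The $\mu$-independence claim follows because $\ubar{p}$ itself is $\mu$-free in this region, so $\mu$ enters $\hat{\Phi}$ only additively and all relevant derivatives are independent of $\mu$.

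The heart of the argument is the $\kappa$-direction in the active-search region. To sidestep the two-variable implicit equation for $\ubar{p}$, I would introduce the substitution $z := 2\kappa\ubar{p}$ and $s := \sqrt{c\kappa}$, under which the defining identity collapses to the single-variable relation $z \ln(1 + 1/z) = 2s$, so $z = z(s)$. In these variables $\hat{\Phi} = \mu + c - f(s)/(2\kappa)$ with $f(s) := z(s) + 2s$, and a chain-rule computation using $s_\kappa = s/(2\kappa)$ shows that $\mathrm{sgn}(\partial \hat{\Phi}/\partial \kappa)$ equals the sign of $2f(s) - s f'(s)$. Implicit differentiation of the defining identity gives $z'(s) = 2/[\ln(1+1/z) - 1/(z+1)]$, and after substitution the sign condition reduces, with $L := \ln(1+1/z)$, to the clean one-variable inequality
\[
(z+1)\, L\,(L+1) \;>\; L + 2,
\]
with strict increase in $\hat\Phi$ corresponding to strict inequality.

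The remaining technical step is to show that this inequality holds on a left interval and fails on its complement, with a unique crossover. The endpoint behaviour is routine: as $z \downarrow 0$, $L \to \infty$ forces the LHS to explode, while as $z \to \infty$, $L \sim 1/z$ drives the LHS to $1$ and the RHS to $2$, reversing the inequality. The main obstacle I anticipate is a single-crossing (monotonicity) argument for $\psi(z) := (z+1)L(L+1) - (L+2)$, which I would verify by a direct derivative calculation in $z$; this yields a unique crossing point $z^\star$, hence a unique $s^\star = z^\star L(z^\star)/2 \in (0,1/2)$ and $\kappa' = (s^\star)^2/c \in (0, 1/(4c))$. That $\kappa' > \tilde{\kappa}$ is a one-line consequence: at $\kappa = \tilde{\kappa}$, Proposition \ref{costathid1} gives $\partial \ubar{p}/\partial \kappa = 0$, so $\partial \hat{\Phi}/\partial \kappa = (1/(2\kappa))\sqrt{c/\kappa} > 0$, placing $\tilde{\kappa}$ strictly on the increasing side of $\kappa'$. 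Finally, outside the active-search region the equilibrium is the hidden-price monopoly with $a = 0$, so $\hat{\Phi}_{M}^{*} = \kappa(\mu - \ubar{p}_{M})^2$ does not depend on $c$, and the ``increasing in $\kappa$ iff $\kappa$ is sufficiently low'' claim is immediate from Lemma \ref{monopwelfarehid}.
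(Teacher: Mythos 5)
Your argument is correct and follows essentially the same route as the paper: differentiate $\hat{\Phi} = \mu - \ubar{p} + c - \sqrt{c/\kappa}$, use $\ubar{p}'(c)>0$ and $c\le 1/(4\kappa)$ for the $c$-direction, and reduce the sign of $\hat{\Phi}'(\kappa)$ to a one-variable inequality in $z = 2\kappa\ubar{p}$ via the equilibrium identity, with the crossover located in $(\tilde{\kappa},1/(4c))$ by the sign at $\tilde{\kappa}$ and the limiting behaviour. Your inequality $(z+1)L(L+1) > L+2$ is exactly the paper's condition $w(x)>0$ multiplied by the positive factor $zL$ (with $x=z/2$), and the single-crossing step you flag is likewise only asserted, not derived, in the paper.
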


\section{Who (if Anyone) Benefits From Hidden Prices?}\label{fees}

One interpretation of the ``hidden prices'' version of the model is as a market in which firms can introduce hidden fees late in the purchase. This issue has come to the forefront of recent policy debates; and in the United States of America alone, a number of bills have been proposed, or rules imposed, recently to restrict drip pricing and other forms of hidden (additional) prices.\footnote{In their background section, \cite*{santana2020consumer} include a nice discussion of such recent rules and proposed legislation. Additional papers with similar analyses of the effect of hidden fees and prices on consumer shopping behavior include \cite*{blake2018price} and \cite*{bradley2020hidden}. The modal explanation for the adverse effects of hidden fees is behavioral: and in particular price (and tax) salience is often cited. \textit{Viz.}, consumers do not fully perceive the extra price. In contrast, hidden prices affect behavior in our model strategically--our consumers are fully rational and understand the ramifications of the hold-up problem in which they find themselves.} Our paper; therefore, can provide another perspective on this debate, since we may compare the various market outcomes (profits, consumer welfare, and prices) when prices are observable before learning versus when they are not.

\textit{A priori} one might suppose that hidden prices are to consumers' detriment, since it affects their learning adversely--a consumer understands that learning that she loves the product may allow the firm to charge a high price to exploit this enthusiasm. Furthermore, it also removes the ability of consumers to react to price changes and thereby discipline firms. As we shortly discover, this prediction is correct: consumers prefer observable fees, both when there is a search market as well as when there is a single monopolistic seller of the product. It is less clear which information regime firms prefer. On one hand, firms might have more impunity to set higher prices since consumers have less flexibility to react. However, consumers are strategic and so may not learn as much as the firms would like when prices are hidden. 

%obviously benefit firms at the expense of consumers, since they can take advantage of consumer's na\"ive learning. But what if consumers are aware of firms' lack of commitment and therefore modify their information acquisition behavior accordingly? That is, what happens when a firm's price is unobservable until after a consumer has stopped acquiring information at the firm? 

We begin our analysis with the monopoly case.

\subsection{Hidden Versus Observable Prices With Monopoly}

For simplicity, we assume that the consumer's outside option is $0$. Recall that with observable prices, the payoffs for the monopolist and a consumer are, respectively,
\[\Pi_{M}^{*} = \begin{cases}
\mu - \frac{1}{4\kappa}, \quad &\mu \geq \frac{3}{4\kappa}\\
\frac{\left(1+4\kappa\mu\right)^2}{32\kappa}, \quad &\frac{3}{4\kappa} \geq \mu \geq - \frac{1}{4\kappa}\\
0, \quad &\mu \leq - \frac{1}{4\kappa}
\end{cases}, \quad \text{and} \quad \Phi_{M}^{*} = \begin{cases}
\frac{1}{4\kappa}, \quad &\mu \geq \frac{3}{4\kappa}\\
\frac{\Pi_{M}^{*}}{2}, &\frac{3}{4\kappa} \geq \mu\\
\end{cases} \text{ .}\]
In contrast, with hidden prices, the payoffs are
\[\hat{\Pi}_{M} = \begin{cases}
\ubar{p}_{M}, \quad &\mu > 0\\
0, \quad &\mu \leq 0
\end{cases}, \quad  \text{and} \quad \hat{\Phi}_{M} = \begin{cases}
\kappa \left(\mu - \ubar{p}_{M}\right)^2, \quad &\mu > 0\\
0, \quad &\mu \leq 0
\end{cases} \text{ ,}\]
where $\ubar{p}_{M}$ is given implicitly by
\[\ln{\left\{1 + \frac{1}{2\kappa\ubar{p}_{M}}\right\}} = \frac{\mu}{\ubar{p}_{M}} - 1 \text{ .}\]

Obviously, if the prior is too low ($\mu \leq -1/\left(4\kappa\right)$) there is no trade in either case. On the other hand, if $-1/\left(4\kappa\right) < \mu \leq 0$, there is trade if and only if prices are observed before learning. Thus, in that region of the parameter space, both monopolist and consumers are strictly better off when prices are not hidden.

If the prior is high ($\mu \geq 3/\left(4\kappa\right)$), the consumer is better off with observable prices if and only if 
\[\frac{1}{4\kappa} \geq \kappa \left(\mu - \ubar{p}_{M}\right)^2 \quad \Leftrightarrow \quad \ubar{p}_{M} \geq \mu - \frac{1}{2\kappa} \text{ .}\]
As we show in Appendix \ref{monopcompareproof}, this always holds. The monopolist is better off with observable prices if and only if $\ubar{p}_{M} \leq \mu - 1/\left(4\kappa\right)$. This must hold in this region of the parameter space as well.

For intermediate priors ($3/\left(4\kappa\right) \geq \mu > 0$), a consumer is better off with observable prices if and only if
\[\frac{\left(1+4\kappa\mu\right)^2}{64\kappa} \geq \kappa \left(\mu - \ubar{p}_{M}\right)^2 \quad \Leftrightarrow \quad \ubar{p}_{M} \geq \frac{\mu}{2} - \frac{1}{8\kappa} \text{ ,}\]
which is always true. Finally, the monopolist is better off with observable prices if and only if
\[\frac{\left(1+4\kappa\mu\right)^2}{32\kappa} \geq \ubar{p}_{M} \text{ ,}\]
which is also always true. 

The following proposition sums up our stark conclusion:
\begin{proposition}\label{monopcompare}
Both monopolist and consumers prefer prices to be observable before learning. 
\end{proposition}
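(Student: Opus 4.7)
The plan is to partition the parameter space by $\mu$ according to the thresholds arising in $\Pi_M^*$ and $\Phi_M^*$ and handle each region separately, in each case reducing the preference comparison to an inequality on the implicitly defined $\ubar{p}_M$. If $\mu \leq -1/(4\kappa)$ no trade occurs in either regime and both payoffs are zero. If $-1/(4\kappa) < \mu \leq 0$ the observable-price regime yields strictly positive profit and consumer surplus (by Expression \ref{exp8}), while Lemma \ref{lemmanopure} and Theorem \ref{monopvalue} jointly imply that the hidden-price regime admits no equilibrium with trade when $\mu \leq a = 0$; hence both parties strictly prefer observable prices.

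For the two remaining regions, the excerpt has already reduced the four preference comparisons to inequalities on $\ubar{p}_M$. Setting $x \coloneqq 1/(2\kappa\ubar{p}_M)$ and using the implicit equation $\mu = \ubar{p}_M(1 + \ln(1+x))$ to substitute out $\mu$, they become:
\begin{enumerate}[label=(\Alph*)]
\item consumer, $\mu \geq 3/(4\kappa)$: $\ln(1+x) \leq x$;
\item monopolist, $\mu \geq 3/(4\kappa)$: $\ubar{p}_M \ln(1+x) \geq 1/(4\kappa)$;
\item consumer, $0 < \mu < 3/(4\kappa)$: $\ln(1+x) \leq 1 + x/2$;
\item monopolist, $0 < \mu < 3/(4\kappa)$: $(1+4\kappa\mu)^2 \geq 32\kappa \ubar{p}_M$.
\end{enumerate}
(A) is the standard $\ln(1+x) \leq x$. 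For (C), the function $g(x) = 1 + x/2 - \ln(1+x)$ has $g'(x) = 1/2 - 1/(1+x)$ with unique root $x=1$ and $g'' > 0$, so $g$ attains its minimum at $x=1$ where $g(1) = 3/2 - \ln 2 > 0$, giving $g > 0$ on $(0,\infty)$. For (B), let $y = 2\kappa\ubar{p}_M$; the hypothesis $\mu \geq 3/(4\kappa)$ and the implicit equation give $y(1 + \ln(1+1/y)) \geq 3/2$, i.e., $y\ln(1+1/y) \geq 3/2 - y$. If $y < 1$ this yields $y\ln(1+1/y) > 1/2$ immediately; if $y \geq 1$ the Taylor bound $\ln(1+u) \geq u - u^2/2$ at $u = 1/y$ gives $y\ln(1+1/y) \geq 1 - 1/(2y) \geq 1/2$.

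The main obstacle is inequality (D), since $\ubar{p}_M$ is only implicitly defined and the inequality is quadratic in $\mu$. The plan is to eliminate $\mu$ via the implicit equation and reduce (D) to the single-variable claim $L(y) \coloneqq (1 + 2y(1+\ln(1+1/y)))^2 - 16y \geq 0$ on the range of $y = 2\kappa\ubar{p}_M$ corresponding to $0 < \mu \leq 3/(4\kappa)$. Since $L(y) \to 1$ as $y \to 0^+$ and since at the upper endpoint $y(1+\ln(1+1/y)) = 3/2$ one has $y < 1$ (as $y=1$ gives $1+\ln 2 > 3/2$), whence $L = 16(1-y) > 0$, the boundary values are strictly positive. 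Ruling out interior zeros will require analyzing $L'$ and using the monotonicity of $y \mapsto y(1+\ln(1+1/y))$; following the paper's practice for algebra-heavy steps, the supplementary Mathematica verification is likely the cleanest route to seal this last piece.
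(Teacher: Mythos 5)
Your overall strategy is the same as the paper's: reduce each of the four preference comparisons to a bound on $\ubar{p}_M$ and then verify a single-variable inequality. The difference is in parametrization: the paper exploits the monotonicity of $\iota$ in $\ubar{p}_M$ and evaluates $\iota$ at each candidate price, obtaining inequalities in $z=2\kappa\mu$ or $z=4\kappa\mu$, whereas you substitute $\mu=\ubar{p}_M\bigl(1+\ln(1+x)\bigr)$ out of the target inequalities and work in $x=1/(2\kappa\ubar{p}_M)$ or $y=2\kappa\ubar{p}_M$. Both routes are valid, and your treatments of (A), (B), and (C) are complete and in fact more careful than the paper's, which disposes of the analogous inequalities with unproven assertions (``which never holds,'' ``this clearly holds''). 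The low-$\mu$ and no-trade cases are handled identically.

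The one genuine gap is (D). Verifying that $L(y)=\bigl(1+2y(1+\ln(1+1/y))\bigr)^{2}-16y$ is positive at the two endpoints of the relevant $y$-interval does not rule out an interior zero, and deferring the interior analysis to a numerical check leaves the step unproven; the margin is not large (the minimum of $1+2m(y)-4\sqrt{y}$ is only about $0.27$), so this is not a formality. The step can be closed cleanly, however: since $1+2m(y)>0$, the claim is equivalent to $1+2y+2y\ln(1+1/y)\geq 4\sqrt{y}$, and applying the bound $\ln(1+u)\geq 2u/(2+u)$ gives $2y\ln(1+1/y)\geq 4y/(2y+1)$, after which the claim reduces (with $t=\sqrt{y}$) to $4t^{4}-8t^{3}+8t^{2}-4t+1\geq 0$; the derivative factors as $4(2t-1)(2t^{2}-2t+1)$, so the quartic has a unique minimum at $t=1/2$ with value $1/4>0$. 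This holds for all $y>0$, so no restriction to the sub-unit range is even needed. With that substitution your argument is complete and, if anything, tighter than the paper's own.
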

Figure \ref{fig3} depicts the monopolist's profits with observable prices and hidden prices as well as its price when prices are observable and the lower bound of its distribution over prices when prices are hidden. In some sense, this result is unsurprising. By allowing the monopolist the power to commit to its price, consumers have more incentive to learn, since they know that they will not be exploited.

\begin{figure}
    \centering
    \includegraphics[scale=.33]{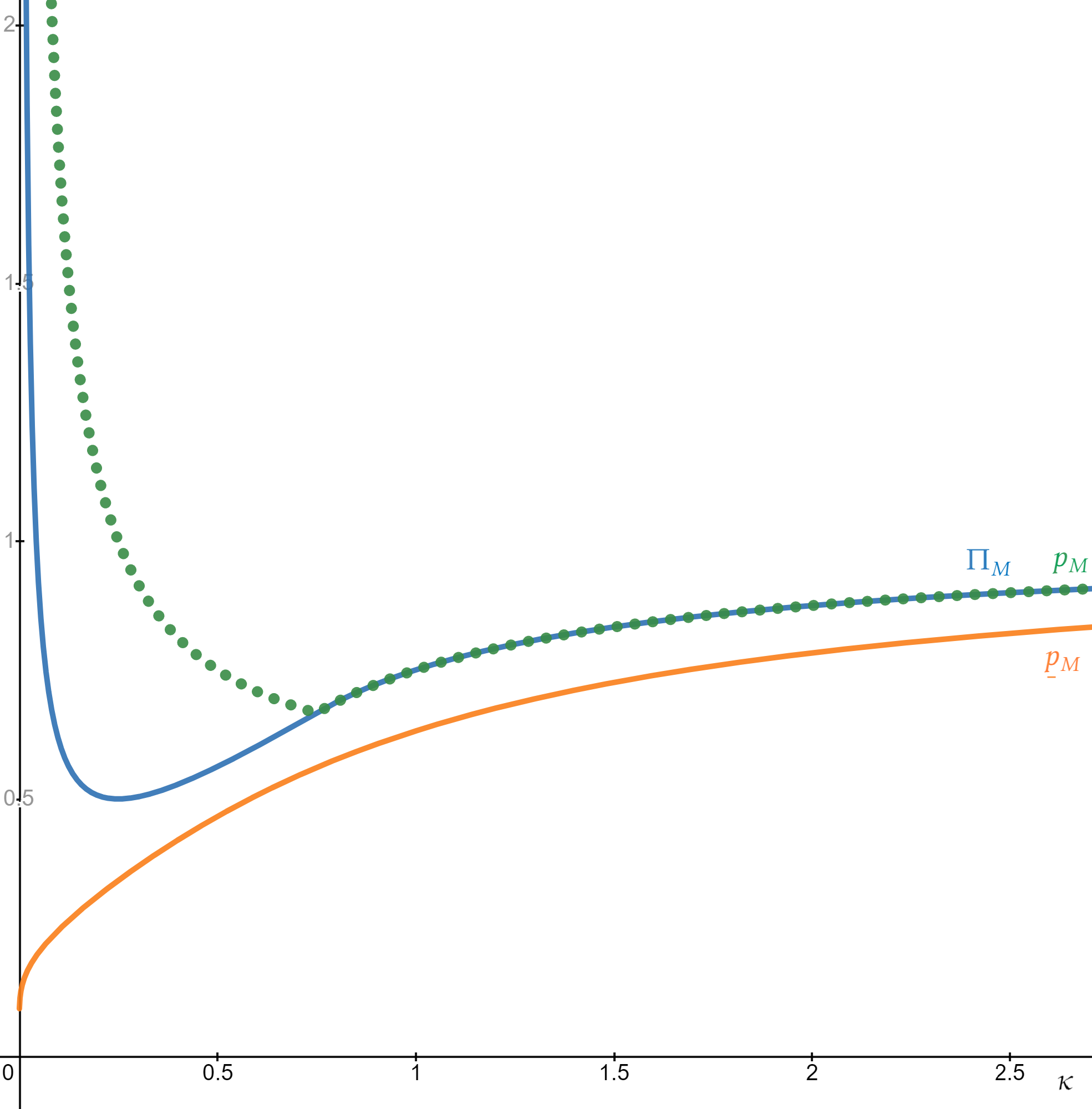}
    \caption{Monopolist profit with observable prices (solid blue), monopolist price with observable prices (dotted green), and monopolist profit and price lower bound with hidden prices (solid orange) as functions of $\kappa$.}
    \label{fig3}
\end{figure}

\subsection{Hidden Versus Observable Prices With Monopolistic Competition}

As we stated above, consumers prefer to observe prices before learning in the search market with multiple firms. The intuition is analogous to that when there is a single seller: they no longer have to worry about being exploited and can therefore learn optimally as a result.

\begin{proposition}\label{oligopco}
Consumer welfare when prices are observed before learning is strictly higher than when prices are observed after learning.
\end{proposition}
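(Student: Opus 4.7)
The plan is to split the parameter space into cases according to whether the equilibrium in each regime features active search, and in each case compare the explicit welfare expressions (or else reduce to Proposition \ref{monopcompare}). Observable prices admit active search if and only if $\mu \geq 2\sqrt{c/\kappa} - 1/(4\kappa)$, while hidden prices require $\mu > \ubar{p} + \sqrt{c/\kappa}$. A single algebraic fact---that $\ubar{p} > \sqrt{c/\kappa} - 1/(4\kappa)$---will both handle the case where both regimes are active and show that ``hidden active'' implies ``observable active'', leaving only three cases to treat.

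To establish this inequality, I would substitute $p_{0} := \sqrt{c/\kappa} - 1/(4\kappa)$ into $f(p) := \ln(1 + 1/(2\kappa p)) - \sqrt{c/\kappa}/p$ (which vanishes at $\ubar{p}$) and show $f(p_{0}) < 0$. After the substitution $u := 1/(4\sqrt{c\kappa} - 1)$, this reduces to $\ln(1 + 2u) < 1 + u$ for all $u > 0$, which I would verify by showing that $g(u) := 1 + u - \ln(1 + 2u)$ attains its minimum value $3/2 - \ln 2 > 0$ at $u = 1/2$. A direct calculation of $f'$ shows $f$ is single-peaked, and a leading-order expansion of $\ln(1 + 1/(2\kappa p))$ combined with the standing assumption $c < 1/(4\kappa)$ yields $f(p) \to 0^{+}$ as $p \to \infty$; together with $f(0^{+}) = -\infty$, this forces $\ubar{p}$ to be the unique positive root and $f(p_{0}) < 0$ to imply $\ubar{p} > p_{0}$. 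When $c\kappa \leq 1/16$, $p_{0} \leq 0$ and the inequality is trivial.

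With this in hand, Case A (both regimes active) follows from the identity $\Phi_{l}^{*} - \hat{\Phi} = \ubar{p} - p_{0}$. Case C (neither active) reduces directly to Proposition \ref{monopcompare} applied with $a = 0$, since each equilibrium coincides with its respective monopoly counterpart. Case B (observable active, hidden not) requires an interim monotonicity step: a firm facing a consumer with continuation value $a$ solves exactly the monopoly problem, so $\Phi_{l}^{*} = \Phi_{M}^{*}(a_{l})$ where $a_{l} := \Phi_{l}^{*} - c$. The observable active-search condition is precisely $a_{l} \geq 0$, and inspection of Expression \ref{exp8} shows $\Phi_{M}^{*}(\cdot)$ is weakly increasing in its argument, yielding $\Phi_{l}^{*} \geq \Phi_{M}^{*}(0) > \hat{\Phi}_{M}$ by Proposition \ref{monopcompare}. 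The main obstacle is Case A, since $\ubar{p}$ is only implicitly defined; the trick is to avoid any attempt at a sharp estimate of $\ubar{p}$ and instead exploit the global shape of $f$.
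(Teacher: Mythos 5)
Your proposal is correct and follows the same architecture as the paper's proof: the identical three-case decomposition, the same pivotal inequality $\ubar{p} > \sqrt{c/\kappa} - 1/(4\kappa)$ (the paper's Claim A.19, proved, as you do, by evaluating the implicit equation for $\ubar{p}$ at the candidate point and invoking monotonicity/uniqueness of the root --- your substitution $u = 1/(4\sqrt{c\kappa}-1)$ reducing matters to $\ln(1+2u) < 1+u$ is the same elementary inequality the paper writes as $(x-1)\ln\bigl(\tfrac{x+1}{x-1}\bigr) - x < 0$ with $x = 4\sqrt{c\kappa}$), and the same reduction of the ``neither active'' case to Proposition \ref{monopcompare}. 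The one place you genuinely diverge is the mixed case (observable active, hidden not): the paper disposes of it in two lines via $\Phi_l = \mu + \tfrac{1}{4\kappa} + c - 2\sqrt{c/\kappa} \geq c > \kappa(\mu-\ubar{p})^2 = \hat{\Phi}_M$, where the first inequality is the active-search condition and the second uses $\mu - \ubar{p} < \sqrt{c/\kappa}$ from the failure of hidden-price active search. Your route --- writing $\Phi_l^* = \Phi_M^*(a_l)$ as a fixed point of the monopoly value, checking that $\Phi_M^*(\cdot)$ is weakly increasing in the outside option, and chaining through $\Phi_M^*(0) > \hat{\Phi}_M$ --- is valid (the derivative of the interior branch of $\Phi_M^*$ in $a$ is $1 - \tfrac{1}{8}(1+4\kappa(\mu-a)) > 0$ precisely because that branch requires $\mu - a \leq 3/(4\kappa)$), and it is arguably more conceptual in that it recycles Proposition \ref{monopcompare} rather than redoing an estimate; but it costs you an extra monotonicity verification that the paper's direct computation avoids. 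Your use of the single-peaked shape of $f = \varphi/\ubar{p}$ is also a slight detour: the paper works with $\varphi$ itself, which is globally strictly increasing, so no shape analysis is needed.
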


Surprising, the result from the monopoly scenario does not always carry over for firms. That is, it is possible that firms may actually prefer hidden prices. This occurs when $\mu$ and $\kappa$ are sufficiently high (in comparison to $c$). The explanation for this finding is as follows: if the parameters are such that there is active search when prices are hidden, there must be active search when prices are observed. Moreover, recall that when there is active search, a firm's profit is independent of both the prior quality and the level of information frictions and is determined only by $c$, which is quite low by assumption. In contrast, when prices are hidden (and there is active search), $c$ must be so low (or $\mu$ and $\kappa$ so high) that the lower bound of the equilibrium price, $\ubar{p}$ is greater than $2 c$, the profit in the observable prices set-up. Of course, $\ubar{p}$ is precisely a firm's profit when prices are hidden, and so its profits are, therefore, higher.

When the parameters are such that consumers do not actively search in either observation regime, we are back in the monopolist setting and so observable prices are preferred by the firms. The third and final possibility is that the equilibrium involves active search if and only if prices are observable. This is a mix of the other two cases: if $\mu$ and $\kappa$ are sufficiently high, firms prefer hidden prices, but not if they are too low.

\begin{figure}[tbp]
\centering
\begin{subfigure}{.5\textwidth}
  \centering
  \includegraphics[scale=.15]{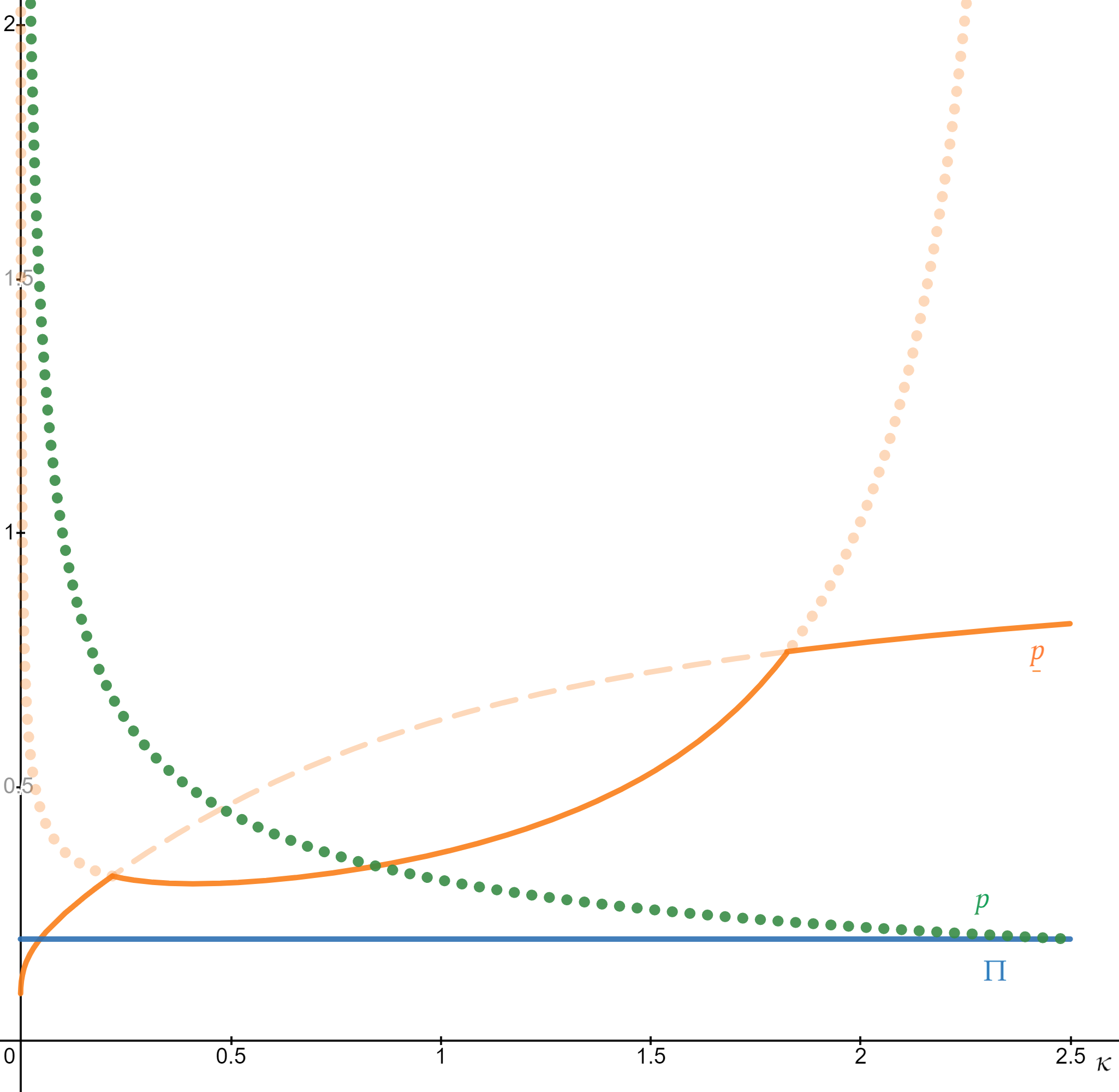}
  \caption{\textbf{Case 1:} $\mu > 0$ and $c \leq \mu/4$.}
  \label{figsub12}
\end{subfigure}%
\begin{subfigure}{.5\textwidth}
  \centering
  \includegraphics[scale=.15]{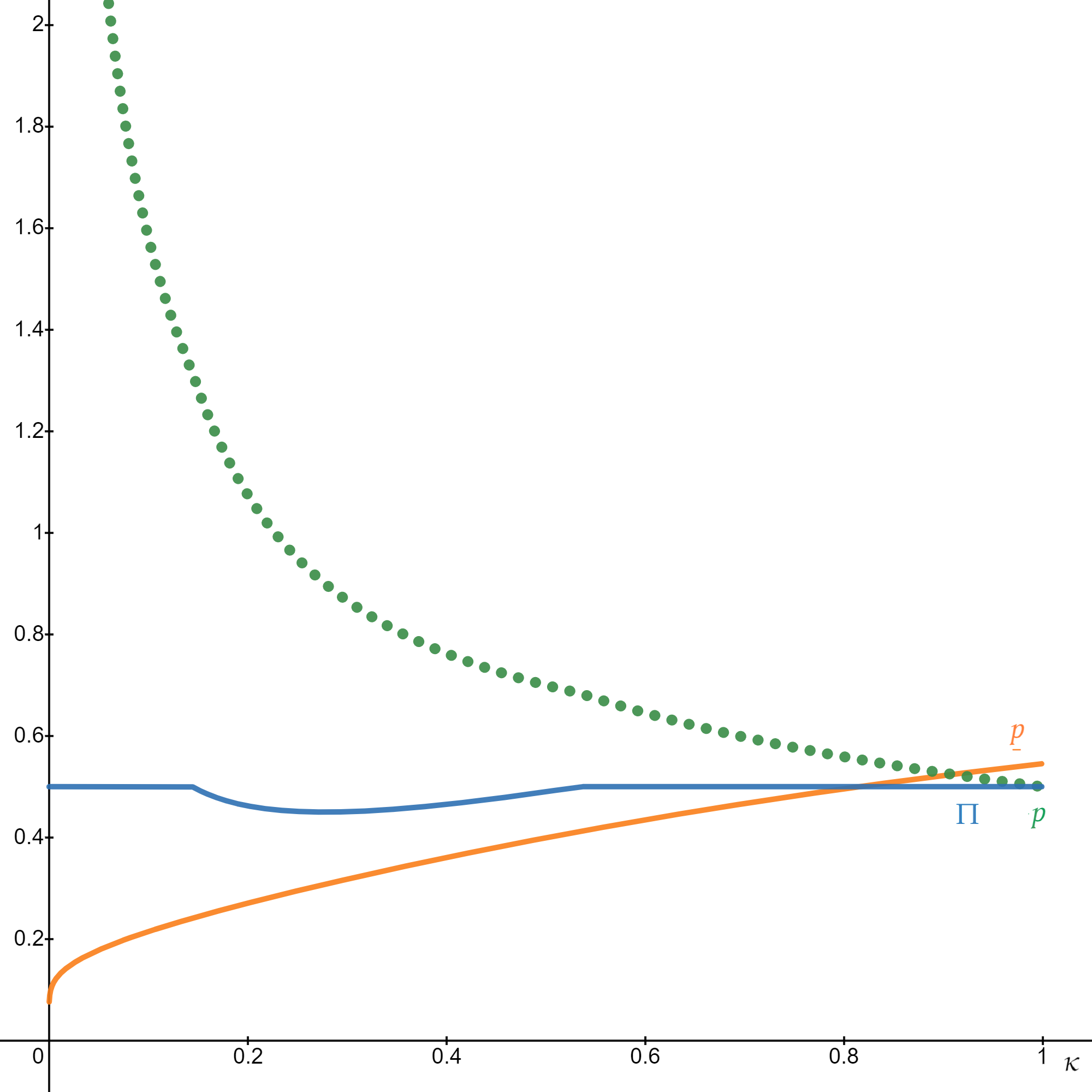}
  \caption{\textbf{Case 2:} $\mu > 0$ and $\mu/3 \geq c \geq \mu/4$.}
  \label{figsub22}
\end{subfigure}
\par
\bigskip
\par
\bigskip
\par
\begin{subfigure}{.5\textwidth}
  \centering
  \includegraphics[scale=.15]{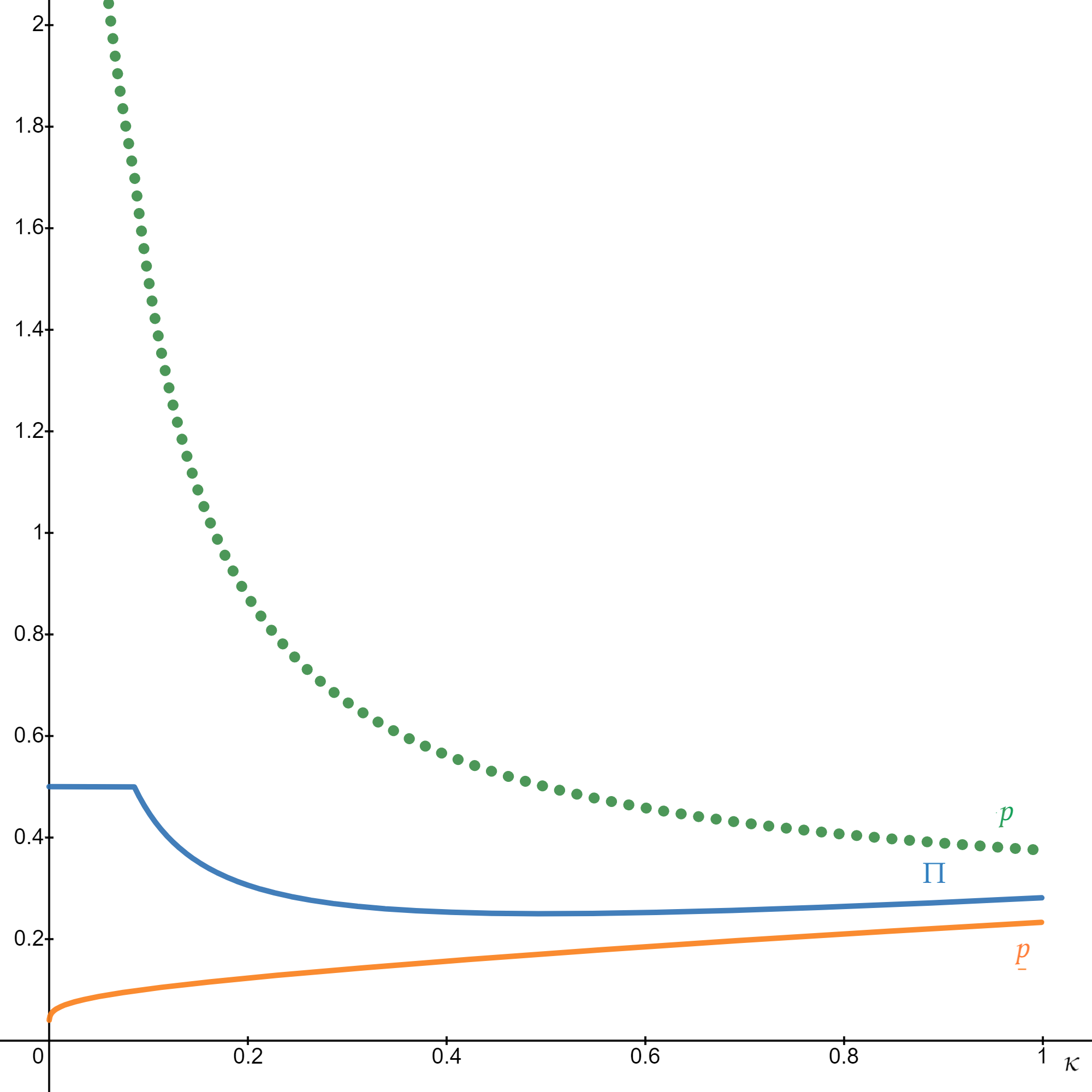}
  \caption{\textbf{Case 3:} $\mu > 0$ and $c \geq \mu/3$.}
  \label{figsub32}
\end{subfigure}
\caption{Profit with observable prices (solid blue), price with observable prices (dotted green), and profit and price lower bound with hidden prices (solid orange) as functions of $\kappa$. }
\label{fig4}
\end{figure}

\begin{proposition}\label{olicopco2}
Firms prefer that consumers observe prices before learning if and only if $\mu$ is sufficiently low.
\end{proposition}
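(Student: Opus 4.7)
The plan is to partition the $\mu$-axis, for fixed $c$ and $\kappa$ satisfying $c\kappa < 1/4$, into regions determined by whether each regime sustains active search, and then compare firm profits piecewise. Let $\mu^{o} \coloneqq 2\sqrt{c/\kappa} - 1/(4\kappa)$ denote the threshold above which the observable-prices equilibrium exhibits active search (with firm profit $\Pi^{\mathrm{obs}} = 2c$); below $\mu^{o}$ the monopolistic profit is $(1+4\kappa\mu)^{2}/(32\kappa)$. Let $\ubar{p}^{*}$ solve the active-search equation $\ln(1 + 1/(2\kappa\ubar{p})) = \sqrt{c/\kappa}/\ubar{p}$, and set $\mu^{h} \coloneqq \ubar{p}^{*} + \sqrt{c/\kappa}$: for $\mu \geq \mu^{h}$ the hidden-prices equilibrium features active search with profit $\hat{\Pi} = \ubar{p}^{*}$, while for $0 < \mu < \mu^{h}$ it reduces to the monopoly hidden-price solution of Theorem \ref{monopvalue} with $a = 0$, with profit $\hat{\Pi} = \ubar{p}_{M}(\mu)$; for $\mu \leq 0$ no trade occurs under hidden prices.

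The key algebraic input is to establish $\ubar{p}^{*} > 2c$. Substituting $u \coloneqq 2\kappa\ubar{p}$, the active-search equation becomes $u \ln(1 + 1/u) = 2\sqrt{c\kappa}$; since the left-hand side is strictly increasing in $u$, the desired bound is equivalent to $4c\kappa \cdot \ln(1 + 1/(4c\kappa)) < 2\sqrt{c\kappa}$, which upon setting $x = 1/(4c\kappa) > 1$ becomes $\ln(1+x) < \sqrt{x}$. This follows because $g(x) \coloneqq \sqrt{x} - \ln(1+x)$ satisfies $g(0) = 0$ and $g'(x) = (\sqrt{x}-1)^{2}/[2\sqrt{x}(1+x)] \geq 0$. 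As an immediate corollary, $\mu^{h} - \mu^{o} = \ubar{p}^{*} - \sqrt{c/\kappa} + 1/(4\kappa) > 2c - \sqrt{c/\kappa} + 1/(4\kappa)$, whose $4\kappa$-multiple $8c\kappa - 4\sqrt{c\kappa} + 1$ is a quadratic in $\sqrt{c\kappa}$ with negative discriminant $-16$, hence $\mu^{h} > \mu^{o}$.

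The case analysis then delivers the result. For $\mu \leq \max\{\mu^{o}, 0\}$, neither regime has active search and hidden prices yield weakly less profit: if $\mu \leq 0$ hidden gives zero while observable gives a non-negative profit, and if $0 < \mu \leq \mu^{o}$ Proposition \ref{monopcompare} applied to the monopolist scenario gives $\Pi^{\mathrm{obs}} > \ubar{p}_{M}(\mu)$. For $\mu \geq \mu^{h}$ both regimes have active search, with $\hat{\Pi} = \ubar{p}^{*} > 2c = \Pi^{\mathrm{obs}}$, so firms strictly prefer hidden prices. For $\max\{\mu^{o}, 0\} < \mu < \mu^{h}$, $\Pi^{\mathrm{obs}} = 2c$ is constant while $\hat{\Pi} = \ubar{p}_{M}(\mu)$ is continuous and strictly increasing in $\mu$ by Lemma \ref{monopprice}; a direct substitution into the monopoly hidden equation shows $\ubar{p}_{M}(\mu^{h}) = \ubar{p}^{*} > 2c$, and at the left endpoint either Proposition \ref{monopcompare} (when $\mu^{o} > 0$) or $\lim_{\mu \to 0^{+}} \ubar{p}_{M} = 0$ (when $\mu^{o} \leq 0$) yields $\ubar{p}_{M} < 2c$. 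The intermediate value theorem produces a unique $\mu^{*}$ in this interval at which $\ubar{p}_{M}(\mu^{*}) = 2c$, and firms strictly prefer observable prices if and only if $\mu < \mu^{*}$.

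The main obstacle is the inequality $\ubar{p}^{*} > 2c$, which fixes the direction of the comparison whenever both regimes involve active search and hinges on the auxiliary calculus fact $\ln(1+x) < \sqrt{x}$ for $x > 0$. Once that bound and the continuity and monotonicity of $\ubar{p}_{M}(\mu)$ are in hand, the rest of the argument is an exercise in stitching together Proposition \ref{monopcompare}, Lemma \ref{monopprice}, and the intermediate value theorem across the three natural sub-regions of $\mu$.
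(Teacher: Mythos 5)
Your proof is correct and follows essentially the same route as the paper's: the same partition of the $\mu$-axis by active-search status in each regime, the same key bound $\ubar{p} > 2c$ in the active-search region (your $\ln(1+x) < \sqrt{x}$ is the paper's $x\ln\left\{(x+1)/x\right\} < \sqrt{x}$ after the substitution $x \mapsto 1/x$), and the same use of Proposition \ref{monopcompare} plus monotonicity of $\ubar{p}_M$ in $\mu$ for the remaining regions. If anything, your explicit verification that $\mu^{h} > \mu^{o}$ and the clean single-crossing statement via the intermediate value theorem make the threshold structure slightly more transparent than the paper's existence-style lemma for the intermediate region.
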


Figure \ref{fig4} depicts firms' profits with observable prices and hidden prices as well as their prices when prices are observable and the lower bound of the market distribution over prices when prices are hidden. 
 
\section{Discussion}\label{discus}

With the exception of \cite{guo2021endogenous}, the existing literature that explores sequential consumer search does not allow for flexible information acquisition by consumers. Similarly, the existing rational inattention literature does not tackle the dynamic problem of consumers searching and flexibly acquiring information in sequence. We formulate and solve a model that incorporates these aspects, which allows us to identify and highlight the vital differences between search and information frictions. 

Despite the ability of consumers to adapt their learning to the prices they observe, higher search frictions benefit firms and lead to greater market prices. Increased search costs make consumers more reluctant to leave firms, making them \textit{less} responsive to price increases, which firms can exploit. In contrast, higher information frictions make consumers \textit{more} responsive to price increases, thereby decreasing market prices and potentially benefiting consumers. We find that these relationships hold (for the most part) even when consumers cannot observe prices before acquiring information: firms benefit at the expense of consumers as search costs increase, whereas an increase in information frictions may improve the lot of consumers.

Our analysis; therefore, suggests that mandated decreases in information frictions (mandatory disclosure rules or transparency requirements, say) may be to the market's detriment. Moreover, our comparison of cases in which a firm's price is observed before or after information acquisition allows us to provide input on policies banning hidden fees. We find that consumers are always better off when they can observe prices before learning but that firms may be hurt by the transparency, in spite of the hold-up problem that ensues when prices are hidden.

%Due to the complexity of a consumer's search problem, a general characterization of the equilibrium when there are finitely many firms is intractable. In concurrent work, we (\cite{jainwhitmeyer2021}) examine a duopoly version of this problem with a focus on the ramifications of (exogenous) firm prominence\footnote{See, e.g., \cite{armstrong2009prominence}.} when consumers may acquire information flexibly. There, like here; however, firms do not advertise prices. It would be interesting to see how flexible information acquisition affects such competition.

\bibliography{sample.bib}

\appendix
\section{Omitted Proofs and Derivations} \label{appendix1}

\subsection{Proposition \ref{mucompstat} Proof}
\begin{proof}
By direct computation, if $2\sqrt{c/\kappa}- 1/\left(4\kappa\right) > \mu \geq - 1/\left(4\kappa\right)$, the market price, a firm's payoff and a consumer's payoff are strictly increasing in $\mu$. Analogously, if $\mu \geq 2\sqrt{c/\kappa}- 1/\left(4\kappa\right)$, the market price and a firm's payoff are unaffected by $\mu$, but a consumer's payoff is strictly increasing in $\mu$.
\end{proof}

\subsection{Proposition \ref{ccompstat} Proof}
\begin{proof}
Recall that there is active search and information acquisition in equilibrium provided \[\label{activesearchin}\tag{$A1$} \mu \geq 2\sqrt{\frac{c}{\kappa}} - \frac{1}{4\kappa} \text{ .}\]
Rearranging, we get a number of cases.
\begin{flushleft}
\textbf{Case 1:} If $\mu \geq 3/\left(4\kappa\right)$, Inequality \ref{activesearchin} holds (since $0 < c < 1/\left(4\kappa\right)$).
\end{flushleft}

\begin{flushleft}
\textbf{Case 2:} If $ -1/\left(4\kappa\right) \leq \mu < 3/\left(4\kappa\right)$, Inequality \ref{activesearchin} holds if and only if
\[0 < c < \frac{\left(1+4\kappa \mu\right)^2}{64\kappa} \text{ .}\]
\end{flushleft}

\begin{flushleft}
\textbf{Case 3:} If $\mu \leq -1/\left(4\kappa\right)$, Inequality \ref{activesearchin} does not hold.
\end{flushleft}
In the active-search region, a simple calculation reveals that $\Phi^{*}_{l}$ is strictly decreasing in $c$. In this region, $\Pi^{*}_{l}$ and $p_{l}$ are strictly increasing in $c$. Trivially, these equilibrium objects are independent of $c$ in the region without active search.
\end{proof}

\subsection{Proposition \ref{compstatkappa} Proof}\label{compstatkappaproof}

\begin{proof}
In the active-search region, 
\[\frac{\partial{\Phi_{l}^{*}}}{\partial{\kappa}} = - \frac{1}{4\kappa^2} + \frac{1}{\kappa} \sqrt{\frac{c}{\kappa}} \geq \ (>) \ 0 \quad \Leftrightarrow \quad \kappa \geq \ (>) \ \frac{1}{16c} \text{ .}\]
In this region, $\Pi_{l}^{*}$ is unchanging in $\kappa$, and $p_{l}$ is strictly decreasing in $\kappa$. In the region without active search,
\[\frac{\partial{\Phi_{m}^{*}}}{\partial{\kappa}} = \frac{\mu^2}{4}-\frac{1}{64\kappa^2} \geq \ (>) \ 0 \quad \Leftrightarrow \quad \kappa \geq \ (>) \ \left|\frac{1}{4\mu}\right| \text{ .}\]
Since $\Pi_{m}^{*} = 2 \Phi_{m}^{*}$, the effect of a change of $\kappa$ on a firm's profit is identical to its effect on consumer welfare in this (no-search) region. $p_{m}$ is strictly decreasing in $\kappa$.

Rearranging Equation \ref{activesearchin}, we observe that there are four cases to exhaust:
\begin{flushleft}
\textbf{Case 1:} $\mu > 0$ and $0 < c \leq \mu/4$. Here, $\kappa < 1/\left(4 c\right)$ implies inequality \ref{activesearchin}. Consequently, 
\[\sgn{\Phi^{*'}\left(\kappa\right)} = \begin{cases}
-1, \quad &0 < \kappa < \frac{1}{16c}\\
0, \quad &\kappa = \frac{1}{16c}\\
1, \quad &\frac{1}{16c} < \kappa < \frac{1}{4c}
\end{cases}, \quad \text{and} \quad \sgn{\Pi^{*'}\left(\kappa\right)} = 0 \text{ .}\]
\end{flushleft}

\begin{flushleft}
\textbf{Case 2:} $\mu > 0$ and $\mu/4 < c < \mu/3$. Here, \ref{activesearchin} holds either if 
\[0 < \kappa \leq \frac{8 c - \mu - 4 \sqrt{\left(4 c - \mu\right)c}}{4 \mu^2} \eqqcolon \ubar{\kappa}, \quad \text{or} \quad \bar{\kappa} \coloneqq \frac{8 c - \mu + 4 \sqrt{\left(4 c - \mu\right)c}}{4 \mu^2} \leq \kappa < \frac{1}{4c} \text{ .}\]
\iffalse
Observe that 
\[\ubar{\kappa} < \frac{1}{16c}  \quad \text{and} \quad \ubar{\kappa} > \frac{1}{16c} \text{ ,}\]
since $\mu < 4 c$. Moreover, 
\[\ubar{\kappa} < \frac{1}{4\mu} \text{ ,}\]
since $\mu > 0$; and 
\[\bar{\kappa} > \frac{1}{4\mu} \text{ .}\]\fi
Consequently, 
\[\sgn{\Phi^{*'}\left(\kappa\right)} = \begin{cases}
-1, \quad &0 < \kappa < \ubar{\kappa}\\
-1, \quad &\ubar{\kappa} \leq \kappa < \frac{1}{4 \mu}\\
0, \quad &\kappa = \frac{1}{4\mu}\\
1, \quad &\frac{1}{4\mu} < \kappa < \bar{\kappa}\\
1, \quad &\bar{\kappa} \leq \kappa < \frac{1}{4c}
\end{cases}, \quad \text{and} \quad \sgn{\Pi^{*'}\left(\kappa\right)} = \begin{cases}
0, \quad &0 < \kappa < \ubar{\kappa}\\
-1, \quad &\ubar{\kappa} \leq \kappa < \frac{1}{4 \mu}\\
0, \quad &\kappa = \frac{1}{4\mu}\\
1, \quad &\frac{1}{4\mu} < \kappa < \bar{\kappa}\\
0, \quad &\bar{\kappa} \leq \kappa < \frac{1}{4c}
\end{cases} \text{ .}\]
\end{flushleft}

\begin{flushleft}
\textbf{Case 3:} $\mu \geq 0$ and $c \geq \mu/3$. Here, \ref{activesearchin} holds if $0 < \kappa \leq \ubar{\kappa}$. Accordingly,
\[\sgn{\Phi^{*'}\left(\kappa\right)} = \begin{cases}
-1, \quad &0 < \kappa < \ubar{\kappa}\\
-1, \quad &\ubar{\kappa} \leq \kappa < \frac{1}{4 \mu}\\
0, \quad &\kappa = \frac{1}{4\mu}\\
1, \quad &\frac{1}{4\mu} < \kappa < \frac{1}{4c}
\end{cases}, \quad \text{and} \quad \sgn{\Pi^{*'}\left(\kappa\right)} = \begin{cases}
0, \quad &0 < \kappa < \ubar{\kappa}\\
-1, \quad &\ubar{\kappa} \leq \kappa < \frac{1}{4 \mu}\\
0, \quad &\kappa = \frac{1}{4\mu}\\
1, \quad &\frac{1}{4\mu} < \kappa < \frac{1}{4c}
\end{cases} \text{ .}\]
\end{flushleft}

\begin{flushleft}
\textbf{Case 4:} $\mu < 0$. Again, \ref{activesearchin} holds if $0 < \kappa \leq \ubar{\kappa}$. Accordingly,
\[\sgn{\Phi^{*'}\left(\kappa\right)} = \begin{cases}
-1, \quad &0 < \kappa < \ubar{\kappa}\\
1, \quad &\ubar{\kappa} \leq \kappa < \frac{1}{4c}
\end{cases}, \quad \text{and} \quad \sgn{\Pi^{*'}\left(\kappa\right)} = \begin{cases}
0, \quad &0 < \kappa < \ubar{\kappa}\\
1, \quad &\ubar{\kappa} \leq \kappa < \frac{1}{4c}
\end{cases} \text{ .}\]
\end{flushleft}

\end{proof}

\subsection{Lemma \ref{lemmanopure} Proof}

\begin{proof}
Let us begin by establishing that if $\mu > a$ there exist no pure-strategy equilibria. For convenience, we assume that there is a single representative consumer. Suppose for the sake of contradiction that there is an equilibrium in which the seller sets a deterministic $p$. Recall that if $\mu \in \left(a+p-1/\left(4\kappa\right), a+p+1/\left(4\kappa\right)\right)$, the consumer's learning has support $\left\{a+p-1/\left(4\kappa\right), a+p+1/\left(4\kappa\right)\right\}$. Otherwise, the consumer does not learn and either buys with certainty (if $\mu$ is sufficiently high) or takes her outside option with certainty. That leaves us with three cases.

Suppose first that $p \geq \mu - a + 1/\left(4\kappa\right)$, so that the consumer does not learn and takes her outside option (on path). The seller's profit is $0$. Evidently, it can deviate by charging some price $\Tilde{p} \in (0, \mu - a)$, which yields it profit $\Tilde{p} > 0$. 

Second, suppose that $p \in \left(\mu - a - 1/\left(4\kappa\right), \mu - a + 1/\left(4\kappa\right)\right)$. Conditional on the consumer purchasing, the seller's profit is $p$. Suppose it deviates to a price $\Tilde{p} \coloneqq p + 1/\left(8\kappa\right)$. Evidently, 
\[u_{H} - \tilde{p} = a + p + \frac{1}{4\kappa} - \left(p + \frac{1}{8\kappa}\right) = a + \frac{1}{8\kappa} > a \text{ ,}\]
so the consumer still purchases with the same probability at a strictly higher price, yielding a strictly higher profit for the seller.

Third, suppose that $p \leq \mu - a - 1/\left(4\kappa\right)$, so that the consumer does not learn and purchases from the seller. Evidently, the firm's profit is $p$. However, it can deviate to some $\tilde{p} = \mu - a - \epsilon$, which is strictly higher than $p$ and $0$ provided $\epsilon < \min\left\{1/\left(4\kappa\right), \mu - a\right\}$.

We have established the necessity of $\mu \leq a$ for the existence of a pure-strategy equilibrium. Now let us deduce the no-trade result.

Let $\mu \leq a$, and suppose for the sake of contradiction that there exists a pure-strategy equilibrium in which trade occurs. Consequently, either $p \in \left(\mu - a - 1/\left(4\kappa\right), \mu - a + 1/\left(4\kappa\right)\right)$ or $p \leq \mu - a - 1/\left(4\kappa\right)$. The latter is clearly impossible, since the seller is making negative profits. In the first case, we must have $p \leq \mu - a + 1/\left(4\kappa\right)$. Clearly, if $\mu - a + 1/\left(4\kappa\right) \leq 0$, there are no pure-strategy equilibria with trade. Thus, let us assume $\mu - a + 1/\left(4\kappa\right) > 0$. If there is trade (with positive probability), then $0 \leq p < \mu - a + 1/\left(4\kappa\right)$. But then the firm can raise its price slightly, thereby increasing its profit. 

Finally, let us construct pure strategy equilibria in which there is no trade. Evidently, we must have $p \geq \mu - a + 1/\left(4\kappa\right)$. The consumer will not learn and so (on-path) the consumer will take her outside option. On path, the firm gets $0$ and any deviation would yield it at most $0$ since $\mu \leq a$.
\end{proof}

\subsection{Theorem \ref{monopvalue} Proof}

\begin{proof}
Again, for convenience, we assume that there is a single representative consumer. From Lemma \ref{lemmanopure}, we know that there exist no pure strategy equilibria in this region. As noted in the main part of the text, given the consumer's equilibrium distribution over values, $F$, the monopolist's profit is \[\hat{\Pi}\left(p\right) = p\left(1-F\left(p+a\right)\right) \text{ ,}\]
which must equal some constant $\lambda \geq 0$ for all $p$ in the support of its mixed strategy since it is mixing. Denote the support of its mixed strategy by $\left[\ubar{p},\bar{p}\right]$, where $\ubar{p} \geq 0$. Given the monopolist's mixed strategy over prices $G$, the consumer's distribution ($F$) must solve
\[\max_{F}\int_{-\infty}^{\infty}\left\{\int_{\ubar{p}}^{x-a}\left(x-p\right)dG\left(p\right) + \int_{x-a}^{\bar{p}}adG\left(p\right) - \kappa \left(x-\mu\right)^2\right\}dF\left(x\right) \text{ .}\]
It is also helpful to write down the consumer's payoff as a function of the realized value:
\[V\left(x\right) = 
\begin{cases}
a - \kappa \left(x-\mu\right)^2, \quad &x < a + \ubar{p}\\
a \left(1-G\left(x-a\right)\right) + \int_{\ubar{p}}^{x-a}\left(x-p\right)dG\left(p\right) - \kappa \left(x-\mu\right)^2, \quad &a + \ubar{p} \leq x \leq a + \bar{p}\\
x - \mathbb{E}_{G}\left[p\right] - \kappa \left(x-\mu\right)^2, \quad &a + \bar{p} \leq x\\
\end{cases} \text{ .}\]
Concavifying $V$ yields the consumer's optimal distribution(s). Our first pair of claims ensures that the upper and lower bounds of the support of any equilibrium distributions over values and prices differ from each other (respectively) only by the value of the outside option.
\begin{claim}
In any equilibrium, the upper bound of the support of the consumer's distribution, $\bar{x}$, satisfies $\bar{x} = a + \bar{p}$.
\end{claim}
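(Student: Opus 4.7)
The plan is to prove the two inequalities $\bar{x} \ge a + \bar{p}$ and $\bar{x} \le a + \bar{p}$ separately, exploiting monopolist indifference for the lower bound and consumer optimality for the upper bound.

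For $\bar{x} \ge a + \bar{p}$: because $\mu > a$, Lemma \ref{lemmanopure} rules out any pure-strategy equilibrium, so the monopolist must mix with some constant equilibrium profit $\lambda$. A quick separate argument shows $\lambda > 0$: if $\lambda = 0$, no consumer ever purchases at any price in $\mathrm{supp}(G)$, so she acquires no information and plays $Q = \delta_{\mu}$; but then since $\mu > a$, the monopolist could deviate to any price just below $\mu - a$ and earn strictly positive profit, a contradiction. Given $\lambda > 0$, the indifference condition $\hat{\Pi}(\bar{p}) = \bar{p}\cdot \mathbb{P}(u \ge \bar{p} + a) = \lambda > 0$ forces $\mathbb{P}(u \ge \bar{p} + a) > 0$, whence $\bar{x} \ge a + \bar{p}$.

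For $\bar{x} \le a + \bar{p}$, I would suppose to the contrary that $\bar{x} > a + \bar{p}$ and construct a mean-preserving perturbation of $Q$ that strictly raises the consumer's objective. On $(a + \bar{p}, \infty)$ the consumer buys at every realized price in $\mathrm{supp}(G)$, so $g(u) = u - \mathbb{E}_{G}[p]$ is linear in $u$ and $h(u) := g(u) - \kappa(u - \mu)^{2}$ is strictly concave there. The first step collapses any mass of $Q$ on $(a + \bar{p}, \bar{x}]$ to a point mass at its conditional mean: this preserves $\int g\,dQ$ by linearity of $g$ and strictly reduces $\kappa\!\int (u-\mu)^{2}dQ$ by Jensen's inequality (unless that mass was already a single atom, in which case the step is vacuous). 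The second step takes the remaining atom of mass $M$ at some $m \in (a + \bar{p}, \bar{x}]$ and moves it down to $a + \bar{p}$, shifting a compensating amount within $\mathrm{supp}(Q) \cap (-\infty, a + \bar{p}]$ to preserve the mean; a direct computation using $a + \bar{p} \ge \mu$ (which must hold in any equilibrium, since $\mu$ lies in the linear piece of $h$) confirms that the quadratic-cost savings strictly exceed the linear loss in gross value.

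The main obstacle is the second step above: the consumer's local first-order condition can be satisfied at an interior $m > a + \bar{p}$, so a naive small perturbation need not strictly improve the objective. The cleanest workaround is a global Jensen-type argument that leverages the monopolist's indifference to pin down $F$ as a truncated Pareto on $[a + \ubar{p}, a + \bar{p}]$; the consumer's best response $Q$ must then be supported where $h$ coincides with its concave envelope, and by strict concavity of $h$ on $(a + \bar{p}, \infty)$ this contact set meets the upper region only at its left endpoint $a + \bar{p}$. The technical delicacy will be closing this bridge between the monopolist's and consumer's optimality conditions without circularly invoking the full conclusion of Theorem \ref{monopvalue}.
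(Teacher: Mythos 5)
Your first inequality ($\bar{x}\ge a+\bar{p}$) is fine and is essentially the paper's own argument: any price above $\bar{x}-a$ sells with probability zero, while any $p'\in(0,\mu-a)$ earns strictly positive profit, so such a price cannot lie in the support of $G$. Your first step toward the reverse inequality --- collapsing any mass of $Q$ on $(a+\bar{p},\bar{x}]$ to a single atom at its conditional mean, using linearity of the gross payoff and strict convexity of the cost on that region --- also matches the paper. The gap is in your second step, and you have correctly diagnosed it yourself: from the consumer's side alone there is no contradiction in that atom sitting at some $m>a+\bar{p}$ (her objective is strictly concave there, so a single contact point is perfectly consistent with her optimality), and your proposed workaround --- pinning down $F$ as a truncated Pareto via the monopolist's indifference --- is exactly the circularity you worry about, since the Pareto form is derived only after the support claims are already in hand.

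The missing idea is that the contradiction should be delivered on the monopolist's side, not the consumer's. Once the mass of $Q$ above $a+\bar{p}$ has been collapsed to a single atom at some $m>a+\bar{p}$ with mass $M>0$, the distribution $F$ is constant on $[a+\bar{p},m)$, i.e.\ $F(a+\bar{p})=F(a+\bar{p}+\eta)$ for all sufficiently small $\eta>0$, and $1-F(a+\bar{p})\ge M>0$. Then the price $\bar{p}+\eta$ sells to exactly the same positive mass of consumers as $\bar{p}$ does, but at a strictly higher price, so $\hat{\Pi}(\bar{p}+\eta)>\hat{\Pi}(\bar{p})$ and $\bar{p}$ cannot be the supremum of the support of an optimal $G$. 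This closes the argument in one line and requires nothing about the functional forms of $F$ or $G$; you should replace your second step (and the attempted bridge to Theorem \ref{monopvalue}) with this observation.
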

\begin{proof}
It is easy to see that we cannot have $\bar{x} < a + \bar{p}$. Indeed, the monopolist's profit from any price $p > \bar{x} - a$ is $0$, whereas it can always obtain a strictly positive profit by setting price $p' \in \left(0,\mu-a\right)$. Can we have $\bar{x} > \bar{p} + a$? In this case the consumer's optimal distribution above $\bar{p} + a$ must consist of a single mass point, since information is costly. However, then we have $F\left(\bar{p} + a\right) = F\left(\bar{p} + a + \eta\right)$ for all (sufficiently small) strictly positive $\eta$, contradicting the optimality of the monopolist's distribution over prices.
\end{proof}
\begin{claim}
In any equilibrium, the lower bound of the support of the consumer's distribution, $\ubar{x}$, satisfies $\ubar{x} = a + \ubar{p}$.
\end{claim}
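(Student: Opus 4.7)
The plan is to establish the two inequalities $\ubar{x} \leq a + \ubar{p}$ and $\ubar{x} \geq a + \ubar{p}$ separately by contradiction, yielding equality. The first direction is a short monopolist-side argument that mirrors the preceding claim; the second direction is a consumer-side Jensen/concavification argument that is longer and where the real work lies.

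For $\ubar{x} \leq a+\ubar{p}$, I would suppose for contradiction that $\ubar{x} > a+\ubar{p}$. Then for every $p \in [\ubar{p}, \ubar{x}-a)$, the consumer places no mass below $\ubar{x}$, so $F((p+a)^-) = 0$ and $\hat{\Pi}(p) = p$. This function is strictly increasing on $[\ubar{p}, \ubar{x}-a)$, so pricing at $\ubar{p}$ is strictly dominated by pricing slightly above it, contradicting the indifference $\hat{\Pi}(p) = \lambda$ on $\mathrm{supp}(G)$. This is a direct analogue of the paper's argument for ruling out $\bar{x} > \bar{p}+a$.

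For $\ubar{x} \geq a+\ubar{p}$, I would suppose $\ubar{x} < a+\ubar{p}$. Since $V(x) = a - \kappa(x-\mu)^2$ is strictly concave on $(-\infty, a+\ubar{p})$, the consumer's optimal distribution below $a+\ubar{p}$ must be concentrated in a single atom at $\ubar{x}$, with mass $m > 0$, exactly as in the previous claim's treatment of the upper region. Meanwhile, the monopolist's indifference on $[\ubar{p},\bar{p}]$ pins down $F(y) = 1 - \lambda/(y-a)$ on the middle region, and in order for the consumer to willingly spread mass over this entire interval, $V$ must be linear there; from $V''(x) = g(x-a) - 2\kappa$ this forces $G$ to be uniform with density $2\kappa$, so $\bar{p} = \ubar{p} + 1/(2\kappa)$. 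A short completing-the-square computation of $V$ in the middle region (yielding $V(x) = a + \kappa(x-a-\ubar{p})^2 - \kappa(x-\mu)^2$ on $[a+\ubar{p}, a+\bar{p}]$) then shows that $V$ is $C^1$ and globally concave, strictly so outside $[a+\ubar{p}, a+\bar{p}]$. Jensen's inequality now gives $\int V\, dF \leq V(\mu)$, with strict inequality because $\mathrm{supp}(F)$ contains the strictly concave point $\ubar{x}$. But $V(\mu)$ is attainable by the Dirac mass $\delta_\mu$, contradicting the optimality of the consumer's distribution.

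The main obstacle will be the second direction. The global concavity of $V$ under uniform $G$ is a brief but nontrivial calculation, and two boundary configurations need to be handled for the Jensen step to apply cleanly: if $\mu \geq a+\bar{p}$, the consumer's best reply collapses to $\delta_\mu$, turning the monopolist's problem into a deterministic one and contradicting the hypothesized mixing; while if $\mu \leq a+\ubar{p}$, then $V(\mu) = a$ and no trade occurs, contradicting the existence of a nontrivial equilibrium. These two exclusions force $\mu \in (a+\ubar{p}, a+\bar{p})$, which is exactly where the strict Jensen argument delivers the contradiction.
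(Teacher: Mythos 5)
Your first direction ($\ubar{x} \leq a + \ubar{p}$) is exactly the paper's argument and is fine. The second direction is where the problem lies: your Jensen argument is built on the premise that $G$ is uniform with density $2\kappa$ on $[\ubar{p},\bar{p}]$ (so that $V$ is affine on $[a+\ubar{p},a+\bar{p}]$ and globally concave), but at the point where this claim sits, none of that structure is available. You justify uniformity by saying that the consumer ``willingly spreads mass over this entire interval,'' so $V$ must be linear there and $V''=g(x-a)-2\kappa$ forces $g=2\kappa$. This presupposes (i) that $F$ has full support on the middle region, (ii) that $G$ admits a density, and (iii) that the monopolist is indifferent on all of $[\ubar{p},\bar{p}]$ rather than merely on $\mathrm{supp}(G)$, which could still have gaps. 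All three facts are established only in the paper's subsequent claims (full support, no mass points) and in the final derivation of $F$ and $G$. Worse, if you do import the full-support result to patch this, the claim becomes immediate and the entire Jensen computation is redundant: under the contradiction hypothesis $\ubar{x}<a+\ubar{p}$, consumer optimality forces $V$ to be affine on $\mathrm{supp}(F)$, yet $V(x)=a-\kappa(x-\mu)^2$ is strictly concave on $[\ubar{x},a+\ubar{p})$, a one-line contradiction. So your argument is either resting on unproven structure or is a detour through machinery it doesn't need.

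The paper's own second direction avoids all of this by staying elementary. If $\ubar{x}<a+\ubar{p}$, then since $V$ is strictly concave on $(-\infty,a+\ubar{p})$ regardless of what $G$ looks like (the gross payoff there is the constant $a$ and the cost is strictly convex), the consumer's distribution can place mass on at most one point below $a+\ubar{p}$, namely $\ubar{x}$; hence $F$ has a gap $(\ubar{x},\tilde{x})$ in its support with $\tilde{x}>a+\ubar{p}$. Demand is then perfectly inelastic for prices in a right-neighborhood of $\ubar{p}$, so the monopolist strictly prefers $\ubar{p}+\eta$ to $\ubar{p}$, contradicting its indifference over $\mathrm{supp}(G)$. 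Note that this uses only the monopolist-side indifference and the strict convexity of the information cost --- the same ingredients as your first direction --- and requires no knowledge of the eventual shape of $F$ or $G$. I'd encourage you to rework the second half along these lines; your instinct that the lower bound is the ``consumer-side'' direction is right, but the consumer-side input needed is only the local strict concavity of $V$ below $a+\ubar{p}$, not the global concavification of the fully solved equilibrium.
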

\begin{proof}
That $\ubar{x} \leq a + \ubar{p}$ is also straightforward. Otherwise, the monopolist obtains a strictly higher profit from $\ubar{p} + \epsilon$ than from $\ubar{p}$, a violation. Can we have $\ubar{x} < \ubar{p} + a$? In this case the consumer's optimal distribution below $\bar{p} + a$ must consist of a single mass point, since information is costly. Moreover, $F$ must have no support on $\left(\ubar{x}, \tilde{x}\right)$, where $\tilde{x} > \ubar{p} + a$ since $V$ is strictly concave on $\left(-\infty, \ubar{p} + a\right)$. However, this means that the monopolist strictly prefers price $\ubar{p} + \eta$ to $\ubar{p}$ since demand is locally perfectly inelastic in a neighborhood of $\ubar{p}$.
\end{proof}
We now establish that the consumer's distribution over values has full support.
\begin{claim}
In any equilibrium the distributions chosen by the consumer and the monopolist have full support on $\left[\ubar{x},\bar{x}\right]$ and $\left[\ubar{p},\bar{p}\right]$, respectively.
\end{claim}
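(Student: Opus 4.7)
The plan is to establish full support for $G$ first, then deduce full support for $F$ from that result.

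First I would suppose for contradiction that $\mathrm{supp}(G)$ has an open gap $(p_1,p_2)\subseteq(\ubar{p},\bar{p})$ with $p_1,p_2\in\mathrm{supp}(G)$. Differentiating the formula for $V(x)$ on $[a+\ubar{p},a+\bar{p}]$ gives $V'(x)=G(x-a)-2\kappa(x-\mu)$, and because $G$ is constant on the gap, $V'$ is strictly decreasing on $(p_1+a,p_2+a)$; hence $V$ is strictly concave there. The first-order/Lagrangian condition for the consumer's problem requires $V'$ to be constant on any open subinterval of $\mathrm{supp}(F)$, so $F$ can have no open subinterval of support inside $(p_1+a,p_2+a)$.

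The more delicate part is ruling out a single atom of $F$ in the gap. My plan is to transfer the contradiction to the monopolist: if $F$ placed an atom at some $x^\ast\in(p_1+a,p_2+a)$, then $F$ would equal $F(p_1+a)$ on $(p_1+a,x^\ast)$, so as the monopolist's price approaches $x^\ast-a$ from the left its profit would approach $(x^\ast-a)(1-F(p_1+a))>p_1(1-F(p_1+a))=\lambda$, a strictly profitable deviation. Thus $F$ is constant on $(p_1+a,p_2+a)$, and the monopolist's profit on $(p_1,p_2)$ equals $p(1-F(p_1+a))$, which is strictly increasing in $p$ and strictly exceeds $\lambda$ just above $p_1$, contradicting equilibrium. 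This establishes that $G$ has full support on $[\ubar{p},\bar{p}]$.

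For the second step, I would suppose for contradiction that $F$ has no mass on an open interval $(x_1,x_2)\subseteq(\ubar{x},\bar{x})$. By the first step, $(x_1-a,x_2-a)\subseteq\mathrm{supp}(G)$ carries positive $G$-mass on every open subinterval. On this interval the monopolist's profit equals $p(1-F(x_1))$, which is strictly increasing in $p$, and therefore cannot equal $\lambda$ on a positive-measure subset of $\mathrm{supp}(G)$. This is the desired contradiction.

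The main obstacle will be excluding an isolated atom of $F$ in the interior of a gap of $G$: strict concavity of $V$ combined with the consumer's first-order condition rules out only intervals of positive measure of support, so the contradiction must come from a profitable strict deviation by the monopolist rather than from the consumer's problem directly. Once that point is handled, the rest is a routine application of the two indifference conditions.
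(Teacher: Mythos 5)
Your proof is correct, and it reaches the conclusion by a genuinely different (reversed) route from the paper. The paper starts from a gap in $F$: it first notes that $\hat{\Pi}(p)=p\left(1-F(p+a)\right)$ is strictly increasing on the corresponding price interval, so $G$ must also be flat there, which makes the gross payoff affine in $x$ on the gap; the final contradiction is then delivered on the \emph{consumer's} side, via the strict convexity of the cost: merging two support points $x_{1},x_{2}$ into an intermediate value strictly raises her payoff. The "$G$ has no gaps" half is then dispatched in one sentence as a corollary of the same analysis. You instead start from a gap in $G$, use strict concavity of $V$ on the gap only to pare $\operatorname{supp}F$ down to at most a single point there, and deliver \emph{both} contradictions on the monopolist's side as failures of the indifference condition $\hat{\Pi}\equiv\lambda$. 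Your version has two advantages: it avoids the paper's variational "merge two masses" step (which needs some care when $x_{1},x_{2}$ are merely limit points of the support), and it explicitly treats the isolated-atom-inside-a-$G$-gap case, which the paper's terse closing sentence glosses over. One small imprecision: strict concavity of $V$ on the gap rules out not just open subintervals of $\operatorname{supp}F$ but any \emph{two} points of $\operatorname{supp}F$ in the gap (a strictly concave function touches its supporting affine majorant at most once), and it is this stronger statement that reduces the remaining case to a single isolated point, which is then necessarily an atom; your subsequent argument handles exactly that case, so nothing is lost, but the reduction deserves to be stated. You also implicitly use $F(p_{1}+a)<1$ and $F(x_{1})<1$ for strict monotonicity of profit; both follow immediately from $\bar{x}$ being the top of $\operatorname{supp}F$, and are at the same level of rigor as the paper itself.
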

\begin{proof}
Suppose that there is a gap in the support of the consumer's distribution, i.e., $F\left(x_{1}\right) = F\left(x_{2}\right)$ for some $x_{2} > x_{1}$. Then, $\hat{\Pi}\left(p\right) < \hat{\Pi}\left(p_{2}\right)$ for all $p \in \left(p_{1},p_{2}\right)$, where $p_{i} = x_{i} - a$. Accordingly, the firm does not charge any price $p$ in this interval, i.e., $G\left(p_{1}\right) = G^{-}\left(p_{2}\right)$.\footnote{We define $G^{-}\left(p\right) \coloneqq \sup_{w < p}G\left(w\right)$.} Then,
\[V\left(x_{1}\right) = a \left(1-G\left(p_{1}\right)\right) + \int_{\ubar{p}}^{p_{1}}\left(x_{1}-p\right)dG\left(p\right) - \kappa \left(x_{1}-\mu\right)^2 \text{ ,}\]
and
\[\begin{split}
    V\left(x_{2}\right) &= a \left(1-G\left(p_{2}\right)\right) + \int_{\ubar{p}}^{p_{2}}\left(x_{2}-p\right)dG\left(p\right) - \kappa \left(x_{2}-\mu\right)^2\\
    &= a \left(1-G\left(p_{1}\right)\right) + \int_{\ubar{p}}^{p_{1}}\left(x_{2}-p\right)dG\left(p\right) - \kappa \left(x_{2}-\mu\right)^2\\
\end{split} { .}\]
By the strict convexity of the cost function, $F$ cannot be optimal for the consumer, who could strictly benefit by having value $x' \coloneqq \lambda x_{1} + \left(1-\lambda\right)x_{2}$ for any $\lambda \in \left(0,1\right)$ in the support of $F$, rather than $x_{1}$ and $x_{2}$.

The analysis of the previous paragraph also implies that the monopolist's distribution can have no gaps in its support.
\end{proof}
The next claim establishes that the monopolist's distribution over prices may not have mass points and neither does the consumer's distribution over values except possibly at the upper bound of its support.
\begin{claim}
In any equilibrium the distribution chosen by the monopolist has no mass points and the distribution chosen by the consumer has no mass points except possibly at $\bar{x}$.
\end{claim}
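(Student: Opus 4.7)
My plan is to split the claim into two parts and attack each by a local perturbation argument, exploiting the previous claim (full support of both $F$ and $G$) together with the two equilibrium requirements: the monopolist must be indifferent over the interior of $\mathrm{supp}(G)$, while the consumer's optimal $F$ can only place mass where the graph of $V$ coincides with its concave envelope.

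First I would rule out atoms of $G$. Suppose $G$ places mass $\alpha>0$ at some $p^{\ast}\in[\ubar{p},\bar{p}]$. Differentiating the piecewise expression for $V$ on $(a+\ubar{p},a+\bar{p})$ gives $V'(x)=G(x-a)-2\kappa(x-\mu)$ away from atoms, and a direct computation shows that the atom contributes canceling jumps of size $\mp a\alpha$ to the two terms $a(1-G(x-a))$ and $\int_{\ubar p}^{x-a}(x-p)\,dG(p)$, so $V$ is continuous at $a+p^{\ast}$ but its one-sided derivatives satisfy $V'((a+p^{\ast})^{+})-V'((a+p^{\ast})^{-})=\alpha>0$. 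A short chord argument then yields $V(a+p^{\ast})<\mathrm{conc}(V)(a+p^{\ast})$, and by continuity this inequality extends to a neighborhood, so no optimal consumer distribution places mass near $a+p^{\ast}$. If $p^{\ast}\in(\ubar{p},\bar{p})$ this contradicts the full-support property of $F$; if $p^{\ast}=\ubar{p}$ or $p^{\ast}=\bar{p}$ the same conclusion places the endpoint $a+p^{\ast}$ outside $\mathrm{supp}(F)$, contradicting the previously established identifications $\ubar{x}=a+\ubar{p}$ and $\bar{x}=a+\bar{p}$.

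Next I would rule out atoms of $F$ on $[\ubar{x},\bar{x})$. Suppose $F$ has an atom of mass $\beta>0$ at some $x^{\ast}\in[\ubar{x},\bar{x})$, and write $p^{\ast}=x^{\ast}-a$. Then $1-F(p+a)$ has a downward jump of size $\beta$ at $p=p^{\ast}$. For $p^{\ast}\in(\ubar{p},\bar{p})$, the full-support claim guarantees that prices $p^{\ast}-\epsilon$ lie in $\mathrm{supp}(G)$ for $\epsilon>0$ small, so by monopolist indifference all such prices must yield the common profit $\lambda=\hat{\Pi}(p^{\ast})=p^{\ast}(1-F(x^{\ast}))$; but $\hat{\Pi}(p^{\ast}-\epsilon)\ge (p^{\ast}-\epsilon)(1-F(x^{\ast})+\beta)\to p^{\ast}(1-F(x^{\ast}))+p^{\ast}\beta>\lambda$ as $\epsilon\downarrow 0$, a contradiction. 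The boundary case $x^{\ast}=\ubar{x}$ (so $p^{\ast}=\ubar{p}$) requires the separate observation that the monopolist can deviate to $p=\ubar{p}-\epsilon$ to obtain $(\ubar{p}-\epsilon)\cdot 1$, which exceeds $\lambda=\ubar{p}(1-\beta)$ for $\epsilon<\ubar{p}\beta$; this again contradicts optimality of the monopolist's mixed strategy.

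The main technical step is the claim that a convex kink in $V$ pushes a point off $\mathrm{supp}(F)$; I will justify this by exhibiting a chord that strictly dominates $V$ in a one-sided neighborhood of the kink and then invoking the concavification characterization of the consumer's optimum from the preliminaries. The remaining work is essentially careful bookkeeping about one-sided limits and about which cases the full-support claim directly covers versus those (the two endpoints of $G$ and the point $\ubar{x}$ for $F$) that require the separate boundary arguments above.
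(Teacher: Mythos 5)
Your proof is correct and follows essentially the same route as the paper's: an atom of $F$ below $\bar{x}$ produces a downward jump in demand that forces a gap in the support of $G$, and an atom of $G$ produces a convex kink in $V$ that is incompatible with $F$ having full support (the paper phrases this as ``$V$ must be affine, so it cannot kink''). Your version is somewhat more explicit about the one-sided derivative computation and the boundary cases at $\ubar{p}$, $\bar{p}$, and $\ubar{x}$, which the paper's terser argument glosses over, but the underlying logic is identical.
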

\begin{proof}
Suppose for the sake of contradiction that $F$ has a mass point at some value $\hat{x} < \bar{x}$ and let $\hat{p} = \hat{x} + a$. Evidently, 
\[\hat{\Pi}\left(\hat{p} - \epsilon\right) = \left(\hat{p}-\epsilon\right)\left(1-F\left(\hat{x}-\epsilon\right)\right) > \left(\hat{p}+\epsilon\right)\left(1-F\left(\hat{x}+\epsilon\right)\right) = \hat{\Pi}\left(\hat{p} + \epsilon\right) \text{ ,}\]
for sufficiently small $\epsilon > 0$. This implies that the support of $G$ has a gap, contradicting our previous claim.

Because the consumer's distribution over values has full support, the consumer's payoff as a function of value $x$, $V\left(x\right)$, must be affine. The distribution over prices cannot have a mass point, since otherwise $V$ would have a discrete change in its slope induced by the mass point.
\end{proof}
At last we may derive the unique equilibrium. Setting $\hat{\Pi}\left(p\right) = \lambda$, we obtain
\[F\left(x\right) = 1- \frac{\lambda}{x-a} \text{ .}\]
Since $F$ has no mass points on $\left[\ubar{x},\bar{x}\right)$, $\lambda = \ubar{x}-a$. Moreover, since $\mathbb{E}_{F}\left[x\right] = \mu$, we have
\[\label{eqa1}\tag{$A2$}1 + \ln{\left\{\frac{\bar{x}-a}{\ubar{x}-a}\right\}} = \frac{\mu-a}{\ubar{x}-a} \text{ .}\]
Because $V$ is affine,
\[\label{eqa2}\tag{$A3$}a\left(1-G\left(x-a\right)\right) + \int_{\ubar{x} - a}^{x-a}\left(x-p\right)dG\left(p\right) - \kappa\left(x-\mu\right)^2 = \alpha x + \beta \text{ ,} \ \forall \ x \in \left[\ubar{x},\bar{x}\right] \text{ ,}\]
where $\alpha, \beta \in \mathbb{R}$. We may differentiate both sides and do some algebra, which yields
\[G\left(p\right) = 2 \kappa \left(a + p - \mu\right) + \alpha \text{ ,}\]
and substituting this back in to Equation \ref{eqa2}, we obtain
\[\label{eqa3}\tag{$A4$} \alpha = 2\kappa\left(\mu-\ubar{x}\right) \text{ ,}\]
and 
\[\label{eqa4}\tag{$A5$} \beta = \kappa\left(\ubar{x}^2-\mu^2\right)+a \text{ .}\]
Since $G\left(\bar{p}\right) = 1$,
\[\label{eqa5}\tag{$A6$}\alpha + 2 \kappa \left(\bar{p} + a - \mu\right) = 1 \text{ .}\]
Thus, combining Equations \ref{eqa3} and \ref{eqa5}, we get
\[\label{eqa6}\tag{$A7$}\bar{p} = \ubar{p} + \frac{1}{2\kappa} \text{ .}\]
Substituting in for $\bar{p}$ in Equation \ref{eqa1} via Equation \ref{eqa6} and using the fact that $\ubar{x} = a + \ubar{p}$, we have
\[\ln{\left\{1 + \frac{1}{2\kappa\ubar{p}}\right\}} = \frac{\mu-a}{\ubar{p}} - 1 \text{ .}\]
The firm's profit is clearly $\ubar{p}$ since if it charges that price it is purchased from with certainty, and the consumer's payoff is just $V\left(\mu\right)$.
\end{proof}

\subsection{Lemma \ref{monopprice} Proof}\label{monoppriceproof}
\begin{proof}
Let us define the function $\iota\left(\kappa, \ubar{p}, \mu\right)$ as
\[\iota\left(\ubar{p}_{M},\kappa,\mu\right) \coloneqq \ubar{p}_{M} \ln{\left\{1 + \frac{1}{2\kappa\ubar{p}_{M}}\right\}} + \ubar{p}_{M} - \mu + a \text{ .}\]
Evidently, $\iota$ is twice continuously differentiable in each of its arguments on $\mathbb{R}_{++}^{3}$.
\begin{claim}
$\iota$ has a strictly positive root if and only if $\mu > a$. This root, $\ubar{p}_{M}$, is the unique positive root and lies in the interval $\ubar{p}_{M} \in \left(0,\mu-a\right)$.
\end{claim}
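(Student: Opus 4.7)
Fix $\kappa > 0$ and $a \geq 0$, and regard $\iota$ as a single-variable function of $\ubar{p}_{M} \in (0, \infty)$. The plan is to establish existence and uniqueness by combining monotonicity with the intermediate value theorem, and then to verify the upper bound $\ubar{p}_{M} < \mu - a$ by direct evaluation of $\iota$ at that endpoint.

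First I would compute the endpoint behavior of $\iota$ in its first argument. Using $\lim_{p \to 0^{+}} p \ln(1 + 1/(2\kappa p)) = 0$ (which follows from $p |\ln(2\kappa p)| \to 0$) and $\lim_{p \to \infty} p \ln(1 + 1/(2\kappa p)) = 1/(2\kappa)$, we obtain $\lim_{\ubar{p}_{M} \to 0^{+}} \iota = a - \mu$ and $\lim_{\ubar{p}_{M} \to \infty} \iota = +\infty$. In particular, if $\mu \leq a$ the infimum of $\iota$ on $(0, \infty)$ is non-negative and (by the next step) not attained, so no positive root can exist.

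Next I would show that $\iota$ is strictly increasing in $\ubar{p}_{M}$. Differentiating and using the substitution $u \coloneqq 1/(2\kappa \ubar{p}_{M}) > 0$, a short computation yields
\[
\frac{\partial \iota}{\partial \ubar{p}_{M}} = \ln(1+u) - \frac{u}{1+u} + 1 = \ln(1+u) + \frac{1}{1+u},
\]
which is strictly positive for all $u > 0$, so $\iota$ is strictly increasing. Combined with the endpoint analysis, the intermediate value theorem delivers a unique positive root precisely when $a - \mu < 0$, i.e., when $\mu > a$, proving the ``if and only if'' clause.

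For the bound, assume $\mu > a$ and evaluate
\[
\iota(\mu - a, \kappa, \mu) = (\mu - a) \ln\!\left(1 + \frac{1}{2\kappa(\mu-a)}\right) > 0,
\]
so strict monotonicity forces the unique positive root to lie in $(0, \mu - a)$. The main obstacle is spotting the clean rewriting of $\partial \iota / \partial \ubar{p}_{M}$ in terms of $u$ that exhibits positivity as a sum of two manifestly positive terms; once this is done, the remaining limit computations and the endpoint evaluation are routine.
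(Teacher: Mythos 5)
Your proof is correct and follows essentially the same route as the paper: establish that $\iota$ is strictly increasing in $\ubar{p}_{M}$ on $(0,\infty)$, compute the limit $a-\mu$ at $0^{+}$, evaluate $\iota$ at $\mu-a$ to get a positive value, and invoke the intermediate value theorem. The only (minor, and arguably cleaner) difference is that you verify monotonicity by rewriting the derivative as $\ln(1+u)+\tfrac{1}{1+u}>0$ directly, whereas the paper reaches the same conclusion by noting that $\iota'$ is strictly decreasing with limit $1$ at infinity.
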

\begin{proof}
Directly,
\[\iota'\left(\ubar{p}_{M}\right) = -\frac{1}{2\kappa \ubar{p}_{M}+1}+\ln\left\{\frac{1}{2\kappa \ubar{p}_{M}}+1\right\}+1, \quad \text{and} \quad \iota''\left(\ubar{p}_{M}\right) = -\frac{1}{\ubar{p}_{M}\left(2\kappa \ubar{p}_{M}+1\right)^2}\text{ .}\]
Evidently, $\iota''\left(\ubar{p}_{M}\right) < 0$ for all $\ubar{p}_{M} > 0$. We claim that $\iota$ is strictly increasing in $\ubar{p}_{M}$ for $\ubar{p}_{M} > 0$. For all such $\ubar{p}_{M}$, $\iota'\left(\ubar{p}_{M}\right)$ is strictly decreasing in $\ubar{p}_{M}$. This, plus the fact that $\lim_{\ubar{p}_{M} \to \infty}\iota'\left(\ubar{p}_{M}\right) = 1$, implies that $\iota$ is strictly increasing in $\ubar{p}_{M}$ for all $\ubar{p}_{M} > 0$.

Then, we see that as $\ubar{p}_{M} \searrow 0$, $\iota \searrow a-\mu$, which is strictly negative if and only if $\mu > a$; and as $\ubar{p}_{M} \nearrow \mu-a$, $\iota \nearrow \left(\mu-a\right) \ln\left\{1 + 1/\left(2\kappa\left(\mu-a\right)\right)\right\}$, which is strictly positive if and only if $\mu > a$. Thus, if $\mu > a$, $\iota\left(\ubar{p}_{M}\right)$ has a unique positive root at some $\ubar{p}_{M} \in \left(0, \mu-a\right)$. \end{proof}

Directly, $\iota'\left(\mu\right) = -1$ and \[\iota'\left(\kappa\right) = -\frac{\ubar{p}_{M}}{2\ubar{p}_{M}\kappa^2+\kappa} < 0 \text{ ;}\]
and so by the implicit function theorem, $\ubar{p}_{M}'\left(\mu\right) > 0$ and $\ubar{p}_{M}'\left(\kappa\right) > 0$.
\end{proof}

\subsection{Lemma \ref{monopwelfarehid} Proof}

\begin{proof}
Recall that the monopolist's profit is just the lower bound for the price, $\ubar{p}_{M}$, and so its profit is strictly increasing in $\kappa$ and $\mu$.

The consumer's payoff is $\hat{\Phi} = \kappa \left(\mu-a - \ubar{p}_{M}\right)^2 + a$. Accordingly,
\[\hat{\Phi}'\left(\mu\right) = 2\kappa\left(\mu  - a - \ubar{p}_{M}\right)\left(1 - \ubar{p}_{M}'\left(\mu\right)\right) \text{ .}\]
Then,
\[\ubar{p}_{M}'\left(\mu\right) = \frac{1}{-\frac{1}{2\kappa \ubar{p}_{M}+1}+\ln\left\{\frac{1}{2\kappa \ubar{p}_{M}}+1\right\}+1} = \frac{z + 1}{\left(z + 1\right)\ln\left\{\frac{z + 1}{z}\right\} + z} \text{ ,}\]
where $z \coloneqq 2 \ubar{p}_{M} \kappa$. Moreover, by the definition of $\ubar{p}_{M}$
\[2 \kappa \left(\mu-a - \ubar{p}_{M}\right) = z \ln{\left\{\frac{z + 1}{z}\right\}} \text{ .}\]
Consequently, $\hat{\Phi}'\left(\mu\right)$ reduces to
\[\hat{\Phi}'\left(\mu\right) =  z \ln{\left\{\frac{z + 1}{z}\right\}}\left(1-\frac{z + 1}{\left(z + 1\right)\ln\left\{\frac{z + 1}{z}\right\} + z}\right) \text{ ,}\]
which is strictly positive for all $z > 0$.

Similarly,
\[\hat{\Phi}'\left(\kappa\right) = \left(\mu - a - \ubar{p}_{M}\right)^2 - 2\kappa\left(\mu - a - \ubar{p}_{M}\right)\ubar{p}_{M}'\left(\kappa\right) \text{ ,}\]
which is positive if and only if
\[\label{ineqa7}\tag{$A8$}\kappa\left(\mu - a - \ubar{p}_{M}\right) - 2\kappa^2\ubar{p}_{M}'\left(\kappa\right) \geq 0 \text{ ,}\]
since $\mu - a > \ubar{p}_{M}$. Then,
\[\ubar{p}_{M}'\left(\kappa\right) = \frac{\frac{\ubar{p}_{M}}{2\ubar{p}_{M}\kappa^2+\kappa}}{-\frac{1}{2\kappa \ubar{p}_{M}+1}+\ln\left\{\frac{1}{2\kappa \ubar{p}_{M}}+1\right\}+1} = \left(\frac{\ubar{p}_{M}}{\kappa}\right) \frac{1}{\left(z + 1\right)\ln\left\{\frac{z + 1}{z}\right\} + z} \text{ ,}\]
where again $z \coloneqq 2 \ubar{p}_{M} \kappa$. Inequality \ref{ineqa7} is, therefore, equivalent to
\[\tag{$A9$}\label{a9a9}\frac{1}{2} \ln{\left\{\frac{z + 1}{z}\right\}} - \frac{1}{\left(z + 1\right)\ln\left\{\frac{z + 1}{z}\right\} + z} \geq 0 \text{ ,}\]
which is equivalent to requiring that $\kappa \ubar{p}_{M} \lesssim .337$.
\end{proof}

\subsection{Proposition \ref{auxhelp} Proof}
\begin{proof}
First, we show that $\ubar{p}_{M}$ is strictly concave in $\kappa$ and $\ubar{p}$ is strictly convex in $\kappa$. 
\begin{claim}
$\ubar{p}_{M}$ is strictly concave in $\kappa$.
\end{claim}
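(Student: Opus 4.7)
The plan is to exploit the substitution $z := 2\kappa\,\ubar{p}_{M}$ (already used in the proof of Lemma \ref{monopwelfarehid}), which rewrites the defining equation from Lemma \ref{monopprice} as $\phi(z) = 2\kappa(\mu-a)$, where
\[\phi(z) := z\ln\left\{\frac{z+1}{z}\right\} + z.\]
Direct calculation yields $\phi'(z) = \ln\{(z+1)/z\} + z/(z+1) > 0$, $\phi''(z) = -1/[z(z+1)^{2}] < 0$, and the useful identity $\phi(z) - z\phi'(z) = z/(z+1)$. Because $\phi$ is strictly increasing and the right-hand side of $\phi(z(\kappa)) = 2\kappa(\mu-a)$ is linear in $\kappa$, implicit differentiation immediately gives $z'(\kappa) = 2(\mu-a)/\phi'(z)$ and $z''(\kappa) = -\phi''(z)\,z'(\kappa)^{2}/\phi'(z) > 0$.

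Since $\ubar{p}_{M}(\kappa) = z(\kappa)/(2\kappa)$, two differentiations yield
\[2\kappa^{3}\,\ubar{p}_{M}''(\kappa) = \kappa^{2} z''(\kappa) - 2\kappa z'(\kappa) + 2z(\kappa),\]
so strict concavity reduces to showing this right-hand side is strictly negative. Substituting the expressions for $z'$ and $z''$, using $2\kappa(\mu-a) = \phi(z)$ to eliminate $\kappa$, and invoking both the identity $\phi - z\phi' = z/(z+1)$ and the explicit form $\phi''(z) = -1/[z(z+1)^{2}]$, the desired inequality collapses to the purely algebraic claim
\[\phi(z)^{2} < 2 z^{2}(z+1)\,\phi'(z)^{2}, \qquad z > 0.\]

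The main obstacle is verifying this final inequality. Writing $L := \ln\{(z+1)/z\}$, the claim is equivalent to $(z+1)\bigl[(2z+1)L + 2z - 1\bigr]^{2} > 2$. For $z$ not too small, the standard Pad\'e-type lower bound $L \geq 2/(2z+1)$ gives $(2z+1)L + 2z - 1 \geq 2z+1$, whence elementary algebra shows $(z+1)(2z+1)^{2} > 2$ for all $z \geq z_{0}$ for some small threshold $z_{0} > 0$. For $z < z_{0}$, the inequality is trivial since $L \to \infty$ while the right-hand side remains bounded by a constant. A case split at any such $z_{0}$ closes the argument, and strict concavity of $\ubar{p}_{M}$ in $\kappa$ follows at once.
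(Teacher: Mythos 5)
Your reduction is correct and takes a genuinely different, cleaner route than the paper. The paper applies the second-order implicit function theorem directly to $\iota\left(\ubar{p},\kappa,\mu\right)$, grinds through the five partial derivatives, and arrives at Inequality \ref{ineqa11}; your change of variables $z = 2\kappa\ubar{p}_{M}$ turns the defining equation into $\phi\left(z\right) = 2\kappa\left(\mu-a\right)$ with a linear right-hand side, which makes $z'\left(\kappa\right)$ and $z''\left(\kappa\right)$ essentially free and isolates all the work in the single scalar inequality $\phi^2 < 2z^2\left(z+1\right)\phi'^2$. I checked that this inequality is algebraically identical to the paper's \ref{ineqa11} (expand $\left(z+1\right)\left[\left(2z+1\right)L+2z-1\right]^2 > 2$ and use $4z^3-3z-1 = \left(z-1\right)\left(2z+1\right)^2$), so the two proofs meet at the same final hurdle; your path to it is more transparent and also yields the sign of $z''$ as a bonus.

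The one genuine gap is in your closing case split. The Pad\'e bound $L \geq 2/\left(2z+1\right)$ gives $\left(z+1\right)\left[\cdot\right]^2 \geq \left(z+1\right)\left(2z+1\right)^2$, which exceeds $2$ only for $z \geq z_{0} \approx 0.155$; and for $z < z_{0}$ your appeal to ``$L \to \infty$'' is a statement about the limit $z \searrow 0$, not a bound valid on all of $\left(0, z_{0}\right)$. Near $z = 0.1$, for instance, $L \approx 2.4$ is nowhere near infinite, and the crude estimate $\left(2z+1\right)L + 2z - 1 \geq L - 1$ only closes the case $z \lesssim 0.09$, leaving an intermediate band uncovered. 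The inequality is in fact true throughout (the left side has minimum roughly $4.3$ near $z \approx 0.17$), so the claim survives, but you need a sharper uniform estimate on the middle range---e.g., a finer case split using $L \geq \ln\left(1+1/z_{0}\right)$ on $\left(0,z_{0}\right]$ together with the full bracket $\left(2z+1\right)L + 2z - 1$ rather than the two separate crude bounds. To be fair, the paper itself simply asserts that \ref{ineqa11} ``holds for all strictly positive $z$'' and defers to a Mathematica appendix, so your attempt at an analytic verification is already a step beyond it; it just is not yet airtight as written.
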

\begin{proof}
By the implicit function theorem (and dropping the subscript $M$),
\[\ubar{p}''\left(\kappa\right) = \frac{-\iota_{\ubar{p}}^2\iota_{\kappa\kappa} + 2\iota_{\ubar{p}}\iota_{\kappa}\iota_{\kappa \ubar{p}}-\iota_{\kappa}^2\iota_{\ubar{p}\ubar{p}}}{\iota_{\ubar{p}}^3} \text{ .}\]
Since $\iota_{\ubar{p}} > 0$ this is strictly negative if and only if 
\[\label{ineq8}\tag{$A10$}-\iota_{\ubar{p}}^2\iota_{\kappa\kappa} + 2\iota_{\ubar{p}}\iota_{\kappa}\iota_{\kappa \ubar{p}}-\iota_{\kappa}^2\iota_{\ubar{p}\ubar{p}} < 0 \text{ .}\]
Let us go through each term one by one. First,
\[-\iota_{\ubar{p}}^2\iota_{\kappa\kappa} = -\left(\frac{z}{z+1}+\ln\left\{\frac{1}{z}+1\right\}\right)^2 \frac{\ubar{p}\left(2 z+1\right)}{\kappa^2\left(z+1\right)^2} \text{ ,}\]
where $z \coloneqq 2\kappa \ubar{p}$. Second,
\[2\iota_{\ubar{p}}\iota_{\kappa}\iota_{\kappa \ubar{p}} = 2\left(\frac{z}{z+1}+\ln\left\{\frac{1}{z}+1\right\}\right)\frac{\ubar{p}}{\kappa^2}\left(\frac{1}{\left(z+1\right)^3}\right) \text{ .}\]
Third,
\[-\iota_{\kappa}^2\iota_{\ubar{p}\ubar{p}} =  \frac{\ubar{p}}{\kappa^2}\left(\frac{1}{\left(z+1\right)^4}\right)\]
Summing these together (and dividing out by positive terms), Inequality \ref{ineqa8} holds if and only if
\[\label{ineqa11}\tag{$A11$}2\ln\left(\frac{1}{z}+1\right)\left(1-2z\right)+\frac{2z+1}{z+1}\left(1-z\right)-\left(\ln\left(\frac{1}{z}+1\right)\right)^{2}\left(2z+1\right) < 0 \text{ ,}\]
which holds for all strictly positive $z$.
\end{proof}
\begin{claim}
$\ubar{p}$ is strictly convex in $\kappa$.
\end{claim}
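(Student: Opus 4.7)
Define the implicit function
\[J\left(\ubar{p},\kappa\right) \coloneqq \ubar{p}\,\ln\left\{1 + \frac{1}{2\kappa\ubar{p}}\right\} - \sqrt{\frac{c}{\kappa}}\text{ ,}\]
so that in the active-search region, $\ubar{p}$ is implicitly defined by $J\left(\ubar{p},\kappa\right) = 0$. The plan is to mirror the argument used for the strict concavity of $\ubar{p}_{M}$: apply the implicit function theorem twice to obtain
\[\ubar{p}''\left(\kappa\right) = \frac{-J_{\ubar{p}}^{2}J_{\kappa\kappa} + 2 J_{\ubar{p}}J_{\kappa}J_{\kappa\ubar{p}} - J_{\kappa}^{2}J_{\ubar{p}\ubar{p}}}{J_{\ubar{p}}^{3}}\text{ ,}\]
and then show that the numerator is strictly positive (the denominator is positive once one verifies $J_{\ubar{p}} > 0$ by the same monotonicity argument used for $\iota_{\ubar{p}}$, since the $\sqrt{c/\kappa}$ term does not depend on $\ubar{p}$ and the $\ubar{p}\ln\{\cdot\}$ piece is identical to the corresponding piece of $\iota$).

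Next, I would compute the five partial derivatives $J_{\ubar{p}}, J_{\kappa}, J_{\kappa\kappa}, J_{\ubar{p}\ubar{p}}, J_{\kappa\ubar{p}}$. The partials in $\ubar{p}$ alone are identical (up to a sign-free constant) to those of $\iota$ computed in the previous claim, since the two functions differ only by a constant (in $\ubar{p}$) term. The new feature is that $J_{\kappa}$ and $J_{\kappa\kappa}$ pick up additional contributions from differentiating $-\sqrt{c/\kappa}$, namely $+\tfrac{1}{2}\sqrt{c/\kappa^{3}}$ and $-\tfrac{3}{4}\sqrt{c/\kappa^{5}}$, respectively. Introducing the substitution $z \coloneqq 2\kappa\ubar{p}$ and using the defining equation $J = 0$ to eliminate $\sqrt{c/\kappa}$ in favor of $\ubar{p}\ln\{1 + 1/z\}$, each partial reduces to a rational expression in $z$ (times a positive scalar in $\ubar{p}$ and $\kappa$).

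Plugging these back into the numerator of the second-derivative formula and factoring out the positive common scalars, the problem reduces to verifying a single-variable inequality in $z > 0$, entirely analogous to Inequality \ref{ineqa11} (but with reversed sign and with the extra terms generated by $\sqrt{c/\kappa}$ absorbed via the substitution). The main obstacle is precisely this last step: because the $\sqrt{c/\kappa}$ contributions enter $J_{\kappa}$ and $J_{\kappa\kappa}$ but not $J_{\kappa\ubar{p}}$ or $J_{\ubar{p}\ubar{p}}$, the cross terms do not cancel as cleanly as in the previous claim, and one is left with an inequality involving $\ln\{1 + 1/z\}$, $\ln\{1 + 1/z\}^{2}$, and rational functions of $z$. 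I would verify it by checking the sign of its derivative and the limiting behavior as $z \searrow 0$ and $z \to \infty$ (invoking the standard bounds $z/(z+1) < \ln\{1 + 1/z\} < 1/z$), exactly as in the proof of the preceding claim. Strict convexity of $\ubar{p}$ in $\kappa$ then follows.
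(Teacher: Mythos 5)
Your plan is correct and follows essentially the same route as the paper: the paper likewise applies the implicit function theorem twice to $\varphi\left(\kappa,\ubar{p},c\right) = \ubar{p}\ln\left\{1+\tfrac{1}{2\kappa\ubar{p}}\right\}-\sqrt{c/\kappa}$, computes the five partials, substitutes $x \coloneqq 2\kappa\ubar{p}$ and eliminates $\sqrt{c\kappa}$ via $\varphi = 0$, and reduces the sign of the numerator to a single-variable inequality in $x$. Note only that the paper (like your plan) ultimately just asserts the resulting logarithmic inequality holds for all $x>0$, deferring the verification to its supplementary appendix, so you would still need to carry out that last check explicitly.
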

\begin{proof}
Let us define the function $\varphi\left(\kappa, \ubar{p}, c\right)$ as
\[\varphi\left(\kappa, \ubar{p}, c\right) \coloneqq \ubar{p}\ln{\left\{1 + \frac{1}{2\kappa\ubar{p}}\right\}} - \sqrt{\frac{c}{\kappa}} \text{ .}\]
Evidently, $\varphi$ is twice continuously differentiable in each of its variables on $\mathbb{R}_{++}^{3}$. 

We shall break protocol slightly in this proof and use results that are derived (not much) later on in the paper (that $\varphi$ is strictly increasing in $\ubar{p}$ for $\ubar{p} > 0$, which we establish in Section \ref{costathid1proof}). Mirroring the proof of the previous claim, we need to show that \[\label{ineqa9}\tag{$A12$}-\varphi_{\ubar{p}}^2\varphi_{\kappa\kappa} + 2\varphi_{\ubar{p}}\varphi_{\kappa}\varphi_{\kappa \ubar{p}}-\varphi_{\kappa}^2\varphi_{\ubar{p}\ubar{p}} > 0 \text{ .}\]
Again, let us go through each term one by one. First, 
\[\begin{split}
    -\varphi_{\ubar{p}}^2\varphi_{\kappa\kappa} &= -\left(\ln{\left\{1 + \frac{1}{x}\right\}} - \frac{1}{1+x}\right)^2\left(\frac{x\left(2x+1\right)}{2 \left(1+x\right)^2\kappa^3} - \frac{3 \sqrt{c\kappa}}{4\kappa^3}\right)\\
    &= -\left(\ln{\left\{1 + \frac{1}{x}\right\}} - \frac{1}{1+x}\right)^2\left(\frac{x\left(2x+1\right)}{2 \left(1+x\right)^2\kappa^3}-\frac{3x\ln{\left\{1 + \frac{1}{x}\right\}}}{8\kappa^3}\right)
\end{split} \text{ ,}\]
where $x \coloneqq 2\kappa \ubar{p}$ and since we must have \[\sqrt{c\kappa}= \frac{x}{2}\ln{\left\{1+\frac{1}{x}\right\}} \text{ .}\] Second,
\[\begin{split}
    2\varphi_{\ubar{p}}\varphi_{\kappa}\varphi_{\kappa \ubar{p}} &= -2\left(\ln{\left\{1 + \frac{1}{x}\right\}} - \frac{1}{1+x}\right)\left(\frac{\sqrt{c\kappa}}{2\kappa^2}-\frac{x}{2\left(x+1\right)\kappa^2}\right)\left(\frac{1}{\kappa\left(x+1\right)^2}\right)\\
    &= -2\left(\ln{\left\{1 + \frac{1}{x}\right\}} - \frac{1}{1+x}\right)\left(\frac{x\ln{\left\{1 + \frac{1}{x}\right\}}}{4\kappa^2}-\frac{x}{2\left(x+1\right)\kappa^2}\right)\left(\frac{1}{\kappa\left(x+1\right)^2}\right)\\
\end{split} \text{.}\]
Third,\[\begin{split}
    -\varphi_{\kappa}^2\varphi_{\ubar{p}\ubar{p}} &= \left(\frac{x\ln{\left\{1 + \frac{1}{x}\right\}}}{4\kappa^2}-\frac{x}{2\left(x+1\right)\kappa^2}\right)^2\left(\frac{1}{\ubar{p}}\right)\frac{1}{\left(1+x\right)^2}
\end{split} \text{.}\]
Summing these together (and dividing out by positive terms), Inequality \ref{ineqa9} holds if and only if
\[3\ln^{3}\left(\frac{1}{x}+1\right)\left(x+1\right)^{3}+\ln^{2}\left(\frac{1}{x}+1\right)\left(x+1\right)\left(-14x-13\right)+19\ln\left(\frac{1}{x}+1\right)\left(x+1\right)-8 > 0 \text{ ,}\]
which holds for all strictly positive $x$.
\end{proof}

Next, we establish that $\ubar{p}_{M}$ ranges from $0$ to some finite positive number and that $\ubar{p}$ is U-shaped in $\kappa$ and explodes at $\kappa = 0$ and $\kappa = 1/\left(4c\right)$.
\begin{claim}
$\infty > \lim_{\kappa \nearrow 1/\left(4c\right)} \ubar{p}_{M} > 0$ and $\lim_{\kappa \searrow 0} \ubar{p}_{M} = 0$.
\end{claim}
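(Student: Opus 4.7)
The plan is to exploit the implicit equation
\[\ubar{p}_{M}\ln\!\left\{1 + \frac{1}{2\kappa\ubar{p}_{M}}\right\} + \ubar{p}_{M} = \mu - a \text{ ,}\]
together with two facts already established in the proof of Lemma \ref{monopprice}: the unique positive root satisfies $\ubar{p}_{M} \in (0, \mu - a)$, and by the implicit function theorem $\ubar{p}_{M}$ is continuously differentiable and strictly increasing in $\kappa$ on $(0, 1/(4c))$ (since $\iota_{\ubar{p}} > 0$ and $\iota_{\kappa} < 0$).

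For the first limit, strict monotonicity and continuity in $\kappa$ guarantee that $\lim_{\kappa \nearrow 1/(4c)} \ubar{p}_{M}$ exists. The upper bound $\ubar{p}_{M} < \mu - a$ makes the limit finite, while fixing any $\kappa_{0} \in (0, 1/(4c))$ and invoking monotonicity yields $\ubar{p}_{M}(\kappa) \geq \ubar{p}_{M}(\kappa_{0}) > 0$ for all $\kappa \geq \kappa_{0}$, so the limit is strictly positive.

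For the second limit, I would argue by contradiction: suppose $\limsup_{\kappa \searrow 0} \ubar{p}_{M} \geq \varepsilon$ for some $\varepsilon > 0$, so that along some sequence $\kappa_{n} \searrow 0$ one has $\ubar{p}_{M}(\kappa_{n}) \geq \varepsilon$. Since $\ubar{p}_{M} \leq \mu - a$, the argument of the logarithm satisfies $1/(2\kappa_{n}\ubar{p}_{M}) \geq 1/(2\kappa_{n}(\mu - a)) \to \infty$, and hence $\ln(1 + 1/(2\kappa_{n}\ubar{p}_{M})) \geq \ln(1 + 1/(2\kappa_{n}(\mu - a))) \to \infty$. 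Multiplying by $\ubar{p}_{M} \geq \varepsilon$ shows that the left-hand side of the displayed equation diverges, contradicting the bound $\mu - a$ on the right-hand side. Hence $\ubar{p}_{M} \to 0$.

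The only delicate step is pointing the inequality on $\ln(1 + 1/(2\kappa\ubar{p}_{M}))$ in the helpful direction for the second limit; the key observation is that replacing $\ubar{p}_{M}$ by its upper bound $\mu - a$ inside the logarithm shrinks the argument and so only \emph{lowers} the logarithm, making the divergence as $\kappa \to 0$ uniform across admissible values of $\ubar{p}_{M}$. Everything else is straightforward book-keeping with the monotonicity and bounds already in hand.
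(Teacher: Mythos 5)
Your proof is correct and follows essentially the same route as the paper: the first limit from the bound $\ubar{p}_{M} \in \left(0,\mu-a\right)$ plus strict monotonicity in $\kappa$, and the second by contradiction, showing that if $\ubar{p}_{M}$ stayed bounded away from $0$ the left-hand side of the defining equation would diverge as $\kappa \searrow 0$. The only cosmetic difference is that you lower-bound the logarithm by substituting the upper bound $\mu-a$ for $\ubar{p}_{M}$, whereas the paper invokes the monotonicity of $\iota$ in $\ubar{p}_{M}$ to compare with $\iota\left(L\right)$; both deliver the same divergence.
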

\begin{proof}
The first statement in the claim follows immediately from the analysis in Section \ref{monoppriceproof}. Suppose for the sake of contradiction that $\ubar{p}_{M} > L$ for some constant $L > 0$ for all $\kappa > 0$. Thus, 
\[\lim_{\kappa \searrow 0} \ubar{p}_{M} \ln{\left\{1 + \frac{1}{2\kappa\ubar{p}_{M}}\right\}} + \ubar{p}_{M} - \mu \geq \lim_{\kappa \searrow 0} L \ln{\left\{1 + \frac{1}{2\kappa L}\right\}} + L - \mu = \infty \text{ .}\]
Since $\iota$ is continuous and strictly decreasing in $\kappa$, this means that there is some $\hat{\kappa} > 0$ such that for all $0 < \kappa < \hat{\kappa}$, $\iota\left(\kappa\right) > 0$, a contradiction.
\end{proof}

\begin{claim}
$\lim_{\kappa \nearrow 1/\left(4c\right)}\ubar{p} = \lim_{\kappa \searrow 0}\ubar{p} = \infty$.
\end{claim}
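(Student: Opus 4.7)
The plan is to reduce the implicit equation defining $\ubar{p}$,
\[\ubar{p}\ln\left\{1 + \frac{1}{2\kappa \ubar{p}}\right\} = \sqrt{\frac{c}{\kappa}} \text{ ,}\]
to a one-variable monotone problem via the substitution $y \coloneqq 2\kappa \ubar{p}$. The equation then reads $\psi(y) = 2\sqrt{c\kappa}$, where $\psi(y) \coloneqq y\ln(1+1/y)$. A direct computation gives $\psi'(y) = \ln(1+1/y) - 1/(y+1)$, which is strictly positive for all $y > 0$ by the standard inequality $\ln(1+u) > u/(1+u)$ applied with $u = 1/y$. Combined with the limits $\psi(y) \to 0$ as $y \searrow 0$ and $\psi(y) \to 1$ as $y \to \infty$ (the latter from the expansion $\ln(1+1/y) = 1/y - 1/(2y^2) + \cdots$), this shows $\psi$ is a strictly increasing homeomorphism from $(0,\infty)$ onto $(0,1)$. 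Equivalently, $f_\kappa(\ubar{p}) \coloneqq \ubar{p}\ln\{1+1/(2\kappa\ubar{p})\} = \psi(2\kappa\ubar{p})/(2\kappa)$ is strictly increasing in $\ubar{p}$ with range $(0, 1/(2\kappa))$, so $\ubar{p} = f_\kappa^{-1}(\sqrt{c/\kappa})$ is uniquely pinned down.

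For $\kappa \nearrow 1/(4c)$, the right-hand side $2\sqrt{c\kappa}$ approaches $1 = \sup \psi$, forcing $y = \psi^{-1}(2\sqrt{c\kappa}) \to \infty$. Since $\kappa$ remains bounded (in a neighborhood of $1/(4c)$), $\ubar{p} = y/(2\kappa) \to \infty$.

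For $\kappa \searrow 0$, I would argue by contradiction: suppose along some sequence $\kappa_n \searrow 0$ we had $\ubar{p}_n \coloneqq \ubar{p}(\kappa_n) \leq M$ for all $n$. Monotonicity of $f_{\kappa_n}$ then yields
\[\sqrt{\frac{c}{\kappa_n}} \,=\, f_{\kappa_n}(\ubar{p}_n) \,\leq\, f_{\kappa_n}(M) \,=\, M\ln\left\{1 + \frac{1}{2\kappa_n M}\right\} \,=\, O\bigl(\ln(1/\kappa_n)\bigr)\]
as $\kappa_n \to 0$. But the left-hand side grows like $\kappa_n^{-1/2}$, which dominates any logarithm, a contradiction for $n$ large. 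Hence $\ubar{p}(\kappa) \to \infty$ as $\kappa \searrow 0$.

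The main obstacle is the $\kappa \searrow 0$ direction: in the transformed coordinate the natural object $y = 2\kappa\ubar{p}$ actually collapses to $0$, so divergence of $\ubar{p} = y/(2\kappa)$ must be extracted from the order-of-growth mismatch between $\sqrt{c/\kappa}$ (blowing up polynomially) and the supremum of $f_\kappa$ over any bounded interval of $\ubar{p}$ (growing only logarithmically). The monotonicity of $f_\kappa$ converts this mismatch into the required lower bound on $\ubar{p}$.
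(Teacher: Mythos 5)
Your proof is correct. The two arguments coincide on the $\kappa \searrow 0$ direction: the paper also supposes $\ubar{p}$ bounded by some $M$, uses monotonicity of $\varphi$ in $\ubar{p}$ to bound $\ubar{p}\ln\left\{1+1/(2\kappa\ubar{p})\right\}$ by $M\ln\left\{1+1/(2\kappa M)\right\}$, and then exploits exactly the same growth mismatch (logarithmic versus $\kappa^{-1/2}$), computing $\lim_{\kappa\searrow 0} M\ln\left\{1+1/(2\kappa M)\right\}/\sqrt{c/\kappa}=0<1$. Where you genuinely diverge is the $\kappa \nearrow 1/(4c)$ direction and the overall scaffolding. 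The paper runs a second contradiction argument there ($K\ln\left\{1+2c/K\right\}-2c<0$ by $\ln(1+u)<u$), and in both directions it first invokes the comparative-statics result from Proposition \ref{costathid1} that $\ubar{p}(\kappa)$ is monotone on $(0,\tilde{\kappa})$ and on $(\tilde{\kappa},1/(4c))$ -- this is what licenses passing from ``the limit is not $+\infty$'' to ``$\ubar{p}$ is bounded on the whole interval.'' Your substitution $y=2\kappa\ubar{p}$ and the homeomorphism $\psi\colon(0,\infty)\to(0,1)$ make the upper endpoint immediate ($2\sqrt{c\kappa}\nearrow 1=\sup\psi$ forces $y\to\infty$, and $1/(2\kappa)\geq 2c$), and your sequence formulation of the $\kappa\searrow 0$ case sidesteps the monotonicity-in-$\kappa$ input entirely. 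So your argument is more self-contained (it only needs monotonicity of $\psi$, i.e., of $\varphi$ in $\ubar{p}$, which you prove inline and which also re-derives existence and uniqueness of the root), at the cost of not reusing machinery the paper has already built; the paper's route is shorter given that Proposition \ref{costathid1} is proved elsewhere.
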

\begin{proof}
As is shown in Section \ref{costathid1proof}, $\ubar{p}$ is strictly increasing in $\kappa$ for all $\kappa \in \left(\tilde{\kappa}, 1/\left(4c\right)\right)$, where $\tilde{\kappa} \in \left(0, 1/\left(4c\right)\right)$. Suppose for the sake of contradiction that $\ubar{p}$ is bounded above by $K < \infty$ for all $\kappa \in \left(\tilde{\kappa}, 1/\left(4c\right)\right)$. We have
\[\lim_{\kappa \nearrow \frac{1}{4c}}\ubar{p}\ln{\left\{1 + \frac{1}{2\kappa\ubar{p}}\right\}} - \sqrt{\frac{c}{\kappa}} \leq \lim_{\kappa \nearrow \frac{1}{4c}}K\ln{\left\{1 + \frac{1}{2\kappa K}\right\}} - \sqrt{\frac{c}{\kappa}} = K\ln{\left\{1 + \frac{2c}{K}\right\}} - 2c < 0 \text{ ,}\]
a contradiction.

Similarly, $\ubar{p}$ is strictly decreasing in $\kappa$ for all $\kappa \in \left(0, \tilde{\kappa}\right)$. Suppose for the sake of contradiction that $\ubar{p}$ is bounded above by $M < \infty$ for all $\kappa \in \left(0, \tilde{\kappa}\right)$. We have
\[\lim_{\kappa \searrow 0} \frac{\ubar{p}\ln{\left\{1 + \frac{1}{2\kappa\ubar{p}}\right\}}}{\sqrt{\frac{c}{\kappa}}} \leq \lim_{\kappa \searrow 0} \frac{M\ln{\left\{1 + \frac{1}{2\kappa M}\right\}}}{\sqrt{\frac{c}{\kappa}}} = 0 < 1 \text{ ,}\]
a contradiction.
\end{proof}

Next, (setting $a = 0$)
\[\iota\left(\mu - \sqrt{\frac{c}{\kappa}}\right) = \left(\mu - \sqrt{\frac{c}{\kappa}}\right) \ln{\left\{1 + \frac{1}{2\kappa\left(\mu - \sqrt{\frac{c}{\kappa}}\right)}\right\}} - \sqrt{\frac{c}{\kappa}} \eqqcolon \rho\left(\mu\right) \text{ ,}\]
which is equal to $0$ for $\kappa > 0$ if and only if
\[x\left(\mu\right) \ln{\left(1+\frac{1}{x\left(\mu\right)}\right)} = 2 \sqrt{c \kappa} \text{ ,}\]
where $x\left(\mu\right) \coloneqq 2 \kappa \mu - 2 \sqrt{c\kappa}$. Then, it is easy to see that the left-hand side of this equation is strictly increasing and strictly concave in $\mu$ and converges to $1$ as $\mu$ goes to $\infty$. On the other hand, the right-hand side is a constant (for $c$ and $\kappa$ fixed) and is strictly less than $1$, since $\kappa < 1/\left(4c\right)$. Thus, for all (allowed) $\kappa$ and $c$, there exists a $\mu$ such that the equilibrium involves active search. Furthermore, if $\mu \leq \ (<) \sqrt{c/\kappa}$, $\ubar{p}_{M}$ is obviously (strictly) greater than $\mu - \sqrt{c/\kappa}$.

Finally, since $\ubar{p}_{M}$ is continuous and strictly increasing in $\mu$ for $\mu > 0$ and ranges from $0$ to some positive value, $\ubar{p}$ is U-shaped and explodes at its endpoints, and $\ubar{p}_{M}$ is strictly concave in $\kappa$ and $\ubar{p}$ is strictly convex in $\kappa$, the result follows.
\end{proof}

\subsection{Proposition \ref{costathid1} Proof}\label{costathid1proof}
\begin{proof}
Let us begin with the following claim.
\begin{claim}\label{priceexistence}
$\varphi\left(\ubar{p}\right)$ has a real root at some $\ubar{p} > 0$ if and only if $c < 1/\left(4\kappa\right)$. This root is unique. 
\end{claim}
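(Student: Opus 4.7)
The plan is to analyze the function $\ubar{p} \mapsto f(\ubar{p}) \coloneqq \ubar{p}\ln\{1 + 1/(2\kappa \ubar{p})\}$, so that $\varphi(\ubar{p}) = f(\ubar{p}) - \sqrt{c/\kappa}$, and reduce the claim to showing that $f$ is a strictly increasing bijection from $(0,\infty)$ onto $(0, 1/(2\kappa))$. Existence and uniqueness of a positive root of $\varphi$ are then equivalent (by the intermediate value theorem together with strict monotonicity) to the condition $\sqrt{c/\kappa} \in (0, 1/(2\kappa))$, i.e.\ $c < 1/(4\kappa)$.

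First I would establish strict monotonicity of $f$ by direct differentiation. Writing $y \coloneqq 1/(2\kappa \ubar{p}) > 0$, a short computation gives
\[
f'(\ubar{p}) \;=\; \ln(1+y) \;-\; \frac{y}{1+y}.
\]
Because the auxiliary function $\phi(y) \coloneqq \ln(1+y) - y/(1+y)$ vanishes at $y=0$ and has derivative $\phi'(y) = y/(1+y)^2 > 0$ for all $y>0$, we have $\phi(y)>0$, hence $f'(\ubar{p})>0$ for all $\ubar{p}>0$. This is the only slightly delicate step, and it is the main (minor) obstacle: one must recognize the inequality $\ln(1+y) > y/(1+y)$ as the source of monotonicity; everything else is routine.

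Next I would compute the two boundary limits. Using $\ln(1+y)\sim y$ as $y\to 0$ shows $f(\ubar{p})\to 1/(2\kappa)$ as $\ubar{p}\to\infty$. For the other endpoint, a change of variable to $y = 1/(2\kappa \ubar{p})$ gives $f(\ubar{p}) = \ln(1+y)/(2\kappa y) \to 0$ as $\ubar{p}\to 0^+$ (equivalently $y\to\infty$), since $\ln(1+y)/y\to 0$. Combined with strict monotonicity and continuity, this shows $f$ is a strictly increasing homeomorphism from $(0,\infty)$ onto $(0, 1/(2\kappa))$.

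Finally, I would conclude: for fixed $\kappa,c>0$, the equation $f(\ubar{p}) = \sqrt{c/\kappa}$ admits a (necessarily unique) solution $\ubar{p}>0$ if and only if $\sqrt{c/\kappa}$ lies in the open range $(0, 1/(2\kappa))$, which is equivalent to $c < 1/(4\kappa)$. This is precisely the maintained parametric assumption in the model, so under that assumption $\varphi$ has a unique positive root, as claimed.
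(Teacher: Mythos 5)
Your proof is correct and follows essentially the same route as the paper's: establish that $\varphi$ is strictly increasing in $\ubar{p}$ on $\left(0,\infty\right)$, compute the boundary limits $-\sqrt{c/\kappa}$ (as $\ubar{p}\searrow 0$) and $1/\left(2\kappa\right)-\sqrt{c/\kappa}$ (as $\ubar{p}\nearrow\infty$), and invoke the intermediate value theorem. The only cosmetic difference is that you verify $f'>0$ directly via the elementary inequality $\ln\left(1+y\right)>y/\left(1+y\right)$, whereas the paper deduces it from the concavity of $\varphi$ together with $\lim_{\ubar{p}\to\infty}\varphi'\left(\ubar{p}\right)=0$; the two arguments are equivalent.
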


\begin{proof}
First, we argue that $\varphi$ cannot have a root at some $\ubar{p} < 0$. Directly,
\[\varphi'\left(\ubar{p}\right) = \ln{\left\{1 + \frac{1}{2\kappa\ubar{p}}\right\}} - \frac{1}{1+2\kappa\ubar{p}}, \quad \text{and} \quad \varphi''\left(\ubar{p}\right) = -\frac{1}{\ubar{p}\left(2\kappa\ubar{p}+1\right)^2} \text{ .}\]
Evidently, $\varphi''\left(\ubar{p}\right) > 0$ for all $\ubar{p} < 0$. Consequently, since $\varphi'\left(\ubar{p}\right)$ is strictly increasing in $\ubar{p}$ for all $\ubar{p} \in \left(-\infty,0\right)$ and $\lim_{\ubar{p} \to -\infty}\varphi'\left(\ubar{p}\right) = 0$, $\varphi$ is strictly increasing in $\ubar{p}$ for all $\ubar{p} \in \left(-\infty,0\right)$. Since $\lim_{\ubar{p} \nearrow 0}\iota\left(\ubar{p}_{M}\right) = -\sqrt{c/\kappa} < 0$, we conclude that there is no root.

Second, $\lim_{\ubar{p} \searrow 0}\varphi\left(\ubar{p}\right) = -\sqrt{c/\kappa}$, thereby eliminating $0$ as a candidate root.

Third, we establish that $\varphi$ is strictly increasing in $\ubar{p}$ for strictly positive $\ubar{p}$. Since $\varphi'\left(\ubar{p}\right)$ is strictly decreasing in $\ubar{p}$ and $\lim_{\ubar{p} \to \infty}\varphi'\left(\ubar{p}\right) = 0$, $\varphi$ is strictly increasing in $\ubar{p}$ for all $\ubar{p}$. 

Fourth, as $\ubar{p} \searrow 0$, $\varphi \searrow -\sqrt{c/\kappa} < 0$; and as $\ubar{p} \nearrow + \infty$, $\varphi \nearrow 1/\left(2\kappa\right) - \sqrt{c/\kappa}$. Thus, $\varphi\left(\ubar{p}\right)$ has a positive real root if and only if $c < 1/\left(4\kappa\right)$. It is unique.
\end{proof} 

Evidently, $\varphi'\left(c\right) < 0$. Then, by the implicit function theorem,
\[\ubar{p}'\left(c\right) = -\frac{\varphi'\left(c\right)}{\varphi'\left(\ubar{p}\right)} > 0 \text{,}\]
where we used the result from Claim \ref{priceexistence} that $\varphi'\left(\ubar{p}\right) > 0$.

Next, 
\[\varphi'\left(\kappa\right) = \frac{\sqrt{c\kappa}}{2\kappa^2}-\frac{1}{2\left(\frac{1}{2\ubar{p}\kappa}+1\right)\kappa^2} \text{ ,}\]
which is positive if and only if
\[\ubar{p} \leq \frac{\sqrt{c}}{2\sqrt{\kappa}-2\kappa\sqrt{c}} \text{ .}\]
Then,
\[\begin{split}
    \varphi\left(\frac{\sqrt{c}}{2\sqrt{\kappa}-2\kappa\sqrt{c}}\right) \geq 0 \quad &\Leftrightarrow \quad \frac{\sqrt{c}}{2\sqrt{\kappa}-2\kappa\sqrt{c}}\ln{\left\{1 + \frac{1}{2\kappa\frac{\sqrt{c}}{2\sqrt{\kappa}-2\kappa\sqrt{c}}}\right\}} - \sqrt{\frac{c}{\kappa}} \geq 0\\
    &\Leftrightarrow \quad -\ln{\left\{\sqrt{c \kappa}\right\}} - 2+2\sqrt{c\kappa} \geq 0
\end{split} \text{ .}\]
By the implicit function theorem, $\ubar{p}$ is decreasing in $\kappa$ if and only if $\varphi'\left(\kappa\right) \geq 0$, if and only if $h\left(c,\kappa\right) \coloneqq \ln{\left\{\sqrt{c \kappa}\right\}} + 2 - 2\sqrt{c\kappa} \leq 0$. Note that $h$ has a unique root at $\tilde{\kappa} \in \left(0, 1/\left(4c\right)\right)$.
\end{proof}

\subsection{Lemma \ref{payoffkappa} Proof}
\begin{proof}
Let us begin with the result pertaining to the explicit search cost $c$. Directly, 
\[\hat{\Phi}'\left(c\right) = 1 - \ubar{p}'\left(c\right) - \frac{1}{2\sqrt{c\kappa}} < 0 \text{ ,}\]
 since $\ubar{p}'\left(c\right) > 0$ and $c < 1/\left(4\kappa\right)$. Next, 
 \[\hat{\Phi}'\left(\kappa\right) = - \ubar{p}'\left(\kappa\right) + \frac{1}{2\kappa}\sqrt{\frac{c}{\kappa}} \geq 0 \quad \Leftrightarrow \quad \frac{\varphi'\left(\kappa\right)}{\varphi'\left(\ubar{p}\right)} \geq  - \frac{1}{2\kappa}\sqrt{\frac{c}{\kappa}} \text{ .}\]
 Substituting in for $\varphi'\left(\kappa\right)$ and $\varphi'\left(\ubar{p}\right)$, this holds if and only if
 \[\frac{\frac{\sqrt{c\kappa}}{2\kappa^2}-\frac{1}{2\left(\frac{1}{2\ubar{p}\kappa}+1\right)\kappa^2}}{\ln{\left\{1 + \frac{1}{2\kappa\ubar{p}}\right\}} - \frac{1}{1+2\kappa\ubar{p}}} \geq  - \frac{1}{2\kappa}\sqrt{\frac{c}{\kappa}} \quad \Leftrightarrow \quad 2\ubar{p}\kappa\left(\sqrt{c\kappa}-1\right)+\sqrt{c\kappa}\ln\left(1+\frac{1}{2\kappa\ubar{p}}\right)\left(1+2\ubar{p}\kappa\right) \geq 0 \text{ .}\]
 Enacting a change of variable with $x \coloneqq \kappa \ubar{p}$ and substituting in using $\varphi = 0$, the latter inequality becomes
 \[w\left(x\right) \coloneqq \left(1+\frac{1}{2x}\right)\ln{\left\{1+\frac{1}{2x}\right\}} + 1 - \frac{1}{x\ln{\left\{1+\frac{1}{2x}\right\}}} \geq 0 \text{ .}\]
 Note that $x$ is strictly increasing in $\kappa$ for all $\kappa \in \left(\tilde{\kappa},1/\left(4c\right)\right)$. Define $\tilde{x} \coloneqq \tilde{\kappa} \ubar{p}\left(\tilde{\kappa}\right)$ and observe that by construction, $w\left(\tilde{x}\right) > 0$. Moreover, $w$ is strictly decreasing in $x$ and $\lim_{x \to \infty}w\left(x\right) = -1$. Appealing to the intermediate value theorem, we deduce the result. \end{proof}

\subsection{Proposition \ref{monopcompare} Proof}\label{monopcompareproof}
\begin{proof}Recall
\[\iota\left(\ubar{p},\kappa,\mu\right) \coloneqq \ubar{p} \ln{\left\{1 + \frac{1}{2\kappa\ubar{p}}\right\}} + \ubar{p} - \mu \text{ ,}\]
(where we have set $a = 0$), which is strictly increasing in $\ubar{p}$. 

First, let $\mu \geq 3/\left(4\kappa\right)$. We have
\[\iota\left(p\right) < \iota\left(\mu-\frac{1}{2\kappa}\right) = \left(\mu-\frac{1}{2\kappa}\right) \ln{\left\{1 + \frac{1}{2\kappa\left(\mu-\frac{1}{2\kappa}\right)}\right\}} + \left(\mu-\frac{1}{2\kappa}\right) - \mu \text{ ,}\]
for all $p < \mu - 1/\left(2\kappa\right)$. The right-hand side of this expression is weakly positive if and only if
\[\left(z - 1\right)\ln{\left\{\frac{z}{z - 1}\right\}} - 1 \geq 0 \text{ , }\]
where $z \coloneqq 2\kappa \mu$, which never holds. Thus, $\ubar{p} \geq \mu - 1/\left(2\kappa\right)$. We also have
\[\iota\left(p\right) > \iota\left(\mu-\frac{1}{4\kappa}\right) = \left(\mu-\frac{1}{4\kappa}\right) \ln{\left\{1 + \frac{1}{2\kappa\left(\mu-\frac{1}{4\kappa}\right)}\right\}} + \left(\mu-\frac{1}{4\kappa}\right) - \mu \text{ ,}\]
for all $p > \mu - 1/\left(4\kappa\right)$. The right-hand side of this expression is weakly positive if and only if \[\tag{$A13$}\label{ineqa8}\left(z - 1\right)\ln{\left\{\frac{z+1}{z - 1}\right\}} - 1 \geq 0 \text{ , }\]
where $z \coloneqq 4\kappa \mu$. Since $\mu \geq 3/\left(4\kappa\right)$, $z \geq 3$ and so Inequality \ref{ineqa8} holds. Consequently, $\ubar{p} \leq \mu - 1/\left(4\kappa\right)$.

Second, let $3/\left(4\kappa\right) \geq \mu > 0$. We have
\[\iota\left(p\right) < \iota\left(\frac{\mu}{2}-\frac{1}{8\kappa}\right) = \left(\frac{\mu}{2}-\frac{1}{8\kappa}\right) \ln{\left\{1 + \frac{1}{2\kappa\left(\frac{\mu}{2}-\frac{1}{8\kappa}\right)}\right\}} + \left(\frac{\mu}{2}-\frac{1}{8\kappa}\right) - \mu \text{ ,}\]
for all $p < \mu/2 - 1/\left(8\kappa\right)$. The right-hand side of this expression is weakly positive if and only if
\[\left(z - 1\right)\ln{\left\{\frac{z+3}{z - 1}\right\}} - 1 - z \geq 0 \text{ , }\]
where $z \coloneqq 4\kappa \mu$, which never holds. Thus, $\ubar{p} \geq \mu/2 - 1/\left(8\kappa\right)$. Finally, we also have
\[\iota\left(p\right) > \iota\left(\frac{\left(1+4\kappa\mu\right)^2}{32\kappa}\right) = \frac{\left(1+4\kappa\mu\right)^2}{32\kappa}\ln{\left\{1 + \frac{1}{2\kappa\frac{\left(1+4\kappa\mu\right)^2}{32\kappa}}\right\}} + \frac{\left(1+4\kappa\mu\right)^2}{32\kappa} - \mu \text{ ,}\]
for all $p > \left(1+4\kappa\mu\right)^2/\left(32\kappa\right)$. The right-hand side of this expression is weakly positive if and only if \[\left(1 + z\right)^2\ln{\left\{\frac{(1+z)^2+16}{(1+z)^2}\right\}} + \left(1 + z\right)^2 - 8 z \geq 0 \text{ , }\]
where $z \coloneqq 4\kappa \mu$. This clearly holds. Consequently, $\ubar{p} < \left(1+4\kappa\mu\right)^2/\left(32\kappa\right)$.
\end{proof}

\subsection{Proposition \ref{oligopco} Proof}

\begin{proof}
First, we establish an auxiliary result.
 \begin{claim}\label{highmix}
 If $\mu \geq \ubar{p} + \sqrt{c/\kappa}$, then  $\ubar{p} > \sqrt{c/\kappa} - 1/\left(4\kappa\right)$.
 \end{claim}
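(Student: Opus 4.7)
The plan is to use the monotonicity of the auxiliary function $\varphi(\ubar{p}) = \ubar{p}\ln\{1 + 1/(2\kappa\ubar{p})\} - \sqrt{c/\kappa}$, which was already shown to be strictly increasing on $(0,\infty)$ in the proof of Proposition \ref{costathid1} (Claim \ref{priceexistence}), with unique positive root equal to the equilibrium $\ubar{p}$. Consequently, establishing $\ubar{p} > \sqrt{c/\kappa} - 1/(4\kappa)$ is equivalent to showing $\varphi(\sqrt{c/\kappa} - 1/(4\kappa)) < 0$ whenever the right-hand side is positive; when $\sqrt{c/\kappa} - 1/(4\kappa) \leq 0$ (i.e., $c \leq 1/(16\kappa)$) the inequality is immediate from $\ubar{p} > 0$.

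I would first handle the trivial case and then, for the non-trivial case $c > 1/(16\kappa)$, substitute $s := 2\sqrt{c\kappa}$, so that $s \in (1/2,1)$ since $c < 1/(4\kappa)$ by the standing assumption. A direct calculation using $2\kappa(\sqrt{c/\kappa} - 1/(4\kappa)) = 2\sqrt{c\kappa} - 1/2 = s - 1/2$ reduces
\[
\varphi\bigl(\sqrt{c/\kappa} - 1/(4\kappa)\bigr) = \frac{1}{4\kappa}\left[(2s-1)\ln\!\left(\frac{2s+1}{2s-1}\right) - 2s\right].
\]
Thus the whole claim collapses to the elementary inequality $(2s-1)\ln\{(2s+1)/(2s-1)\} < 2s$ for $s \in (1/2,1)$. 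Setting $t := 2s-1 \in (0,1)$, this is equivalent to $t \ln(1 + 2/t) < t + 1$, or, after dividing by $t$, to $\ln(1 + 2/t) < 1 + 1/t$.

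The last inequality is the crux; I would prove it via the standard series bound $e^x > 1 + x + x^2/2$ for $x > 0$, applied at $x = 1 + 1/t$:
\[
e^{1 + 1/t} > 1 + \bigl(1 + \tfrac{1}{t}\bigr) + \tfrac{1}{2}\bigl(1 + \tfrac{1}{t}\bigr)^2 = \tfrac{3}{2} + \tfrac{2}{t} + \tfrac{1}{2t^2} > 1 + \tfrac{2}{t}.
\]
Taking logarithms yields $\ln(1 + 2/t) < 1 + 1/t$, and reversing the substitutions delivers $\varphi(\sqrt{c/\kappa} - 1/(4\kappa)) < 0$, hence $\ubar{p} > \sqrt{c/\kappa} - 1/(4\kappa)$ by the monotonicity of $\varphi$.

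The main obstacle is isolating a clean one-line bound on $\ln$ that closes the estimate: naive choices like $\ln(1+x) < x$ give $t \ln(1 + 2/t) < 2$, which is too weak since we need the bound $t + 1 < 2$ to fail when $t$ is close to $1$. The quadratic truncation of the exponential series is sharp enough to push through for all $t \in (0,1)$ and, indeed, for all $t > 0$, which is why the inequality is comfortable rather than tight. The hypothesis $\mu \geq \ubar{p} + \sqrt{c/\kappa}$ does not enter the proof directly; it is present merely to signal that we are in the active-search branch, where $\ubar{p}$ is characterized by $\varphi(\ubar{p}) = 0$.
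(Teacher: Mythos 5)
Your proof is correct and follows essentially the same route as the paper's: reduce via the monotonicity of $\varphi$ to showing $\varphi\bigl(\sqrt{c/\kappa}-1/(4\kappa)\bigr)<0$, which after substitution is exactly the paper's inequality $(x-1)\ln\{(x+1)/(x-1)\}-x<0$ for $x=4\sqrt{c\kappa}\in(1,2)$ (your $t$ equals $x-1$); you merely make explicit the elementary step the paper dismisses as ``easy to see,'' using $e^x>1+x+x^2/2$. One harmless arithmetic slip: the sum $1+(1+\tfrac{1}{t})+\tfrac12(1+\tfrac{1}{t})^2$ equals $\tfrac52+\tfrac{2}{t}+\tfrac{1}{2t^2}$, not $\tfrac32+\tfrac{2}{t}+\tfrac{1}{2t^2}$, but the bound $>1+\tfrac{2}{t}$ still holds either way.
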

 \begin{proof}
 If $c \leq 1/\left(16\kappa\right)$, the result is immediate, so let $c > 1/\left(16\kappa\right)$. Since $\varphi$ is strictly increasing in $\ubar{p}$ for all $\ubar{p} > 0$, we have
 \[\varphi\left(\ubar{p}\right) \leq \varphi\left(\sqrt{\frac{c}{\kappa}} - \frac{1}{4\kappa}\right) = \left(\sqrt{\frac{c}{\kappa}} - \frac{1}{4\kappa}\right)\ln{\left\{\frac{4\sqrt{c\kappa}+1}{4\sqrt{c\kappa}-1}\right\}} - \sqrt{\frac{c}{\kappa}} \text{ ,}\]
 for all $\ubar{p} \leq \mu - \sqrt{c/\kappa}$. The right hand side of this expression is weakly positive if and only if
 \[f\left(x\right) \coloneqq \left(x-1\right)\ln{\left\{\frac{x+1}{x-1}\right\}} - x \geq 0 \text{ ,}\]
 where $x \coloneqq 4\sqrt{c\kappa}$. Note that $x$ takes values in the interval $\left(1,2\right)$, and it is easy to see that $f < 0$ for all such $x$. Therefore,  $\ubar{p} > \sqrt{c/\kappa} - 1/\left(4\kappa\right)$.
 \end{proof}
 Second, we establish the result for high $\mu$.
 \begin{claim}
 If $\mu \geq \ubar{p} + \sqrt{c/\kappa}$, then  $\Phi_{l} > \hat{\Phi}$.
 \end{claim}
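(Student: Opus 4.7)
The plan is to reduce the inequality $\Phi_l^{*} > \hat{\Phi}$ to the conclusion of the auxiliary Claim \ref{highmix} by direct algebraic manipulation. Recall from Section \ref{infinite} that in the active search region $\Phi_l^{*} = \mu + \tfrac{1}{4\kappa} + c - 2\sqrt{c/\kappa}$, and from Section \ref{hidden} that in the active search region with hidden prices $\hat{\Phi} = \mu - \ubar{p} + c - \sqrt{c/\kappa}$.

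Subtracting $\hat{\Phi}$ from $\Phi_l^{*}$ and cancelling the common $\mu$ and $c$ terms, the inequality $\Phi_l^{*} > \hat{\Phi}$ becomes equivalent to
\[
\ubar{p} > \sqrt{\tfrac{c}{\kappa}} - \tfrac{1}{4\kappa}.
\]
But this is precisely the conclusion of Claim \ref{highmix}, which was just established under the hypothesis $\mu \geq \ubar{p} + \sqrt{c/\kappa}$. So the proof reduces to a one-line invocation of the preceding claim.

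There is no real obstacle here: the only work is the substitution and cancellation, and all of the technical content (ruling out $\ubar{p} \leq \sqrt{c/\kappa} - 1/(4\kappa)$ via the auxiliary function $f(x) = (x-1)\ln\{(x+1)/(x-1)\} - x$ on $x \in (1,2)$) has already been done in Claim \ref{highmix}. Thus the entire proof is a two-step computation: first write out the explicit expressions for $\Phi_l^{*}$ and $\hat{\Phi}$, then observe that their difference is $\ubar{p} + \tfrac{1}{4\kappa} - \sqrt{c/\kappa}$, which is strictly positive by Claim \ref{highmix}.
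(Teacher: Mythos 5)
Your proposal is correct and follows exactly the paper's own argument: substitute the explicit expressions for $\Phi_{l}$ and $\hat{\Phi}$, cancel the common terms, and observe that the resulting inequality $\ubar{p} > \sqrt{c/\kappa} - 1/\left(4\kappa\right)$ is precisely the conclusion of Claim \ref{highmix}. There is nothing to add.
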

 \begin{proof}
 \[\Phi_{l} > \hat{\Phi} \quad \Leftrightarrow \quad \mu + \frac{1}{4\kappa} + c - 2 \sqrt{\frac{c}{\kappa}} > \mu - \ubar{p} + c - \sqrt{\frac{c}{\kappa}} \quad \Leftrightarrow \quad \ubar{p} > \sqrt{\frac{c}{\kappa}} - \frac{1}{4\kappa} \text{ ,}\]
which follows from Claim \ref{highmix}.
 \end{proof}
 Third, we establish the result for moderate $\mu$.
 \begin{claim}
If $\ubar{p} + \sqrt{c/\kappa} > \mu \geq 2 \sqrt{c/\kappa} - 1/\left(4\kappa\right)$, then $\Phi_{l} > \hat{\Phi}_{M}$.
 \end{claim}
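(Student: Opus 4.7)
The plan is to sandwich $\hat{\Phi}_M$ between the observable-prices monopoly consumer payoff $\Phi_M^*$ (with $a=0$) and then show that in the active-search range, $\Phi_l \geq \Phi_M^*$. Proposition \ref{monopcompare} (as established in Appendix \ref{monopcompareproof}) already delivers the strict inequality $\hat{\Phi}_M < \Phi_M^*$ whenever $\mu > 0$: in the moderate-$\mu$ regime the appendix shows $\ubar{p}_M > \mu/2 - 1/(8\kappa)$, which gives $\mu - \ubar{p}_M < \mu/2 + 1/(8\kappa)$ and hence $\kappa(\mu-\ubar{p}_M)^2 < (1+4\kappa\mu)^2/(64\kappa) = \Phi_M^*$; and in the high-$\mu$ regime the strict bound $\ubar{p}_M > \mu - 1/(2\kappa)$ gives $\kappa(\mu - \ubar{p}_M)^2 < 1/(4\kappa) = \Phi_M^*$. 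For $\mu \leq 0$, Lemma \ref{lemmanopure} yields $\hat{\Phi}_M = 0$, while $\Phi_l \geq c > 0$ in the active-search range (since $\mu \geq 2\sqrt{c/\kappa} - 1/(4\kappa)$ plugged into $\Phi_l = \mu + 1/(4\kappa) + c - 2\sqrt{c/\kappa}$ gives $\Phi_l \geq c$), so the claim is immediate there.

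With this reduction in hand, it remains to show $\Phi_l \geq \Phi_M^*$ on $\{\mu \geq 2\sqrt{c/\kappa} - 1/(4\kappa)\}\cap\{\mu > 0\}$, which I split into two subcases. First, if $\mu \geq 3/(4\kappa)$ then $\Phi_M^* = 1/(4\kappa)$, so the inequality reduces to $\mu + c - 2\sqrt{c/\kappa} \geq 0$. Substituting $y := \sqrt{c\kappa} \in (0, 1/2)$ (since $c < 1/(4\kappa)$) and using $\mu \geq 3/(4\kappa)$ turns this into $(y - 1/2)(y - 3/2) \geq 0$, which obviously holds.

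Second, in the moderate-$\mu$ range $2\sqrt{c/\kappa} - 1/(4\kappa) \leq \mu \leq 3/(4\kappa)$, we have $\Phi_M^* = (1 + 4\kappa\mu)^2/(64\kappa)$. After clearing a factor of $64\kappa$ and making the substitutions $t := 4\kappa\mu$ and $s := 4\sqrt{c\kappa}$ (so that $s \in (0,2)$ and $t \in [2s-1, 3]$), the desired inequality becomes
\[
g(t,s) := -t^2 + 14t + 15 + 4s^2 - 32s \geq 0.
\]
A direct computation shows $g(2s - 1, s) = 0$, and since $\partial_t g = -2t + 14 > 0$ for $t \leq 3$, $g$ is strictly increasing in $t$ over the relevant interval. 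Therefore $g(t,s) \geq 0$ throughout, i.e., $\Phi_l \geq \Phi_M^*$.

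Chaining the two steps yields $\Phi_l \geq \Phi_M^* > \hat{\Phi}_M$ and hence the strict inequality. The main obstacle is the algebra in Step 3, but the key observation that $g$ vanishes exactly on the active-search boundary $t = 2s - 1$ is not accidental—at that boundary the observable-prices search equilibrium collapses into the single-firm monopolistic case, so the comparison must hold with equality there and strict elsewhere.
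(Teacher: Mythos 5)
Your proof is correct, but it takes a genuinely different and considerably longer route than the paper's. The paper's argument is a two-link chain through the constant $c$: the active-search condition $\mu \geq 2\sqrt{c/\kappa} - 1/(4\kappa)$ gives $\Phi_l = \mu + \tfrac{1}{4\kappa} + c - 2\sqrt{c/\kappa} \geq c$, and the no-active-search-with-hidden-prices condition $\mu - \ubar{p} < \sqrt{c/\kappa}$ gives $c > \kappa(\mu - \ubar{p})^2 = \hat{\Phi}_M$ directly. You instead chain through $\Phi_M^*$, invoking Proposition \ref{monopcompare} for $\hat{\Phi}_M < \Phi_M^*$ and then proving the new fact that $\Phi_l \geq \Phi_M^*$ throughout the active-search region; your algebra in both subcases checks out (the substitution $t = 4\kappa\mu$, $s = 4\sqrt{c\kappa}$ does yield $g(t,s) = -t^2 + 14t + 15 + 4s^2 - 32s$ with $g(2s-1,s)=0$ and $\partial_t g > 0$), and your observation that equality occurs exactly at the boundary $\mu = 2\sqrt{c/\kappa} - 1/(4\kappa)$, where $\Phi_l = \Phi_m^* = c = \Phi_M^*$, is a genuine structural explanation rather than a coincidence. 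Two remarks: first, in the moderate-$\mu$ step you should note that $\ubar{p}_M \in (0,\mu)$ so that $\mu - \ubar{p}_M > 0$ before squaring the bound $\mu - \ubar{p}_M < \mu/2 + 1/(8\kappa)$ (this is established in the proof of Lemma \ref{monopprice}, so it is a presentational gap, not a substantive one); second, your argument never actually uses the hypothesis $\ubar{p} + \sqrt{c/\kappa} > \mu$, so you have in fact proved the stronger statement that $\Phi_l$ exceeds the hidden-price monopoly payoff whenever there is active search under observable prices. What the paper's route buys is brevity; what yours buys is the reusable intermediate inequality $\Phi_l \geq \Phi_M^*$ and independence from half the hypothesis.
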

 \begin{proof}
 Using the conditions of the claim,
 \[\Phi = \mu + \frac{1}{4\kappa} + c - 2 \sqrt{\frac{c}{\kappa}} \geq c > \kappa \left(\mu - \ubar{p}\right)^2 = \hat{\Phi}_{M} \text{ .}\]
 \end{proof}
 Finally, if $\ubar{p} + \sqrt{c/\kappa} > \mu$ and $2 \sqrt{c/\kappa} - 1/\left(4\kappa\right) > \mu$, the result is derived in the proof of Proposition \ref{monopcompare}.
\end{proof}

\subsection{Proposition \ref{olicopco2} Proof}

\begin{proof}
First, we establish the result for high $\mu$.
 \begin{claim}
 If $\mu \geq \ubar{p} + \sqrt{c/\kappa}$, then  $\ubar{p} > 2c$.
 \end{claim}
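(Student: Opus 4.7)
My plan is to translate the desired bound $\ubar{p} > 2c$ into an inequality involving the defining equation for $\ubar{p}$ and then reduce it to a one-variable real-analysis estimate.

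Recall that (under the standing assumption $c < 1/(4\kappa)$, and in particular when the active-search hypothesis $\mu \geq \ubar{p} + \sqrt{c/\kappa}$ holds), $\ubar{p}$ is the unique strictly positive root of
\[
\varphi(p) \;=\; p\,\ln\!\left\{1 + \frac{1}{2\kappa p}\right\} - \sqrt{\tfrac{c}{\kappa}},
\]
and $\varphi$ is strictly increasing on $(0,\infty)$ by Claim \ref{priceexistence}. Consequently the conclusion $\ubar{p} > 2c$ is equivalent to the single inequality $\varphi(2c) < 0$. I would therefore evaluate
\[
\varphi(2c) \;=\; 2c\,\ln\!\left\{1 + \frac{1}{4c\kappa}\right\} - \sqrt{\tfrac{c}{\kappa}},
\]
divide through by $\sqrt{c/\kappa}$, and substitute $v := 1/(4c\kappa)$. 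Because $c < 1/(4\kappa)$, we have $v > 1$, and a short algebraic rearrangement (multiplying by $\sqrt{v}$) turns $\varphi(2c) < 0$ into the clean inequality $\ln(1+v) < \sqrt{v}$ for $v > 1$.

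The remaining task is to verify this real-analytic inequality. Let $h(v) := \sqrt{v} - \ln(1+v)$. Then $h(1) = 1 - \ln 2 > 0$ and
\[
h'(v) \;=\; \frac{1}{2\sqrt{v}} - \frac{1}{1+v} \;=\; \frac{(\sqrt{v}-1)^2}{2\sqrt{v}\,(1+v)} \;\geq\; 0,
\]
with strict inequality for $v \neq 1$. Hence $h$ is strictly increasing on $(1,\infty)$, so $h(v) > h(1) > 0$ for all $v > 1$, which yields $\varphi(2c) < 0$ and therefore $\ubar{p} > 2c$.

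The only mildly subtle step is spotting the substitution $v = 1/(4c\kappa)$ that uses the standing parameter restriction to collapse the bound to a one-variable inequality; everything after that is routine calculus.
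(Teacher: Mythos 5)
Your proof is correct and follows essentially the same route as the paper's: both reduce the claim to showing $\varphi(2c)<0$ via the strict monotonicity of $\varphi$ on $(0,\infty)$, and then collapse that to a one-variable inequality (the paper uses $x=4\kappa c\in(0,1)$ where you use $v=1/x>1$). The only differences are cosmetic: you actually verify the resulting inequality $\ln(1+v)<\sqrt{v}$ by calculus where the paper simply asserts it, and you omit the paper's preliminary remark (via its earlier claim) that active search under hidden prices implies active search under observable prices, which is needed for the surrounding proposition but not for the literal statement $\ubar{p}>2c$.
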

 \begin{proof}
 By Claim \ref{highmix},if $\mu \geq \ubar{p} + \sqrt{c/\kappa}$, then  $\ubar{p} > \sqrt{c/\kappa} - 1/\left(4\kappa\right)$, i.e., if there is active search when prices are hidden, there must be active search when prices are observable. Since $\varphi$ is strictly increasing in $\ubar{p}$ for all $\ubar{p} > 0$, we have
 \[\varphi\left(\ubar{p}\right) \leq \varphi\left(2 c\right) = 2 c\ln{\left\{\frac{4\kappa c+1}{4\kappa c}\right\}} - \sqrt{\frac{c}{\kappa}} \text{ ,}\]
 for all $\ubar{p} \leq 2c$. The right hand side of this expression is negative if and only if
 \[x \ln{\left\{\frac{x+1}{x}\right\}} - \sqrt{x} < 0 \text{ ,}\]
 where $x \coloneqq 4 \kappa c$. This clearly holds for all $x \in \left(0,1\right)$.
 \end{proof}
 Thus, if the mean is sufficiently high so that there is active search when prices are hidden, firms prefer that prices be hidden. Second, we establish that for medium $\mu$, firms prefer hidden before learning if and only if $\mu$ is sufficiently high.
 \begin{lemma}
 Let $\mu > 0$. There exist $\kappa$, $\mu$ and $c$ that generate $\ubar{p} + \sqrt{c/\kappa} > \mu \geq 2 \sqrt{c/\kappa} - 1/\left(4\kappa\right)$, such that $\ubar{p} > 2c$, $\ubar{p} = 2c$, or $\ubar{p} < 2c$.
 \end{lemma}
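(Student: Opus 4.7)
The plan is to fix $(c,\kappa)$ (with $c < 1/(4\kappa)$) and treat the hidden-prices monopoly price lower bound $\ubar{p}_M(\mu)$ as a continuous function of $\mu$ on the intermediate interval $I(c,\kappa) \coloneqq [\mu_L,\mu_U)$, where $\mu_L \coloneqq 2\sqrt{c/\kappa} - 1/(4\kappa)$ and $\mu_U \coloneqq \ubar{p} + \sqrt{c/\kappa}$. Throughout $I$, the hidden-prices equilibrium is the monopoly one from Theorem \ref{monopvalue} with $a = 0$, so Lemma \ref{monopprice} supplies continuous (indeed strictly increasing) dependence of $\ubar{p}_M$ on $\mu$. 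Claim \ref{highmix} gives $\mu_U > \mu_L$ (it is really a statement about $\ubar{p}$ alone), and restricting to $c\kappa \in (1/64,\,1/4)$ ensures $\mu_L > 0$ so that the standing hypothesis $\mu > 0$ holds throughout $I$.

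With that setup in hand, the argument is an intermediate-value sandwich. At the lower endpoint, a direct substitution gives the pleasing identity $\Pi_M^*(\mu_L) = (1+4\kappa\mu_L)^2/(32\kappa) = 2c$ (from Expression \ref{exp8} and the definition of $\mu_L$), and Proposition \ref{monopcompare}---in its strict form for intermediate priors---then yields $\ubar{p}_M(\mu_L) < \Pi_M^*(\mu_L) = 2c$. At the upper endpoint, substituting $\mu = \mu_U = \ubar{p} + \sqrt{c/\kappa}$ into the equation defining $\ubar{p}_M$ collapses it precisely to $\varphi(\ubar{p}_M) = 0$; Claim \ref{priceexistence} then forces $\ubar{p}_M(\mu_U) = \ubar{p}$, and the opening claim in the proof of Proposition \ref{olicopco2} has already shown $\ubar{p} > 2c$. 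Continuity and the intermediate value theorem then deliver $\mu^-,\mu^*,\mu^+ \in I(c,\kappa)$ with $\ubar{p}_M(\mu^-) < 2c$, $\ubar{p}_M(\mu^*) = 2c$, and $\ubar{p}_M(\mu^+) > 2c$, which is exactly the trichotomy the lemma requires.

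I expect the main obstacle to be confirming strictness at the lower endpoint. The proof of Proposition \ref{monopcompare} in Appendix \ref{monopcompareproof} reduces the relevant comparison to the auxiliary inequality $(1+z)^2\ln\{((1+z)^2+16)/(1+z)^2\} + (1+z)^2 - 8z > 0$, which one must check is strict at $z = 4\kappa\mu_L = 8\sqrt{c\kappa} - 1 \in (0,\,3)$. This is a routine check---evaluating at representative points and exploiting monotonicity of the logarithm---but it is the single substantive calculation beyond the qualitative architecture of Lemma \ref{monopprice}, Claim \ref{highmix}, and Claim \ref{priceexistence}, all of which combine to make the intermediate-value sandwich go through cleanly.
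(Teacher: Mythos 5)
Your proof is correct, but it reaches the conclusion by a genuinely different route than the paper. The paper's own argument is constructive: it posits $\ubar{p}_M = 2c$, reads off from $\iota(2c)=0$ the unique $\mu = 2c\ln\left\{(4\kappa c+1)/(4\kappa c)\right\} + 2c$ at which this holds, and then verifies by a single one-variable inequality in $x \coloneqq 4\kappa c \in (0,1)$ that this $\mu$ lies strictly inside the interval $\left[2\sqrt{c/\kappa}-1/(4\kappa),\, \ubar{p}+\sqrt{c/\kappa}\right)$; strict monotonicity of $\iota$ in $\mu$ then delivers the two strict cases by perturbation. You instead run an intermediate-value sandwich on $\ubar{p}_M(\mu)$ over that same interval, which requires evaluating at both endpoints: the identity $\Pi_M^*(\mu_L) = 2c$ combined with the strict form of Proposition \ref{monopcompare} at the bottom, and the boundary-matching observation $\ubar{p}_M(\mu_U) = \ubar{p} > 2c$ (via Claim \ref{priceexistence} and the opening claim of the proof of Proposition \ref{olicopco2}) at the top. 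Both endpoint evaluations check out -- in particular your observation that $\iota$ evaluated at $\mu = \ubar{p} + \sqrt{c/\kappa}$ collapses to $\varphi$, and that Claim \ref{highmix} and the claim that $\ubar{p} > 2c$ are really unconditional statements about $\ubar{p}$, are both correct. What the paper's version buys is self-containedness and an explicit formula for the crossing $\mu$, valid for all admissible $(c,\kappa)$ rather than only $c\kappa \in (1/64, 1/4)$; what yours buys is structural transparency -- it exhibits the hidden-price profit moving continuously from strictly below the observable-price profit $2c$ at the lower regime boundary up to the competitive value $\ubar{p} > 2c$ at the upper one, which is exactly the economic content of Proposition \ref{olicopco2}. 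You are also right to flag that your lower-endpoint step leans on strictness of the intermediate-prior inequality in Proposition \ref{monopcompare}, which the paper asserts but does not dwell on; that is the one computation your route genuinely needs and the paper's route avoids.
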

  \begin{proof}
  Note, that the condition for this result is that there is active search if and only if prices are observable. Since $\iota$ is strictly decreasing in $\mu$ it suffices to show that there exist values of the parameters such that $\ubar{p} = 2c$, since at the two endpoints ($\ubar{p} + \sqrt{c/\kappa}$ and $2\sqrt{c/\kappa} - 1/\left(4\kappa\right)$) firms strictly prefer active or hidden search, respectively. Directly, we have
 \[\iota\left(2 c\right) = 2c \ln{\left\{\frac{4\kappa c+1}{4\kappa c}\right\}} + 2 c - \mu \text{ .}\]
 This must equal $0$ at equilibrium, i.e., 
 \[\mu = 2c \ln{\left\{\frac{4\kappa c+1}{4\kappa c}\right\}} + 2 c\text{ .}\]
 Consequently, we must have
 \[2 c + \sqrt{\frac{c}{\kappa}} > 2c \ln{\left\{\frac{4\kappa c+1}{4\kappa c}\right\}} + 2 c \geq 2 \sqrt{\frac{c}{\kappa}} - \frac{1}{4\kappa} \text{ ,}\]
 which holds if and only if
 \[\sqrt{x} > x\ln\left\{\frac{x+1}{x}\right\} \geq 2\sqrt{x} - x-\frac{1}{2} \text{ ,}\]
 where $x \coloneqq 4 \kappa c$. This holds for all $x \in \left(0,1\right)$. 
 \end{proof}
 Third, if $\ubar{p} + \sqrt{c/\kappa} > \mu$ and $2 \sqrt{c/\kappa} - 1/\left(4\kappa\right) > \mu$, then the proof of Proposition \ref{monopcompare} implies that $\ubar{p} < \left(1+4\kappa\mu\right)^2/\left(32\kappa\right)$. That is, if there is no active search for either timing regime, firms prefer that prices be observable.
 
 Finally, if $0 \geq \mu > -1/\left(4\kappa\right)$, the firms obviously prefer that prices are not hidden since otherwise there is no trade.
\end{proof}

\end{document}